\title{\vspace{-2.3cm}Two-Stage Robust Sparse Gradient Methods for Regression Under Heavy-Tailed Designs}
\author{Kaiyuan Zhou$^\dagger$, Xiaoyu Zhang$^\ddagger$\footnote{Kaiyuan Zhou and Xiaoyu Zhang contribute equally and are joint first authors.}, Wenyang Zhang$^\S$, and Di Wang$^\dagger$ \\ \small{$\dagger$ Shanghai Jiao Tong University, $\ddagger$ Tongji University, $\S$ University of Macau}}
\date{}
\let\counterwithin\relax
\newcommand*{\addFileDependency}[1]{
  \typeout{(#1)}
  \@addtofilelist{#1}
  \IfFileExists{#1}{}{\typeout{No file #1.}}
}
\newtheorem{assumption}{Assumption}
\newtheorem{definition}{Definition}
\newtheorem{lemma}{Lemma}
\newtheorem{proposition}{Proposition}
\newtheorem{theorem}{Theorem}
\newtheorem{remark}{Remark}
\theoremstyle{definition}
\newcommand{\bm}{\mathbf}
\newcommand{\bbm}{\boldsymbol}
\DeclareFontFamily{U}{mathx}{}
\DeclareFontShape{U}{mathx}{m}{n}{<-> mathx10}{}
\DeclareSymbolFont{mathx}{U}{mathx}{m}{n}
\DeclareMathAccent{\widecheck}{0}{mathx}{"71}
\begin{document}

\setlength{\parindent}{16pt}

\maketitle

\vspace{-1.1cm}
\begin{abstract}
We study high-dimensional sparse regression under simultaneous heavy-tailed covariates and noise. Heavy-tailed data affect sparse optimization in two different ways: extreme covariates can destabilize the gradient field during global localization, while heavy-tailed noise limits the final statistical accuracy during local refinement. Motivated by this two-phase structure, we propose two-stage RIGHT, a robust sparse first-order method based on coordinate-wise median-of-means (MoM) gradient estimation and delayed sample splitting. The MoM gradient estimator is computationally simple, compatible with hard-thresholded updates, and admits phase-adaptive concentration bounds whose rates depend on the current localization radius. Delayed splitting reuses data during global localization and reserves fresh batches for the shorter refinement stage, reducing the sample-splitting cost. The theoretical results reveal a decoupled rate structure: the design-tail index controls gradient stability and sample complexity, whereas the noise-tail index controls the final statistical rate. We also provide phase-wise lower-bound benchmarks showing that the design-driven localization barrier is intrinsic. Extensive simulation experiments and real data analysis showcase the efficacy of the proposed method over existing competitors.
\end{abstract}

\vspace{-0.3cm}
\textit{Keywords}: High-dimensional regression, heavy-tailed data, median-of-means, delayed sample splitting, two-phase optimization.

\newpage

\setlength\abovedisplayskip{2.5pt}
\setlength\belowdisplayskip{2.5pt}

\section{Introduction}\label{sec:introduction}

\vspace{-8pt}
\subsection{Motivation}

High-dimensional sparse regression aims to recover a structured parameter when the ambient dimension \(p\) may exceed the sample size \(n\). Classical methods, including the Lasso \citep{bickel2009simultaneous, negahban2012unified} and Iterative Hard Thresholding \citep[IHT,][]{jain2014iterative}, are well understood under light-tailed assumptions. In many modern applications, however, both design covariates and responses may be heavy-tailed. Sampling from such distributions can generate extreme observations that are far from typical samples. We focus on robustness to heavy-tailed sampling rather than arbitrary contamination; see \citet{loh2025theoretical} for a broader discussion of robustness in high-dimensional regression.

Heavy-tailed noise and heavy-tailed designs create different statistical difficulties. Robust loss functions, such as Huber-type losses \citep{fan2017estimation, sun2020adaptive}, are effective tools for reducing the influence of extreme residuals. However, such residual-side robustification does not directly control design tails, and a few extreme covariate vectors can still distort empirical first-order information and degrade sparse estimation. Design-side truncation and shrinkage methods \citep{fan2021shrinkage,zhu2021taming} directly mitigate extreme covariates, but many existing analyses impose fourth-moment-type or stronger conditions. This raises a basic question: \textit{how can one perform sparse regression when the design may have infinite fourth moments and the noise may have infinite second moments?}

To understand the joint impact of heavy-tailed designs and noise, we take a gradient, or score-based, perspective. Gradients are the basic objects used by first-order algorithms, while score equations determine the statistical target and the fluctuations of empirical scores determine estimation accuracy. Thus, under heavy-tailed data, the central issue is how to estimate the population gradient reliably along the iterative path. This motivates gradient-level robustification: rather than robustifying residuals or preprocessing covariates, we directly robustify the first-order information used by sparse optimization algorithms.

This gradient perspective reveals a two-phase structure. The phenomenon is most transparent in sparse linear regression. For the linear model
\(y_i=\bm x_i^\top\bbm\theta^*+\epsilon_i\), the per-observation score at $\bbm\theta$ admits the decomposition $\nabla_{\bbm\theta}\mathcal L(\bbm\theta;Z_i)=\bm x_i\bm x_i^\top(\bbm\theta-\bbm\theta^*)-\epsilon_i\bm x_i$. When the iterate is far from \(\bbm\theta^*\), the design-driven term \(\bm x_i\bm x_i^\top(\bbm\theta-\bbm\theta^*)\) controls whether the estimated gradient preserves a reliable descent direction. Its tail behavior is governed by quadratic covariate products and is therefore especially sensitive to heavy-tailed designs. Once the iterate is close to \(\bbm\theta^*\), this term is attenuated, while the noise-driven score \(\epsilon_i\bm x_i\) remains present and determines the stochastic floor. Thus the same iterative procedure faces two distinct tasks: global localization and local refinement. Importantly, the required statistical control of the gradient changes across these tasks: the global phase is governed by design-driven gradient stability, whereas the local phase is governed by noise-driven score fluctuations.

Another challenge is data dependence. Robustified gradients need not arise as gradients of a fixed empirical loss, so standard oracle inequalities for global empirical minimizers do not directly apply. Instead, one must track the realized iterates of the algorithm \citep{liu2019high,liu2020high,prasad2020robust}. When the same data are used both to generate the iterates and to evaluate robust gradients, the resulting dependence complicates the analysis. Fresh sample splitting at every iteration restores conditional independence \citep{balakrishnan2017statistical, prasad2020robust,liu2019high}, but it reduces the effective sample size from \(n\) to roughly \(n/T\), where \(T\asymp \log n\) is the typical number of iterations. These considerations motivate us to develop a robust sparse first-order framework that combines \textit{phase-adaptive gradient estimation} with a \textit{sample-efficient data-access strategy}.

\vspace{-10pt}
\subsection{Proposed Framework and Contributions}
We develop TS-RIGHT, a robust sparse first-order framework for heavy-tailed regressions. The framework combines two ingredients: the coordinate-wise median-of-means (MoM) gradient estimator and the delayed sample-splitting strategy. The former provides robust first-order information suited to sparse hard-thresholded updates, while the latter controls data dependence without splitting the sample across the full optimization path.
 
Our first contribution is the coordinate-wise MoM gradient estimator tailored to sparse iterative updates. Although robust gradient estimation has been adopted in prior work, the coordinate-wise MoM construction is computationally simple, naturally compatible with hard-thresholding, and admits a phase-adaptive concentration theory. Its stochastic error depends on the current localization radius: in the global regime, the rate is controlled by the tail behavior of design-driven gradient terms, whereas in the local regime, the rate is controlled by noise-driven score fluctuations. This phase-adaptive concentration is the key property that allows the same gradient estimator to support both localization and refinement.

Our second contribution is the two-stage robust gradient descent framework and its deterministic convergence analysis. The algorithm uses the MoM gradient estimator within an iterative hard-thresholded update and implements delayed sample splitting: the first stage reuses data to obtain global localization, while the second uses fresh batches only for the shorter local refinement path. This data access strategy improves sample efficiency by splitting the data only where it is most needed. We prove a deterministic two-stage convergence theorem under mild sparsity-restricted curvature and sparse robust gradient-stability conditions. This separates the optimization argument from model-specific concentration analysis.

Our third contribution is the verification of the framework for sparse linear and logistic regression under weak moment assumptions. For sparse linear regression, we allow finite restricted \((2+2\lambda)\)-moments of the design and finite \((1+\delta)\)-moments of the noise, for some $0<\lambda,\delta\leq 1$. The resulting rates have a decoupled structure: the design-tail index \(\lambda\) controls the localization barrier through gradient stability and sample complexity, while the noise-tail index \(\delta\) controls the local refinement floor and hence the final statistical rate. For sparse logistic regression, the bounded score multiplier removes the quadratic design-product difficulty that appears in linear regression, leading to milder moment requirements. The resulting rates are nearly optimal in their tail-dependent exponents, and we provide a phase-wise lower-bound interpretation, including a lower bound for the design-driven localization barrier.

\vspace{-10pt}
\subsection{Related Literature}
Our work intersects high-dimensional statistics, robust optimization, and heavy-tailed probability. We briefly survey related literature to highlight the distinctions of our framework.

\textbf{High-dimensional robust sparse regression.}
A major line for this problem is robust sparse M-estimation, which combines robust losses with sparsity regularization \citep{belloni2011,fan2017estimation,sun2020adaptive}; see also \citet{filzmoser2021robust} and \citet{loh2025theoretical} for broader reviews.
These methods are effective for heavy-tailed errors or contaminated responses, but their guarantees typically still rely on empirical curvature and design concentration, which can be violated under heavy-tailed covariates. A complementary line explicitly shrinks or truncates  covariates before fitting sparse models \citep{fan2021shrinkage,zhu2021taming}. 
In contrast, TS-RIGHT robustifies the first-order direction along an IHT trajectory, rather than robustifying only the scalar loss, preprocessing the design, or solving a global robust ERM problem.

\textbf{Robust Gradient Descent and Iterative Methods.}
Recent methods achieve robustness using robust gradient surrogates \citep{prasad2020robust,holland2019efficient}, but they are not designed for high-dimensional sparse settings. 
Median- or truncation-based gradient descents have been studied for nonconvex problems \citep{chi2019median,li2020non,zhang2024robust}, but they focus on low-rank or phase-retrieval structures rather than sparse vectors. The most related work is \citet{liu2019high} which combines robust gradient estimation with iterative hard thresholding for sparse regression under finite fourth moments of covariates and finite second moments of noise. Their analysis uses equal sample splitting, leading to an effective sample size of order \(n/\log n\) along the iterative path. While sharing the same idea of robust IHT, our algorithm introduces a delayed sample-splitting strategy and requires weaker moment conditions ($(2+2\lambda)$-th for covariates and $(1+\delta)$-th for noise, for some $0<\lambda,\delta\leq 1$). Together with our two-regime analysis, these innovations improve computational performance and achieve nearly optimal rates.

\textbf{Median-of-Means and Heavy-Tailed Estimation.}
The median-of-means (MoM) principle \citep{nemirovsky1983problem,jerrum1986random,alon1999space} is a standard tool for heavy-tailed estimation \citep{lugosi2019mean}.  In sparse regression, the closest MoM competitors are robust ERM procedures like regularized MoM tournaments and minmax-MoM estimators \citep{lugosi2019regularization,lecue2020robust}. 
These methods achieve strong guarantees, often under moment or small-ball conditions that differ from the weak-moment gradient-stability assumptions studied here. In contrast, our MoM gradient estimator is applied directly to first-order information and avoids solving a global robust ERM or tournament problem at each iteration. While \citet{minsker2015geometric} and \citet{hsu2016loss} also explore MoM for heavy-tailed regression, they still require bounded fixed designs or finite fourth moments on the covariates.

\textbf{Two-phase Phenomena in Regression Optimization.}
Two-phase behavior has appeared in several regression and statistical optimization problems \citep{fan2018lamm, pan2021iteratively}.
\citet{shen2025computationally} show that the nonsmooth loss is smoothed by heavy-tailed noise when the parameter is close to the truth, thus leading to a two-phase step-size rule.
Similar behavior also appears in online quantile regression \citep{shen2025online}.
Our underlying mechanism for the two-phase behavior differs from the existing literature, which primarily focuses on the loss behavior. In contrast, we highlight the phenomenon that the stochastic behavior of the gradient may change along the optimization path under heavy-tailed distributions. 

In summary, Table~\ref{tab:rw_robustification_strategies} compares TS-RIGHT with existing robustification strategies for heavy-tailed sparse regression, highlighting its relaxed moment conditions, broad applicability, and improved sample efficiency.
\begin{table}[!htp]
\centering
\renewcommand{\arraystretch}{1.22}
\setlength{\tabcolsep}{2pt}
\footnotesize
\begin{threeparttable}
\caption{Comparison of robustification strategies for heavy-tailed sparse models.\tnote{a}}
\label{tab:rw_robustification_strategies}
\begin{tabular}{@{}>{\raggedright\arraybackslash}p{0.15\linewidth}
                >{\raggedright\arraybackslash}p{0.16\linewidth}
                >{\centering\arraybackslash}p{0.14\linewidth}
                >{\centering\arraybackslash}p{0.22\linewidth}
                >{\centering\arraybackslash}p{0.16\linewidth}
                >{\centering\arraybackslash}p{0.12\linewidth}@{}}
\toprule
\textbf{Route} & \textbf{Methods} & \textbf{Model Applicability} & \textbf{Typical Design Moments} & \textbf{Typical Noise Moments\tnote{b}} & \textbf{Sample Efficiency} \\
\midrule
Loss robustification & Adaptive Huber regression & Linear & Fixed design or Sub-Gaussian & $(1+\delta)$-th & $n$ \\
\addlinespace[3pt]
Data preprocessing & Shrinkage / truncation & Linear; GLMs & Sub-Gaussian or $4$th & $2k$-th $(k>1)$ or $4$th & $n$ \\
\addlinespace[3pt]
\vspace{-2pt}\multirow{2}{=}{\raggedright Gradient robustification} & Robust GD & Linear; GLMs & $4$th & 2nd & $n/\log n$ \\
\addlinespace[2pt]
 & \textbf{TS-RIGHT (Ours)} & Linear; GLMs & $(2+2\lambda)$-th (linear); 2nd (logistic) & $(1+\delta)$-th & \textbf{$n/\log s$} \\
\bottomrule
\end{tabular}
\begin{tablenotes}
\scriptsize
\item[a] Entries summarize representative assumptions rather than exhaustive theorem statements. Specific comparison of rates and conditions is provided in Tables \ref{tab:comparison} and \ref{tab:logistic_comparison}.
\item[b] Noise moments are available for linear regression model.
\end{tablenotes}
\end{threeparttable}
\end{table}

\vspace{-10pt}
\subsection{Notation and Organization}
We adopt the following notational conventions throughout this article. For an integer $m$, write $[m]=\{1,\ldots,m\}$.  We denote scalars by lowercase letters (e.g., $a, b$), vectors by bold lowercase letters (e.g., $\bm{u}, \bm{v}$), and matrices by bold uppercase letters (e.g., $\bm{A}, \bbm{\Sigma}$). For a vector $\bm{v} \in \mathbb{R}^p$, $\bm{v}_i$ denotes its $i$-th entry. For a matrix $\bm{A} \in \mathbb{R}^{n \times p}$, $\bm{A}_{ij}$ denotes the entry in the $i$-th row and $j$-th column, and $\bm{A}_{i\cdot}$ denotes the $i$-th row as a column vector.
For a vector $\bm{v} \in \mathbb{R}^p$ and a subset of indices $S \subseteq [p]$, we denote by $\bm{v}_S$ the vector that retains the elements in $S$ and sets the others to zero. The support of a vector is defined as $\operatorname{supp}(\bm{v}) \coloneqq \{j \in [p] : \bm{v}_j \neq 0\}$. The $\ell_q$-norm of a vector is defined as $\|\bm{v}\|_q \coloneqq (\sum_{j=1}^p |\bm{v}_j|^q)^{1/q}$ for $q \ge 1$, with $\|\bm{v}\|_\infty \coloneqq \max_{j} |\bm{v}_j|$. The $\ell_0$-pseudo-norm, denoted $\|\bm{v}\|_0$, counts the number of non-zero entries in $\bm{v}$. For matrices $\bm{A} \in \mathbb{R}^{p \times m}$, $\|\bm{A}\|_2$ denotes the spectral norm. For two sequences of non-negative numbers $a_n$ and $b_n$, we write $a_n \lesssim b_n$ if there exists a universal constant $C > 0$, independent of $n$, $p$, and problem-specific parameters, such that $a_n \le C b_n$. We write $a_n \asymp b_n$ if $a_n \lesssim b_n$ and $b_n \lesssim a_n$. We use $C$, $c$ to denote generic positive constants that may change from line to line. 

The remainder of this article is organized as follows. Section \ref{sec:methodology} introduces the TS-RIGHT framework. Section \ref{sec:theory} presents the deterministic convergence theory based on sparse gradient curvature and robust gradient stability. Section \ref{sec:4} applies this framework to linear and logistic regression and establishes convergence rates under weak moment conditions. Section \ref{sec:lower_bound} provides phase-wise lower-bound benchmarks. Section \ref{sec:numerical_studies} presents numerical experiments and Section \ref{sec:conclusion} offers conclusions. All proofs are relegated to Supplementary Materials.

\vspace{-10pt}
\section{Methodology}
\label{sec:methodology}

\vspace{-10pt}
\subsection{Problem Setup and Score-Equation Target}
\label{subsec:score_setup}

Let $\mathcal D_n=\{Z_i=(\bm x_i,y_i)\}_{i=1}^n$ be independent and identically distributed observations from an unknown distribution on $\mathbb R^p\times\mathcal Y$, where $p$ may be much larger than $n$. For a differentiable loss $\mathcal L(\bbm\theta;Z)$, the usual population risk is $\mathcal R(\bbm\theta)\coloneqq \mathbb{E}[\mathcal L(\bbm\theta;Z)]$, whenever this expectation exists. Under heavy-tailed responses or covariates, however, the risk may be infinite even when the first-order moment of the score is well defined. We therefore write $G(\bbm\theta)\coloneqq \mathbb{E}[\nabla_{\bbm\theta}\mathcal L(\bbm\theta;Z)]$ for the population score and define the target $\bbm\theta^*$ as a sparse solution satisfying
\begin{equation}
    G(\bbm\theta^*)=0, \quad\text{and} \quad \|\bbm\theta^*\|_0\le s^* .
\label{eq:score_target}
\end{equation}
Throughout this article, $\bbm\theta^*$ is assumed to be the unique sparse solution of \eqref{eq:score_target} in the parameter class under consideration. When $\mathcal R(\bbm\theta)$ is finite and differentiable, this formulation agrees with the usual first-order condition for population risk minimization.

The score-equation framework encompasses many important high-dimensional models, including linear regression, logistic regression, generalized linear models, and other $M$-estimation problems, whenever the population score $G(\bbm\theta)$ is well defined. 

\vspace{-10pt}
\subsection{Two Gradient Regimes in Heavy-Tailed Linear Regression}
\label{subsec:two_gradient_regimes}
Although the algorithmic framework is formulated for general sparse score-equation models, we use sparse linear regression as the running example because its score decomposition makes the two-phase gradient behavior especially transparent. Specifically, we consider sparse linear regression with the least-squares loss:
\begin{equation}
    y_i=\bm x_i^\top\bbm\theta^*+\epsilon_i, \qquad
    \mathcal L(\bbm\theta;Z_i)=\frac{1}{2}(y_i-\bm x_i^\top\bbm\theta)^2,
    \label{eq:linear_regression_loss}
\end{equation}
Let $\Delta=\bbm\theta-\bbm\theta^*$. The per-observation score admits the decomposition
\begin{equation}
\nabla_{\bbm\theta}\mathcal L(\bbm\theta;Z_i)=\underbrace{\bm x_i\bm x_i^\top\Delta}_{\text{design-driven term}}-\underbrace{\epsilon_i\bm x_i}_{\text{noise-driven score term}} .
\label{eq:gradient_decomposition}
\end{equation}
This identity shows that the stochastic behavior of the score depends on the current distance to the target. When $\|\Delta\|_2$ is large, the score contains a substantial design-driven component involving the quadratic product $\bm x_i\bm x_i^\top$. When $\|\Delta\|_2$ is close to zero, this design-driven component is attenuated, and the remaining stochastic fluctuation is mainly governed by the local score noise $\epsilon_i\bm x_i$.

\begin{figure}[!htp]
\centering
\includegraphics[width=0.85\textwidth]{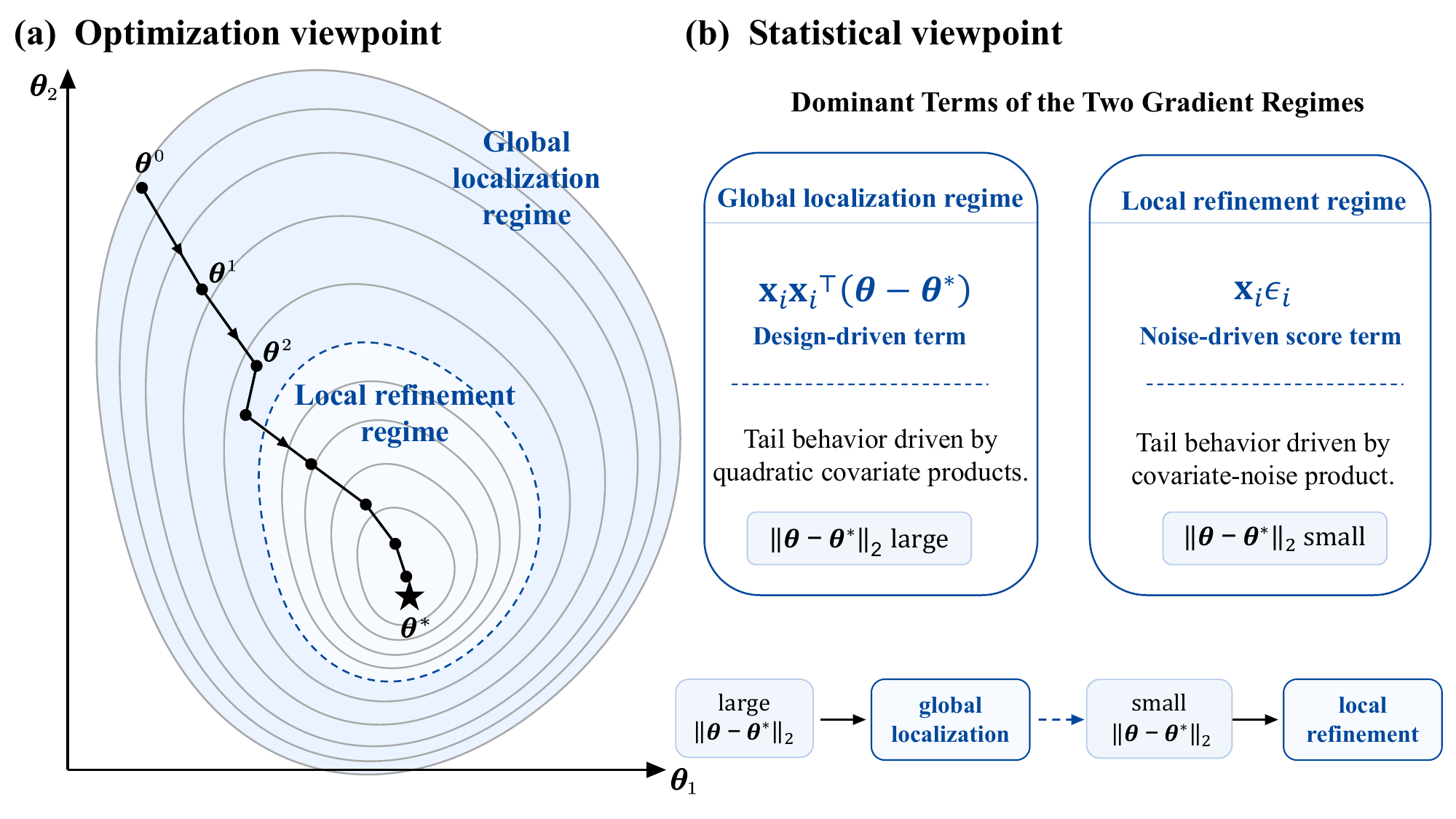}\vspace{-0.5cm}
\caption{Two gradient regimes in heavy-tailed sparse linear regression.}
\label{fig:two_regime}
\end{figure}

From the viewpoint of first-order optimization, the negative population score $-G(\bbm\theta)$ is the ideal direction toward the sparse score root; see Figure~\ref{fig:two_regime}. In practice, $G(\bbm\theta)$ is unknown and must be estimated from the data, so we write $g(\bbm\theta)$ informally for the estimated gradient in this discussion. Early in the optimization path, $g(\bbm\theta)$ is used mainly for \emph{global localization}: the goal is to move the iterate into a neighborhood of $\bbm\theta^*$, rather than to attain the final statistical accuracy. Once the iterate is inside such a neighborhood, the role of $g(\bbm\theta)$ shifts to \emph{local refinement}, where the remaining task is to reduce the local statistical error.

These two regimes also exhibit distinct statistical properties, as the difficulty of estimating $G(\bbm\theta)$ depends on the distribution of the per-observation score in \eqref{eq:gradient_decomposition}. In the global localization regime, the design-driven term $\bm x_i\bm x_i^\top\Delta$ is prominent, making gradient estimation governed by quadratic covariate products. Under a finite $(2+2\lambda)$-moment condition with $0<\lambda\le1$ on $\bm x_i$, the tail behavior of the score is controlled by the design-tail index $\lambda$. In the local refinement regime, $\Delta$ is small, and the noise-driven score $\epsilon_i\bm x_i$ emerges as the leading stochastic component. Its tail behavior is determined by the covariate-noise product, for example through a finite $(1+\delta)$-moment condition with $0<\delta\le1$ on $\epsilon_i\bm x_i$ along sparse directions. Consequently, the tail behavior needed for reliable global localization can differ significantly from that determining the final estimation accuracy.

This two-regime view motivates the methodological development below. We need a gradient estimation method that remains reliable when the score is affected by heavy-tailed design products, but that also attains the sharper local accuracy dictated by the noise-driven score near the target. In other words, the estimated gradient should work in both the global localization and local refinement regimes and should adapt to the changing distributional structure along the optimization path.

\vspace{-10pt}
\subsection{Robust Sparse Gradient Estimation by Median-of-Means}
\label{subsec:robust_sparse_gradient_mom}

For each fixed parameter value $\bbm\theta$, estimating the ideal direction $G(\bbm\theta)$ is a mean estimation problem for the random score vector $\nabla_{\bbm\theta}\mathcal L(\bbm\theta;Z)$. This viewpoint applies to any differentiable score-equation model. In heavy-tailed settings, the score vector may have unstable coordinates; in the linear regression example, this instability arises through the design-driven product $\bm x_i\bm x_i^\top(\bbm\theta-\bbm\theta^*)$ and the noise-driven score $\epsilon_i\bm x_i$. We therefore replace the empirical mean gradient by a robust mean estimator of the population score.

Classical iterative hard thresholding uses the empirical gradient, the ordinary sample mean of the per-observation scores. Under heavy-tailed covariates, this empirical mean can be unstable because a small number of observations may dominate the average. Since our target is defined by the population score equation \eqref{eq:score_target}, we robustify this first-order moment directly, rather than only modifying the loss. Robust losses can reduce the influence of large residuals, but they do not directly address heavy-tailed score vectors induced by covariate products. See the discussions in Appendix~\ref{append:A} of Supplementary Materials for more details.

For a data subset $\mathcal A\subseteq\mathcal D_n$, we write $g(\bbm\theta;\mathcal A)$ for a robust estimate of $G(\bbm\theta)$ based on the sample scores $\{\nabla_{\bbm\theta}\mathcal L(\bbm\theta;Z_i):Z_i\in\mathcal A\}$. Given such an estimator, the basic sparse update is
\begin{equation}
\bbm\theta^+=\mathcal P_s\{\bbm\theta-\eta g(\bbm\theta;\mathcal A)\},
\label{eq:generic_right_update}
\end{equation}
where $\eta>0$ is a step size and $\mathcal P_s$ is the hard-thresholding operator that keeps the $s$ largest coordinates in absolute value and sets the remaining coordinates to zero. The working sparsity level $s$ is taken to be a constant multiple of $s^*$ in the theory; in applications, it can be selected by validation or stability selection.

The hard-thresholded update is inherently sparse: only a small number of coordinates can enter the current iterate, the next iterate, and the target. Therefore, the gradient estimator does not need to be optimal for dense Euclidean mean estimation in $\mathbb R^p$; what matters is reliable gradient information on sparse coordinate sets. This observation makes coordinate-wise robustification a natural and computationally simple choice.

We use a coordinate-wise Median-of-Means (MoM) estimator as a concrete robust gradient oracle. Given the data subset $\mathcal A$ and an integer $K$, partition $\mathcal A$ into disjoint blocks $\mathcal B_1(\mathcal A),\ldots,\mathcal B_K(\mathcal A)$ of equal or nearly equal sizes. For $j=1,\ldots,p$, define
\begin{equation}
g_j(\bbm\theta;K,\mathcal A)=\operatorname{median}_{1\le k\le K}\left[\frac{1}{|\mathcal B_k(\mathcal A)|}\sum_{Z_i\in\mathcal B_k(\mathcal A)}\nabla_j\mathcal L(\bbm\theta;Z_i)\right].
\label{eq:median_of_means_estimator}
\end{equation}
The median aggregation prevents a small number of unstable block averages from dominating the gradient estimate. The number of blocks $K$ controls the usual robustness-efficiency trade-off: larger $K$ gives stronger protection against heavy-tailed block means, while smaller blocks increase variability.

\begin{remark}[Coordinate-wise MoM]
The coordinate-wise construction is well matched to sparse hard thresholding. It is not intended to be an optimal dense Euclidean mean estimator in $\mathbb R^p$. Its role is more specific: the hard-thresholding recursion requires reliable gradient information on sparse coordinate sets, and coordinate-wise robustification provides a direct way to control such errors. Computationally, the estimator is also simple, requiring only block averages and $p$ univariate medians at each gradient evaluation. The formal sparse-set stability requirement is introduced in Section~\ref{subsec:srs_forms}.
\end{remark}

The robust gradient estimator in \eqref{eq:median_of_means_estimator}, together with the sparse update in \eqref{eq:generic_right_update}, forms the basic computational step of the proposed algorithm. The next subsection describes how this step is organized along a data-dependent optimization path.

\vspace{-10pt}
\subsection{Path Dependence and Delayed Sample Splitting}
\label{subsec:path_dependence_delayed_splitting}

Since our method uses a robust gradient vector field, which is not necessarily the gradient of a fixed empirical loss, one cannot analyze it by showing that a global optimizer satisfies a specific oracle inequality. The analysis must control the realized iterates. This iterative method introduces an additional difficulty: the gradient estimation error must be controlled along the optimization path, but the parameter at which the gradient is evaluated is itself produced from the data. If the same observations are reused across iterations, then the iterate $\bbm\theta^t$ depends on the data used to compute $g(\bbm\theta^t;\mathcal D_n)$. Consequently, a concentration bound for $g(\bbm\theta;\mathcal D_n)$ at a data-independent $\bbm\theta$ does not automatically justify the same bound at the data-dependent iterate $\bbm\theta^t$. This is a well-known path-dependence problem in statistical optimization \citep{prasad2020robust}; see the detailed discussion of path dependence in Appendix~\ref{append:B} of Supplementary Materials.

One natural strategy is \emph{no splitting}, where every gradient evaluation uses the full sample. This is attractive computationally and could be statistically efficient in practice, because each gradient estimate uses all available observations. Its analysis, however, requires uniform control of the robust gradient process over the relevant sparse region, since iterates are data dependent. Under heavy-tailed covariates, control of the unstable design-product terms $\bm x_i\bm x_i^\top(\bbm\theta-\bbm\theta^*)$ across all $\theta$ is demanding, because it pays an additional spatial complexity cost over a much larger class than the realized path itself. 

The opposite extreme is \emph{equal splitting}, where each iteration uses a fresh batch of observations \citep{balakrishnan2017computationally,prasad2020robust}. Conditional on the past, the current iterate is then fixed relative to the data used for the next gradient evaluation, so fixed-point concentration can be applied directly. The drawback is that the total sample is divided across the entire optimization path. If the method is run for $T\asymp\log n$ iterations, each gradient estimate uses only a fraction of the data, and the final statistical rate pays a corresponding splitting cost.

The two-regime structure in Section~\ref{subsec:two_gradient_regimes} suggests a more targeted strategy. In the \emph{global localization regime}, the goal is only to move the iterate into a neighborhood of $\bbm\theta^*$, rather than to attain the final statistical radius. It is therefore natural to reuse one part of the sample with a conservative robust gradient estimate that is stable enough for localization. Once the iterate has entered a local neighborhood, the remaining task is \emph{local refinement}. For this second task, we reserve an independent part of the sample and split it only across the shorter local refinement path. We call this strategy \emph{delayed sample splitting}; see Figure~\ref{fig:splitting_strategies}.

\begin{figure}[!htp]
    \includegraphics[width=0.93\textwidth,height=8.1cm]{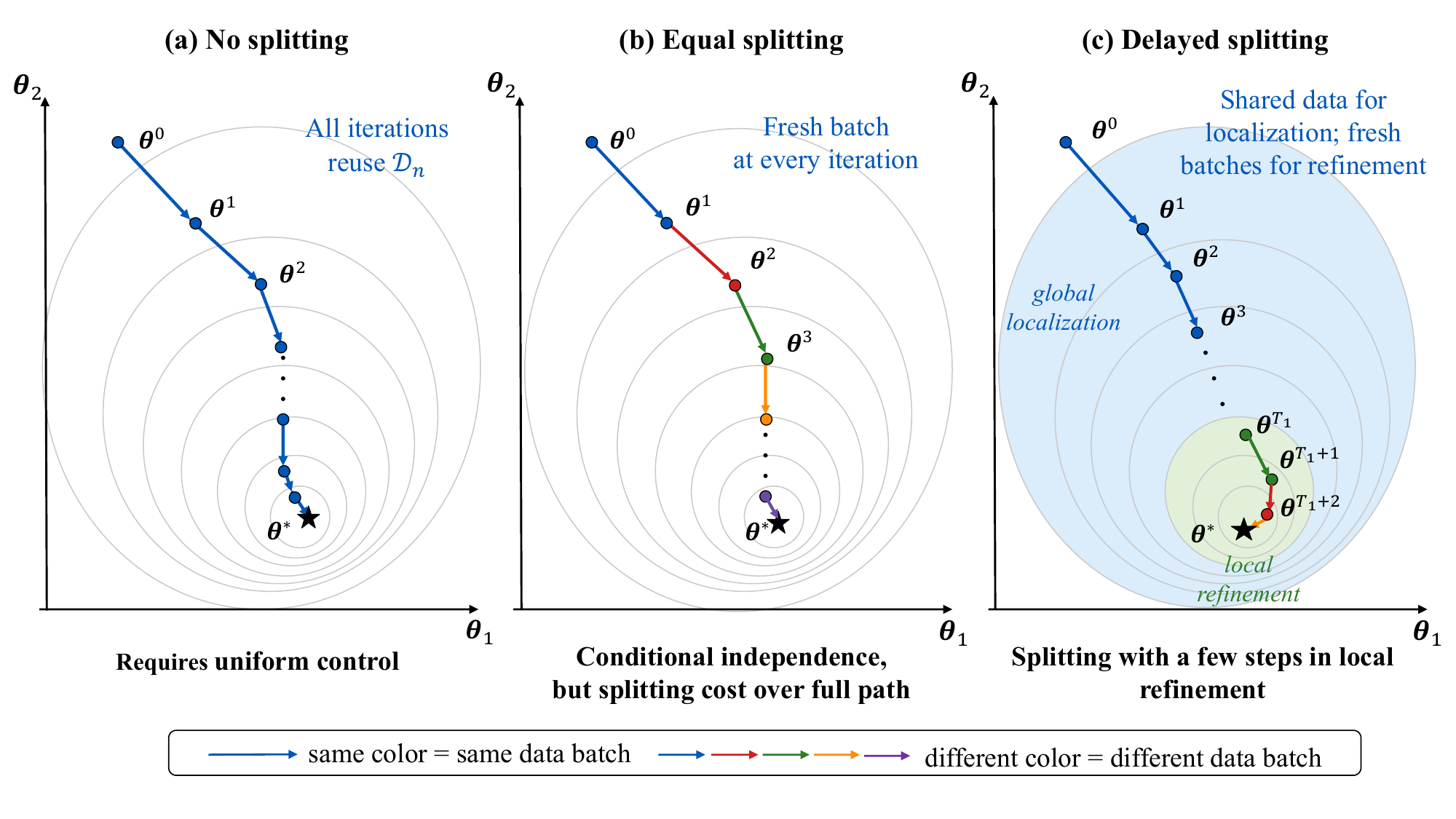}
    \vspace{-0.5cm}
    \caption{Illustration of sample splitting strategies.}
    \label{fig:splitting_strategies}
\end{figure}

The advantage of delayed sample splitting is that independence is used only where it is most valuable. The second phase does not need to repeat the entire optimization path; it starts from a localized estimator and only needs to close the gap between the coarse localization radius and the final local score radius. As a result, the splitting cost is paid only over the shorter refinement stage, which requires on the order of $\log s$ steps rather than $\log n$ fully split iterations.

Delayed sample splitting is therefore not merely a technical device for obtaining conditional independence. It aligns the data-access strategy with the statistical role of the two regimes: shared data are used for coarse localization under heavy-tailed design products, while fresh data are reserved for local refinement and final estimation accuracy. We now combine this idea with the MoM gradient update to obtain the proposed algorithm.

\vspace{-10pt}
\subsection{TS-RIGHT Algorithm}
\label{subsec:two_stage_right}

We now combine the robust sparse-gradient update in \eqref{eq:generic_right_update}, the coordinate-wise MoM gradient estimator in \eqref{eq:median_of_means_estimator}, and the delayed sample-splitting strategy described in Section~\ref{subsec:path_dependence_delayed_splitting}. The resulting procedure, called Two-Stage Robust Iterative Gradient Descent with Hard Thresholding (TS-RIGHT), has two phases. Phase I performs global localization using one part of the data, and Phase II performs local refinement using fresh independent batches from the remaining data.

Split the sample into two disjoint parts $\mathcal D^{(1)}$ and $\mathcal D^{(2)}$, with sample sizes $n_1$ and $n_2$, where $n_1\asymp n_2\asymp n$. The first phase reuses $\mathcal D^{(1)}$ for $T_1$ robust gradient steps. Its purpose is to move the estimator into a neighborhood of $\bbm\theta^*$, rather than to achieve the final statistical accuracy. The second phase splits $\mathcal D^{(2)}$ into $T_2$ disjoint batches and uses one fresh batch at each local refinement step. This second phase starts from the localized estimator produced by Phase I and refines it using independent local score estimates. The algorithm is written for a generic differentiable loss through the robust score estimator $g(\bbm\theta;\mathcal A)$; the model-specific analysis enters only through the verification of gradient stability and population-score regularity.

\begin{algorithm}[!htp]
\caption{TS-RIGHT}
\label{alg:two_stage_right}
\begin{algorithmic}[1]
\REQUIRE Data $\mathcal D_n$, initialization $\bbm\theta^0$, working sparsity $s$, step size $\eta$, iteration numbers $T_1,T_2$, and MoM block numbers $K_1,K_2$.
\STATE Split $\mathcal D_n$ into two disjoint parts $\mathcal D^{(1)}$ and $\mathcal D^{(2)}$, with $|\mathcal D^{(1)}|=n_1$, $|\mathcal D^{(2)}|=n_2$, and $n_1\asymp n_2\asymp n$.
\STATE \textbf{Phase I: global localization.}
\FOR{$t=0,\ldots,T_1-1$}
\STATE Compute the MoM gradient $g(\bbm\theta^t;K_1,\mathcal D^{(1)})$ by \eqref{eq:median_of_means_estimator}.
\STATE Set $\bbm\theta^{t+1}\leftarrow\mathcal P_s\{\bbm\theta^t-\eta g(\bbm\theta^t;K_1,\mathcal D^{(1)})\}$.
\ENDFOR
\STATE Split $\mathcal D^{(2)}$ into disjoint batches $\mathcal D^{(2)}_1,\ldots,\mathcal D^{(2)}_{T_2}$.
\STATE \textbf{Phase II: local refinement.}
\FOR{$\ell=1,\ldots,T_2$}
\STATE Compute the MoM gradient $g(\bbm\theta^{T_1+\ell-1};K_2,\mathcal D^{(2)}_\ell)$ by \eqref{eq:median_of_means_estimator}.
\STATE Set $\bbm\theta^{T_1+\ell}\leftarrow\mathcal P_s\{\bbm\theta^{T_1+\ell-1}-\eta g(\bbm\theta^{T_1+\ell-1};K_2,\mathcal D^{(2)}_\ell)\}$.
\ENDFOR
\ENSURE $\widehat{\bbm\theta}=\bbm\theta^{T_1+T_2}$.
\end{algorithmic}
\end{algorithm}

The default theoretical choices reflect the roles of the two phases. In Phase I, we take $T_1\asymp\log n$ iterations and use a larger number of MoM blocks, of order $s\log p$, to support sparsity-uniform localization. In Phase II, we take $T_2\asymp\log s$ iterations and use $K_2\asymp\log p$ blocks in each fresh batch, which is sufficient for high-dimensional local score estimation along the refinement path. 

TS-RIGHT is a general robust sparse first-order framework beyond linear regression. While the specific two-regime decomposition in Section~\ref{subsec:two_gradient_regimes} is model-dependent, the convergence analysis only requires two structural ingredients: sparse gradient stability along the realized path and a population-score regularity condition. Section~\ref{sec:theory} states this abstract theory, and Section~\ref{sec:4} verifies it for sparse regression models.

\vspace{-10pt}
\section{Deterministic Convergence Analysis}
\label{sec:theory}

This section gives an abstract convergence analysis for TS-RIGHT. The purpose is to separate the deterministic optimization argument from the probabilistic verification of robust gradient bounds. The analysis has two ingredients. The first is sparse gradient stability, which controls the error of the robust gradient estimator on coordinate sets relevant to hard thresholding. The second is a population-score regularity condition, which ensures that the population score points toward the sparse target strongly enough to generate contraction. 

\vspace{-10pt}
\subsection{Fixed, Uniform, and Pathwise Sparse Gradient Stability}
\label{subsec:srs_forms}

The convergence of TS-RIGHT depends on how accurately the robust gradient estimator approximates the population score on sparse sets. Since the iterates are data dependent, we distinguish three forms of sparse gradient stability. Recall that $G(\bbm\theta)=\mathbb{E}[\nabla_{\bbm\theta}\mathcal L(\bbm\theta;Z)]$
denotes the population score. Throughout this article, we write \(\bar s=2s+s^*\). This is the relevant sparse support size for hard-thresholded updates, as the active set always depends on the current iterate, the next iterate, and the target parameter.

\begin{definition}[Sparse gradient stability, SRS]
\label{def:SRS}
Let \(\mathcal A\) be a data subset and let \(\Theta_s\) be an \(s\)-sparse
parameter class. For \(\bbm\theta\in\Theta_s\), write $\mathsf{SRS}(\bbm\theta;K,\mathcal A,\bar s,\phi,\gamma)$ for the event that
\begin{equation}
\|[g(\bbm\theta;K,\mathcal A)-G(\bbm\theta)]_S\|_2
\le
\phi\|\bbm\theta-\bbm\theta^*\|_2+\gamma,
\quad
\text{for every } S\subset[p]\text{ with } |S|\le \bar s .
\label{eq:SRS}
\end{equation}
We use the following three forms.

\emph{(1) Fixed-point SRS:}
For a deterministic \(\bbm\theta\in\Theta_s\), we say that
\(g(\cdot;K,\mathcal A)\) satisfies fixed-point
\((\bar s,\phi,\gamma)\)-SRS at \(\bbm\theta\) if
\(\mathsf{SRS}(\bbm\theta;K,\mathcal A,\bar s,\phi,\gamma)\) holds.

\emph{(2) Uniform SRS:}
We say that \(g(\cdot;K,\mathcal A)\) satisfies uniform
\((\Theta_s,\bar s,\phi,\gamma)\)-SRS if simultaneously for all \(\bbm\theta\in\Theta_s\), $\mathsf{SRS}(\bbm\theta;K,\mathcal A,\bar s,\phi,\gamma)$ holds.

\emph{(3) Pathwise SRS:}
Let \(\{\bbm\theta^t\}_{t\in\mathcal I}\subset\Theta_s\) be a sequence of
iterates. Let \(\{\mathcal A_j\}_{j\in\mathcal J}\) be a collection of
data subsets, and let
\(\pi:\mathcal I\to\mathcal J\) be a schedule map specifying which data subset
is used at each iterate. We keep the block number \(K\) fixed across iterations for simplicity. The scheduled gradient evaluations satisfy
pathwise \((\mathcal I,\pi,\bar s,\phi,\gamma)\)-SRS if
$
\mathsf{SRS}(\bbm\theta^t;K,\mathcal A_{\pi(t)},\bar s,\phi,\gamma)
$
holds for every \(t\in\mathcal I\).

\end{definition}

These three SRS forms facilitate the gradient deviation analysis of the two-stage procedure in Algorithm~\ref{alg:two_stage_right}, which requires SRS to hold along the realized optimization path, i.e., the pathwise form. The two stages achieve this pathwise control through different mechanisms.

In Phase I, the same data subset \(\mathcal D^{(1)}\), with sample size \(n_1\), is reused throughout the global localization iterations. The Phase I iterates are therefore data-dependent functions of \(\mathcal D^{(1)}\), so a fixed-point concentration bound for a deterministic sparse parameter cannot be applied directly. Instead, we establish a uniform SRS event over \(\Theta_s\). Once this event holds, it automatically controls every sparse iterate produced in Phase I, and hence implies pathwise SRS for the first stage.

In Phase II, the sample-splitting structure provides the required independence. At local refinement step \(\ell\), the current iterate is measurable with respect to the information generated before the fresh batch \(\mathcal D^{(2)}_\ell\) is used, while \(\mathcal D^{(2)}_\ell\) is independent of this past information. Conditional on the past, the current iterate is therefore fixed, and the fixed-point SRS bound can be applied to the gradient computed from \(\mathcal D^{(2)}_\ell\) and therefore implies pathwise SRS for the second stage.
Lemma~\ref{lem:srs_to_pathwise} in Appendix~\ref{append:C} formalizes these two routes to pathwise SRS.

\vspace{-10pt}
\subsection{Population-Score Regularity and Two-Stage Convergence}
\label{subsec:srcg_two_stage_convergence}

Sparse gradient stability controls the stochastic error of robust gradient estimates. To translate this into iterate contraction, we need a regularity condition on the population score map. Since the population risk may be infinite under heavy-tailed responses, we state this condition directly in terms of \(G\) rather than the risk surface.

\begin{definition}[Sparsity-Restricted Correlated Gradient, SRCG]
\label{def:SRCG}
For \(s\ge s^*\) and constants \(a>0\) and \(0<b\le 1/(4a)\), we say that the population score map \(G\) satisfies the \((s,a,b)\)-SRCG condition if, for every \(\bbm\theta\in\mathbb R^p\) with \(\|\bbm\theta\|_0\le s\) and every \(S\subset[p]\) satisfying \(\operatorname{supp}(\bbm\theta)\cup\operatorname{supp}(\bbm\theta^*)\subseteq S\) and \(|S|\le \bar s\),
\begin{equation}
\langle G(\bbm\theta),\bbm\theta-\bbm\theta^*\rangle\ge a\|G(\bbm\theta)_S\|_2^2+b\|\bbm\theta-\bbm\theta^*\|_2^2 .
\label{eq:SRCG}
\end{equation}
\end{definition}

SRCG is a first-order restricted regularity condition. It ensures that, on sparse directions, the population score is sufficiently aligned with the estimation error and has enough strength to drive contraction. Unlike restricted strong convexity of the population risk, SRCG only requires the population score to be well defined. When the population risk exists, SRCG follows from standard restricted strong convexity and restricted strong smoothness conditions on the risk; the elementary verification is included in Appendix~\ref{append:SRCG} for completeness.

The following lemma is the deterministic contraction step used in both phases of the algorithm. It is stated for an arbitrary sequence of gradient estimates satisfying pathwise SRS, and therefore applies equally to uniform-control and fresh-batch settings once the corresponding pathwise stability event has been established.

\begin{lemma}[One-stage contraction under pathwise SRS]
\label{lem:one_stage_contraction_pathwise_srs}
Suppose that the population score map \(G\) satisfies the \((s,a,b)\)-SRCG condition. Consider the hard-thresholded update \(\bbm\theta^{t+1}=\mathcal P_s\{\bbm\theta^t-\eta g_t(\bbm\theta^t)\}\) for \(t=t_0,\ldots,t_0+T-1\), initialized at an \(s\)-sparse vector \(\bbm\theta^{t_0}\in \Theta_s\). Let the step size be \(\eta=2a\eta_0\) with \(\eta_0\in(0,1]\). Suppose that the gradient sequence satisfies pathwise \((\{t_0,\ldots,t_0+T-1\},\pi,\bar s,\phi,\gamma)\)-SRS. Then there exist constants \(\bar\phi>0\), \(\rho\in(0,1)\), \(C_\gamma>0\), and \(c_s>1\), depending only on \(a\), \(b\), and \(\eta_0\), such that if \(\phi\le\bar\phi\) and \(s\ge c_s s^*\), then
\begin{equation}
\|\bbm\theta^{t_0+T}-\bbm\theta^*\|_2\le \rho^T\|\bbm\theta^{t_0}-\bbm\theta^*\|_2+C_\gamma\gamma .
\label{eq:one_stage_contraction}
\end{equation}
\end{lemma}

The working sparsity level $s$ in Lemma~\ref{lem:one_stage_contraction_pathwise_srs} can be taken as a constant multiple of $s^*$. Throughout the article, we take $s\asymp s^*$, so that $\bar s\asymp s^*$. This lemma separates the deterministic optimization argument from the probabilistic verification of robust gradient stability. The condition $\phi\le\bar \phi$ limits the multiplicative error \(\phi\|\bbm\theta^t-\bbm\theta^*\|_2\) to preserve contraction, while the additive term \(\gamma\) determines the statistical error floor.

We now apply the one-stage contraction lemma to the TS-RIGHT algorithm.

\begin{theorem}[Two-stage deterministic convergence]
\label{thm:two_stage_right_convergence}
Suppose that the population score map \(G\) satisfies the \((s,a,b)\)-SRCG condition. Run Algorithm~\ref{alg:two_stage_right} with step size \(\eta=2a\eta_0\), where \(\eta_0\in(0,1]\), and suppose that \(s\ge c_s s^*\), where \(c_s\) is the constant in Lemma~\ref{lem:one_stage_contraction_pathwise_srs}. Assume that the following two stability events hold: Phase I satisfies uniform \((\Theta_s,\bar s,\phi_1,\gamma_1)\)-SRS on \(\mathcal D^{(1)}\), with all Phase I iterates belonging to \(\Theta_s\). Assume also that, with
\(\mathcal I_2=\{T_1,\ldots,T_1+T_2-1\}\), \(\pi_2(T_1+\ell-1)=\ell\) and \(\mathcal A_\ell=\mathcal D^{(2)}_\ell\), Phase II satisfies pathwise
\((\mathcal I_2,\pi_2,\bar s,\phi_2,\gamma_2)\)-SRS.
If \(\phi_1\le\bar\phi\) and \(\phi_2\le\bar\phi\), where \(\bar\phi\) is the constant in Lemma~\ref{lem:one_stage_contraction_pathwise_srs}, then the Phase I estimator statisfies
\begin{equation}
\|\bbm\theta^{T_1}-\bbm\theta^*\|_2\le \rho^{T_1}\|\bbm\theta^0-\bbm\theta^*\|_2+C_\gamma\gamma_1,
\label{eq:phase_one_localization_bound}
\end{equation}
and the final estimator satisfies
\begin{equation}
\|\widehat{\bbm\theta}-\bbm\theta^*\|_2\le \rho^{T_2}\|\bbm\theta^{T_1}-\bbm\theta^*\|_2+C_\gamma\gamma_2.
\label{eq:phase_two_refinement_bound}
\end{equation}
Consequently,
\begin{equation}
\|\widehat{\bbm\theta}-\bbm\theta^*\|_2\le \rho^{T_1+T_2}\|\bbm\theta^0-\bbm\theta^*\|_2+C_\gamma\rho^{T_2}\gamma_1+C_\gamma\gamma_2 .
\label{eq:two_stage_combined_bound}
\end{equation}
\end{theorem}

The theorem formalizes the global localization and local refinement interpretation. Phase I contracts the initial error to a coarse radius of order \(\gamma_1\). Phase II then contracts this coarse radius to the sharper local floor \(\gamma_2\). In particular, if \(T_1\) and \(T_2\) are chosen so that \(\rho^{T_1}\|\bbm\theta^0-\bbm\theta^*\|_2\lesssim\gamma_1\) and \(\rho^{T_2}\gamma_1\lesssim\gamma_2\), then \eqref{eq:two_stage_combined_bound} gives \(\|\widehat{\bbm\theta}-\bbm\theta^*\|_2\lesssim\gamma_2\), up to constants depending only on the population-score regularity and the step-size factor.

The theorem also explains the iteration choices in Algorithm~\ref{alg:two_stage_right}. Phase I may require \(T_1\asymp\log n\) iterations to reduce a generic initial error to the coarse localization radius. Phase II starts from this localized estimator and only needs to close the gap between \(\gamma_1\) and \(\gamma_2\); in the sparse linear regression application, this leads to \(T_2\asymp\log s^*\) local refinement steps. Thus the deterministic convergence analysis matches the delayed sample-splitting strategy introduced in Section~\ref{subsec:path_dependence_delayed_splitting}.

Theorem~\ref{thm:two_stage_right_convergence} provides a recipe for model-specific analysis: verify SRCG for the population score, establish the appropriate SRS bounds for the MoM gradient estimator, and choose proper $s$, $K_1$, $K_2$, $T_1$, $T_2$, and $\eta$ so that the multiplicative SRS terms preserve contraction and the additive terms determine the statistical radius.

\vspace{-10pt}
\section{Model-Specific Stochastic Analysis}
\label{sec:4}

This section verifies the conditions in Section~\ref{sec:theory} for sparse regression models. We focus on heavy-tailed sparse linear regression, where the score exhibits the two-regime structure developed in Section~\ref{subsec:two_gradient_regimes}. We then discuss sparse logistic regression as a contrasting example.

\vspace{-10pt}
\subsection{Heavy-Tailed Sparse Linear Regression}
\label{subsec:linear_regression_upper}

We first consider the linear model
\begin{equation}
    y_i=\bm x_i^\top\bbm\theta^*+\epsilon_i,
\end{equation}
where $\mathbb{E}[\epsilon_i|\bm x_i]=0$ and $\|\bbm\theta^*\|_0\le s^*$. For simplicity, we assume that the intercept has been removed. Under the least-squares loss in \eqref{eq:linear_regression_loss}, the population score is $G(\bbm\theta)=\bm\Sigma_{\bm x}(\bbm\theta-\bbm\theta^*)$, where $\bm\Sigma_{\bm x}=\mathbb{E}[\bm x_i\bm x_i^\top]$. Throughout this subsection, let $\bar s=2s+s^*$, and assume $s\ge2$. In this subsection, we denote $\Theta_s=\{\bbm\theta\in\mathbb R^p:\|\bbm\theta\|_0\le s\}$.

For any $q>0$ and integer $\ell\ge1$, define the sparsity-restricted design moment $M_{\bm x,q,\ell}=\sup_{\|\bm v\|_2=1,\,\|\bm v\|_0\le \ell}\mathbb{E}[|\bm x_i^\top\bm v|^q]$ and the conditional noise moment $M_{\epsilon,q}=\sup_{\bm x}\mathbb{E}[|\epsilon_i|^q\mid \bm x_i=\bm x]$.

\begin{assumption}[Sparsity-restricted moments for linear regression]
\label{assump:linear_moments}
There exist tail indices $\lambda\in(0,1]$ and $\delta\in(0,1]$ such that $M_{\bm x,2+2\lambda,\bar s}<\infty$ and $M_{\epsilon,1+\delta}<\infty$.
\end{assumption}

\begin{assumption}[Sparsity-restricted eigenvalues]
\label{assump:linear_sparse_eigenvalues}
The covariance matrix $\bm\Sigma_{\bm x}=\mathbb{E}[\bm x_i\bm x_i^\top]$ satisfies $\kappa_-\|\bm v\|_2^2\le \bm v^\top\bm\Sigma_{\bm x}\bm v\le \kappa_+\|\bm v\|_2^2$,
for all $\bm v\in\mathbb R^p$ with $\|\bm v\|_0\le \bar s$, where $0<\kappa_-\le\kappa_+<\infty$. We denote the condition number by $\kappa=\kappa_+/\kappa_-$.
\end{assumption}

Since $1+\delta\le2\le2+2\lambda$, the lower-order design moment $M_{\bm x,1+\delta,\bar s}$ is finite under Assumption~\ref{assump:linear_moments}. We define the effective $(1+\delta)$-th moment of the noise-driven score by $M_{\mathrm{eff},1+\delta}=M_{\epsilon,1+\delta}M_{\bm x,1+\delta,\bar s}$.
This quantity captures the tail behavior of the covariate-noise product $\epsilon_i\bm x_i$ along sparse directions. The parameter $\lambda$ controls the tail behavior of the design-driven term $\bm x_i\bm x_i^\top(\bbm\theta-\bbm\theta^*)$, whereas $\delta$ controls the tail behavior of the noise-driven score.

We first verify the population-score regularity condition.

\begin{proposition}[SRCG for linear regression]
\label{prop:linear_srcg}
Under Assumption~\ref{assump:linear_sparse_eigenvalues}, the population score map $G(\bbm\theta)=\bm\Sigma_{\bm x}(\bbm\theta-\bbm\theta^*)$ satisfies the $(s,a,b)$-SRCG condition with $a=\kappa_-/(2\kappa_+^2)$ and $b=\kappa_-/2$.
\end{proposition}

The next proposition verifies sparse gradient stability for the coordinate-wise MoM gradient estimator. It provides the two forms required by the two-stage analysis: uniform SRS for global localization and fixed-point SRS for local refinement.

\begin{proposition}[Linear-regression SRS for MoM gradients]
\label{prop:linear_mom_srs}
Let $g(\bbm\theta;K,\mathcal A)$ be the coordinate-wise MoM gradient estimator in \eqref{eq:median_of_means_estimator}, computed on a data subset $\mathcal A$ with $m=|\mathcal A|$. Under Assumption~\ref{assump:linear_moments}, there exist constants $c,C>0$, depending only on $\lambda$ and $\delta$, such that the following statements hold. Define
\begin{equation}
\phi_{\mathrm{lin}}(m,K)=C\sqrt{\bar s}\,M_{\bm x,2+2\lambda,\bar s}^{1/(1+\lambda)}\left(\frac{K}{m}\right)^{\lambda/(1+\lambda)}
\text{ and  }
\gamma_{\mathrm{lin}}(m,K)=C\sqrt{\bar s}\,M_{\mathrm{eff},1+\delta}^{1/(1+\delta)}\left(\frac{K}{m}\right)^{\delta/(1+\delta)}.
\label{eq:linear_gamma}
\end{equation}

\begin{enumerate}
\item If $K\ge C\bar s\log p$ and $K\le cm$, then with probability at least \(1-C\exp(-c\bar s\log p)\), $g(\cdot;K,\mathcal A)$ satisfies uniform $(\Theta_s,\bar s,\phi_{\mathrm{lin}}(m,K),\gamma_{\mathrm{lin}}(m,K))$-SRS over $\Theta_s$.

\item For any deterministic $\bbm\theta\in\Theta_s$, if $K\ge C\log p$ and $K\le cm$, then with probability at least $1-Cp^{-c}$, $g(\bbm\theta;K,\mathcal A)$ satisfies fixed-point $(\bar s,\phi_{\mathrm{lin}}(m,K),\gamma_{\mathrm{lin}}(m,K))$-SRS at $\bbm\theta$.
\end{enumerate}
\end{proposition}
This proposition highlights that delayed sample-splitting avoids the spatial complexity waste of uniform convergence. Full iterate control under no sample splitting demands uniform SRS, requiring order $\bar s\log p$ MoM blocks, whereas our two-stage strategy needs only $\log p$ blocks in the second phase, yielding sharper statistical rates.

\begin{remark}[Design-driven gradient stability and adpativity of MoM estimation]
Suppose $n\asymp n_1\asymp n_2$, $K_1\asymp \bar s \log p$, $K_2\asymp \log p$, $T_1\asymp\log n$, and $T_2\asymp\log s$. 
The SRS bounds for the two stages are $\phi_{\mathrm{lin}}(n_1,K_1)\Vert\bbm\theta^t-\bbm\theta^*\Vert_2+\gamma_{\mathrm{lin}}(n_1,K_1)$ and $\phi_{\mathrm{lin}}(n_2/T_2,K_2)\Vert\bbm\theta^t-\bbm\theta^*\Vert_2+\gamma_{\mathrm{lin}}(n_2/T_2,K_2)$, respectively.

First, we emphasize an important finding: the design tail index impacts the gradient stability at the global localization phase.
Specifically, for iterates in Phase I, the estimation error $\|\bbm\theta^t-\bbm\theta^*\|_2$ may be relatively large, making the design-driven term dominant. Consequently, the gradient error is primarily controlled by the design tail index $\lambda$, i.e.,
\begin{equation}\label{eq:linear_phase_one_gradient_error}
    \|[g(\bbm\theta^t;K_1,\mathcal D^{(1)})-G(\bbm\theta^t)]_S\|_2\lesssim \sqrt{\bar s}\,M_{\bm x,2+2\lambda,\bar s}^{1/(1+\lambda)}\left(\frac{\bar s\log p}{n}\right)^{\lambda/(1+\lambda)}\|\bbm\theta^t-\bbm\theta^*\|_2.
\end{equation}
Once the iterates transition into Phase II, i.e., when $\|\bbm\theta^t-\bbm\theta^*\|_2\le \gamma_{\mathrm{lin}}(n_1,K_1)$, the SRS error is at most of order $n^{-\delta/(1+\delta)}$. This bound illustrates that the gradient performance is ultimately controlled by the tail behavior of the noise-driven score.

Second, the SRS bound of coordinate-wise MoM gradients reveals the decoupled effects of design and noise tails on $\phi$ and $\gamma$. The shifting dominant term along the optimization path elegantly captures the gradient's two-regime structure, demonstrating the adaptivity of MoM estimation.
\end{remark}

Combining Propositions~\ref{prop:linear_srcg} and \ref{prop:linear_mom_srs} with Theorem~\ref{thm:two_stage_right_convergence} gives the main upper bound for heavy-tailed sparse linear regression.

\begin{theorem}[TS-RIGHT for heavy-tailed sparse linear regression]
\label{thm:linear_two_stage_right}
Suppose Assumptions~\ref{assump:linear_moments} and \ref{assump:linear_sparse_eigenvalues} hold. Run Algorithm~\ref{alg:two_stage_right} with $n_1\asymp n_2\asymp n$, working sparsity $s\ge c_s s^*$ with $s\asymp s^*$, step size $\eta=2a\eta_0$ for some $\eta_0\in(0,1]$, MoM block numbers $K_1\asymp \bar s\log p$ and $K_2\asymp\log p$, and iteration numbers $T_1\asymp\log n$ and $T_2\asymp\log s$. Suppose the initialization satisfies $\Vert\bbm\theta^0\Vert_0\le s$ and $\|\bbm\theta^0-\bbm\theta^*\|_2\le R_0$, where $R_0$ is at most polynomial in $n$. If
\begin{equation}
n\gtrsim_{\kappa,\lambda,\eta_0} M_{\bm x,2+2\lambda,\bar s}^{1/\lambda}\bar s^{1+(1+\lambda)/(2\lambda)}\log p,
\label{eq:linear_sample_size_condition}
\end{equation}
then, with probability at least $1-C\exp(-c\bar s\log p)-Cp^{-c}$, the phase-I estimator satisfies
\begin{equation}
\|\bbm\theta^{T_1}-\bbm\theta^*\|_2
\lesssim_{\kappa,\lambda,\delta,\eta_0}
\sqrt{\bar s}\,M_{\mathrm{eff},1+\delta}^{1/(1+\delta)}
\left(\frac{\bar s\log p}{n}\right)^{\delta/(1+\delta)}
\label{eq:linear_phase_one_rate}
\end{equation}
and the final estimator $\widehat{\bbm\theta}$ satisfies
\begin{equation}
\|\widehat{\bbm\theta}-\bbm\theta^*\|_2
\lesssim_{\kappa,\lambda,\delta,\eta_0}
\sqrt{\bar s}\,M_{\mathrm{eff},1+\delta}^{1/(1+\delta)}
\left(\frac{\log s \cdot \log p}{n}\right)^{\delta/(1+\delta)}.
\label{eq:linear_final_rate}
\end{equation}
\end{theorem}

Theorem~\ref{thm:linear_two_stage_right} formalizes the two-regime phenomenon from Section~\ref{subsec:two_gradient_regimes}. The design-tail index $\lambda$ appears through the multiplicative stability term $\phi_{\mathrm{lin}}$, which impacts the sample complexity \eqref{eq:linear_sample_size_condition} required for the robust gradient step to contract, as in \eqref{eq:linear_phase_one_gradient_error}. The final error is determined by the additive stability term $\gamma_{\mathrm{lin}}$, which is governed by the effective noise-driven score moment $M_{\mathrm{eff},1+\delta}$ and the noise tail index $\delta$. In this sense, design tails control stable localization, while noise-score tails control final refinement.



Table~\ref{tab:comparison} compares our rates with existing results for heavy-tailed sparse linear regression. Compared with Huber-type losses, TS-RIGHT directly controls the design-driven product \(\bm x_i\bm x_i^\top(\bbm\theta-\bbm\theta^*)\), allowing for weaker design moment assumptions.
Unlike \citet{prasad2020robust,liu2019high}, where full sample splitting inflates the error by the total iterations $T$ (typically $\log n$), TS-RIGHT delays sample splitting until Phase II. Starting from a localized estimator, Phase II only requires $T_2\asymp\log s$ fresh batches to close the gap between $\gamma_1$ and $\gamma_2$, adding just a $\log s$ factor to the final rate. Furthermore, restricting the SRS condition to a sparse set avoids depending on $\mathrm{tr}(\Sigma_{\bm x})$, yielding a sharper $\sqrt{s^*}$ dependence. Overall, TS-RIGHT achieves near-optimal rates, up to $(\log s^*)^{\delta/(1+\delta)}$, which is extremely small in practice, under minimal moment conditions on the design and noise.

\begin{table}[!htp]
\centering
\renewcommand{\arraystretch}{1.3}
\small
\begin{threeparttable}
\caption{Comparison of assumptions and $\ell_2$ statistical rates for sparse linear regression.}
\label{tab:comparison}
\begin{tabular}{@{} l @{\hspace{2pt}} c @{\hspace{4pt}} c @{\hspace{4pt}} c @{}}
\toprule
\textbf{Method}  & \makecell{\textbf{Design} \\ \textbf{Moments}} & \makecell{\textbf{Noise} \\ \textbf{Moments}} & \textbf{Statistical Rate}\\
\midrule
LASSO/IHT  & Sub-Gaussian & Gaussian  & $\sqrt{s^*\log p/{n}}$\\
Adaptive Huber \citep{sun2020adaptive}\tnote{a}  & Sub-Gaussian & $1+\delta$ & $\sqrt{s^*}\left({\log p}/{n}\right)^{\delta/(1+\delta)}$\\
Robust GD \citep{prasad2020robust}\tnote{b}  & $4$th & 2nd & $\sqrt{{\mathrm{tr}(\Sigma_{\bm x})T(\log T+s^*\log p)}/{n}}$\\
Robust HT \citep{liu2019high}\tnote{c} & $4$th & 2nd & $\sqrt{Ts^*\log p/n}$\\
Data Shrinkage \citep{fan2021shrinkage} & $4$th & $4$th & $\sqrt{s^*\log p/n}$\\
MOM-LASSO \citep{lecue2020robust} &  $\log p$ & $2$nd & $\sqrt{s^*\log(ep/s^*)/n}$\\
\textbf{TS-RIGHT (Ours)}\tnote{d}  & \textbf{$2+2\lambda$} & \textbf{$1+\delta$} & $\sqrt{s^*}\left({\log s^* \cdot \log p}/{n}\right)^{\delta/(1+\delta)}$  \\
\bottomrule
\end{tabular}
\begin{tablenotes}
\footnotesize 
  \item[a] They also get $\sqrt{s^*{\log p}/{n}}$ for $4$th moment design and $4$th moment noise.
  \item[b] They get $\sqrt{{\mathrm{tr}(\Sigma_{\bm x})T\log(T/\varrho)}/{n}}$ with probability at least $1-\varrho$. Although they do not assume sparsity on the true parameter, we take $\varrho=\exp\{-s^*\log p\}$ for comparison.
  \item[c] They require $T=O(\log n)$.
  \item[d] Since $ s\asymp s^* $ and $\bar{s} \asymp s^* $, we replace $\bar s$ and $s$ with $s^*$ in the final rate for comparison.
\end{tablenotes}
\end{threeparttable}
\end{table}

\vspace{-18pt}
\subsection{Heavy-Tailed Sparse Logistic Regression}
\label{subsec:logistic_regression_upper}

We next consider sparse logistic regression as a contrasting example. Let \(y_i\in\{0,1\}\) and \(\mathbb{P}(y_i=1\mid \bm x_i)=\sigma(\bm x_i^\top\bbm\theta^*)\), where \(\sigma(t)=e^t/(1+e^t)\) and \(\|\bbm\theta^*\|_0\le s^*\). The negative log-likelihood loss is
\[
\mathcal L(\bbm\theta;Z_i)=\Phi(\bm x_i^\top\bbm\theta)-y_i\bm x_i^\top\bbm\theta,\qquad \Phi(t)=\log(1+e^t),
\]
and its score is
\begin{equation}
\nabla_{\bbm\theta}\mathcal L(\bbm\theta;Z_i)=\{\sigma(\bm x_i^\top\bbm\theta)-y_i\}\bm x_i .
\label{eq:logistic_score}
\end{equation}
Unlike the least-squares score, \eqref{eq:logistic_score} does not contain a quadratic design product. The scalar multiplier \(\sigma(\bm x_i^\top\bbm\theta)-y_i\) is bounded between \(-1\) and \(1\). Thus the logistic score is still heavy-tailed when \(\bm x_i\) is heavy-tailed, but its tail behavior is first-order in the covariates. This is the key reason that finite second moments of the covariates are sufficient for MoM gradient stability.

Throughout this subsection, let \(\bar s=2s+s^*\) and assume \(s\ge2\). We use the same sparsity-restricted design moment \(M_{\bm x,q,\ell}\) defined in Section~\ref{subsec:linear_regression_upper} and require Assumption~\ref{assump:linear_sparse_eigenvalues} on the covariance matrix \(\bm\Sigma_{\bm x}\). For the population-score regularity condition, we restrict attention to a bounded sparse parameter region. Let \(B(R)=\{\bbm\theta\in\mathbb R^p:\|\bbm\theta\|_2\le R\}\), $\Theta_s=\{\bbm\theta\in\mathbb R^p:\|\bbm\theta\|_0\le s\}$ and \(\Theta_s(R)=\{\bbm\theta\in B(R):\|\bbm\theta\|_0\le s\}\).

\begin{assumption}[Sparsity-restricted logistic curvature]
\label{assump:sparse_logistic_curvature}
There exists a constant $c_\kappa\in(0,1/4]$ such that, for every
$S\subset[p]$ with $|S|\le \bar s$, every
${\boldsymbol\theta}\in B(R)$ satisfying
$\operatorname{supp}({\boldsymbol\theta})\subseteq S$, and every
$\bm v\in\mathbb R^p$ satisfying
$\operatorname{supp}(\bm v)\subseteq S$, one has
\[
\mathbb E\left[
\sigma'(\bm x_i^\top{\boldsymbol\theta})
(\bm x_i^\top\bm v)^2
\right]
\ge
c_\kappa
\mathbb E\left[
(\bm x_i^\top\bm v)^2
\right].
\]
\end{assumption}

Assumption~\ref{assump:sparse_logistic_curvature} prevents the population logistic curvature from degenerating on the sparse parameter region. The condition is needed because \(\sigma'(t)\) can be close to zero when \(|t|\) is large. It is a population-level condition and does not require sub-Gaussian covariates. We verify this assumption for a multivariate \(t\)-distribution in Lemma~\ref{lem:student_t_sparse_logistic_curvature} of Appendix~\ref{append:E}.

We now verify the population-score regularity condition as follows.

\begin{proposition}[SRCG for logistic regression]
\label{prop:logistic_srcg}
Suppose Assumptions~\ref{assump:linear_sparse_eigenvalues} and \ref{assump:sparse_logistic_curvature} hold, and \(\bbm\theta^*\in B(R)\). Then the population logistic score map \(G(\bbm\theta)=\mathbb{E}[\{\sigma(\bm x_i^\top\bbm\theta)-y_i\}\bm x_i]\) satisfies the \((s,a,b)\)-SRCG condition over \(\Theta_s(R)\), with \(a=2/\kappa_+\) and \(b=c_\kappa\kappa_-/2\).
\end{proposition}

The next proposition verifies sparse gradient stability for the coordinate-wise MoM gradient estimator. Compared with Proposition~\ref{prop:linear_mom_srs}, the bound has no design-driven term.

\begin{proposition}[Logistic-regression SRS for MoM gradients]
\label{prop:logistic_mom_srs}
Let \(g(\bbm\theta;K,\mathcal A)\) be the MoM gradient estimator in \eqref{eq:median_of_means_estimator}, computed on a data subset \(\mathcal A\) with \(m=|\mathcal A|\). Suppose Assumption \ref{assump:linear_sparse_eigenvalues} holds. There exist universal constants \(c,C>0\) such that the following statements hold. Define $\gamma_{\mathrm{log}}(m,K)=C\sqrt{\bar s\,M_{\bm x,2,1}}\left(K/m\right)^{1/2}$.

\begin{enumerate}
\item If \(K\ge C\bar s\log p\) and \(K\le cm\), then with probability at least \(1-C\exp\{-c\bar s \log p\}\), \(g(\cdot;K,\mathcal A)\) satisfies uniform \((\Theta_s,\bar s,0,\gamma_{\mathrm{log}}(m,K))\)-SRS.

\item For any deterministic \(\bbm\theta\in\Theta_s\), if \(K\ge C\log p\) and \(K\le cm\), then with probability at least \(1-Cp^{-c}\), \(g(\bbm\theta;K,\mathcal A)\) satisfies fixed-point \((\bar s,0,\gamma_{\mathrm{log}}(m,K))\)-SRS at \(\bbm\theta\).
\end{enumerate}
\end{proposition}

The absence of a multiplicative term reflects the bounded residual multiplier in \eqref{eq:logistic_score}. In contrast to linear regression, the stochastic score does not contain \(\bm x_i\bm x_i^\top(\bbm\theta-\bbm\theta^*)\), and the MoM gradient error is controlled by the second moment of the covariates.

Combining Propositions \ref{prop:logistic_srcg} and \ref{prop:logistic_mom_srs} with Theorem~\ref{thm:two_stage_right_convergence} yields the following convergence guarantee.

\begin{theorem}[TS-RIGHT for heavy-tailed sparse logistic regression]
\label{thm:logistic_two_stage_right}
Suppose Assumptions~\ref{assump:linear_sparse_eigenvalues} and \ref{assump:sparse_logistic_curvature} hold. Run Algorithm~\ref{alg:two_stage_right} with \(n_1\asymp n_2\asymp n\), working sparsity \(s\ge c_s s^*\) with $s \asymp s^*$, step size \(\eta=2a\eta_0\) for some \(\eta_0\in(0,1]\), MoM block numbers \(K_1\asymp\bar s\log p\) and \(K_2\asymp\log p\), and iteration numbers \(T_1\asymp\log n\) and \(T_2\asymp\log s\). Suppose the initialization satisfies \(\Vert\bbm\theta^0\Vert_0\le s\), \(\|\bbm\theta^0-\bbm\theta^*\|_2\le R_0\), where \(R_0\) is at most polynomial in \(n\), and let \(R=R_0+\Vert\bbm\theta^*\Vert_2+1\). If 
\[n\gtrsim_{c_\kappa,\kappa,\eta_0} M_{\bm x,2,1}\bar s^2\log p,\]
 then, with probability at least \(1-C\exp\{-c\bar s \log p\}-Cp^{-c}\), the phase-I estimator satisfies
\begin{equation}
\|\bbm\theta^{T_1}-\bbm\theta^*\|_2
\lesssim_{c_\kappa,\kappa,\eta_0}
\sqrt{\bar s\,M_{\bm x,2,1}}
\left(\frac{\bar s\log p}{n}\right)^{1/2}
\label{eq:logistic_phase_one_rate}
\end{equation}
and the final estimator satisfies
\begin{equation}
\|\widehat{\bbm\theta}-\bbm\theta^*\|_2
\lesssim_{c_\kappa,\kappa,\eta_0}
\sqrt{\bar s\,M_{\bm x,2,1}}
\left(\frac{\log s\cdot \log p}{n}\right)^{1/2}.
\label{eq:logistic_final_rate}
\end{equation}
\end{theorem}

\begin{remark}[Bounded residual multiplier]
The logistic gradient is not bounded when the covariates are heavy-tailed. What is bounded is the scalar residual multiplier \(\sigma(\bm x_i^\top\bbm\theta)-y_i\). This bounded multiplier prevents the quadratic amplification of covariate tails that appears in least-squares regression. Consequently, finite second moments of the covariates are sufficient for sparse MoM gradient stability, and TS-RIGHT attains the square-root local refinement rate in \eqref{eq:logistic_final_rate}. Similar behavior is also observed by \citet{lecue2020robustclassification} who utilize the Lipschitz property of logistic-type classification losses.
\end{remark}

Table~\ref{tab:logistic_comparison} compares our logistic regression rates with existing results. The statistical rate of TS-RIGHT is an analogue of the standard sparse logistic rate under finite second moments, up to an additional \(\log s\) factor introduced by the local refinement batches. Furthermore, when compared with \citet{prasad2020robust}, the delayed-splitting strategy and the restricted SRS condition demonstrate advantages similar to those observed in linear regression.

\vspace{-1pt}
\begin{table}[bhtp]
\centering
\renewcommand{\arraystretch}{1.3}
\setlength{\tabcolsep}{2pt}
\small
\begin{threeparttable}
\caption{Comparison of assumptions and statistical rates for sparse logistic regression.}
\label{tab:logistic_comparison}
\begin{tabular}{p{0.45\linewidth}p{0.21\linewidth}p{0.31\linewidth}}
\toprule
\textbf{Method}  & \textbf{Design Moments} &  \textbf{Statistical Rate}\\
\midrule

Logistic LASSO 
\citep{negahban2012unified}
& Sub-Gaussian 
& $\sqrt{s^*\log p/n}$ \\

Robust GD 
\citep{prasad2020robust}\tnote{a}
& $4$th
& $\sqrt{{\mathrm{tr}(\Sigma_{\bm x})T(\log T+s^* \log p)}/{n}}$ \\

Robust HT \citep{liu2019high}\tnote{b}  & 2nd & $\sqrt{Ts^*\log p/n}$\\

Feature Shrinkage 
\citep{zhu2021taming}
& $4$th
& $\sqrt{s^*\log p/n}$ \\

\textbf{TS-RIGHT (Ours)}\tnote{c}
& \textbf{2nd Mom.}
& $\sqrt{{s^*\log s^*\cdot\log p}/{n}}$ \\

\bottomrule
\end{tabular}
\begin{tablenotes}
\footnotesize
  \item[a] They get $\sqrt{{\mathrm{tr}(\Sigma_{\bm x})T\log(T/\varrho)}/{n}}$ with probability at least $1-\varrho$. Although they do not assume sparsity on the true parameter, we take $\varrho=\exp\{-s^*\log p\}$ for comparison.
  \item[b] They require $T=O(\log n)$.
  \item[c] Since $s\asymp s^*$ and $\bar s=2s+s^*\asymp s^*$, we replace $s$ and $\bar s$ by $s^*$ in the displayed final rate. Moment and curvature constants are suppressed.
\end{tablenotes}
\end{threeparttable}
\end{table}

\vspace{-18pt}
\section{Phase-Wise Lower-Bound Interpretation}
\label{sec:lower_bound}

We focus in this section on sparse linear regression, where the two-phase gradient structure is most transparent and the distinction between design-driven localization and noise-driven refinement is essential. The corresponding lower-bound discussion for sparse logistic regression is given in Appendix~\ref{sec:appendix_minimax}. Our goal here is not to establish a unified minimax theorem for the full random-design sparse regression problem, but rather to explain the phase-wise lower-bound interpretation of the two terms appearing in the SRS bound.

The upper theory for sparse linear regression is driven by the SRS
decomposition
\[
    \|[g(\bbm\theta)-G(\bbm\theta)]_S\|_2
    \le
    \phi_{\rm lin}\|\bbm\theta-\bbm\theta^*\|_2+\gamma_{\rm lin}.
\]
The two terms correspond to different statistical tasks along the optimization path. The multiplicative term controls whether the estimated gradient field is stable enough to yield global localization, whereas the additive term controls the final stochastic floor after localization has been achieved. We now explain why both terms have intrinsic lower-bound interpretations.

The additive term is calibrated by existing lower-bound theory for adaptive Huber regression. In particular, \citet{sun2020adaptive} show that under only a finite \((1+\delta)\)-moment condition on the regression noise, the final statistical error cannot generally improve on the tail exponent $\xi(\delta)$, where $\xi(a)=\min\{a/(1+a),1/2\}$. In sparse high-dimensional regression, this corresponds to the benchmark $\sqrt{s^*}(\log p/n)^{\xi(\delta)}$, up to constants and logarithmic factors. Thus the exponent appearing in the
additive term \(\gamma_{\rm lin}\) is not new to this paper; it is the known
local-refinement barrier caused by heavy-tailed noise. For self containedness, we provide the precise statement of this lower bound in Theorem~\ref{thm:fixed_design_local_refinement_lower} of Appendix~\ref{append:local_refinement_lower}.
Our lower-bound
contribution concerns the complementary multiplicative term, which is specific
to the design-driven gradient stability needed for global localization.

To state this lower bound, fix a support \(S\subset[p]\) with
\(|S|=s^*\). For a design distribution \(P\), write
\(G_P(\Delta)=\bbm\Sigma_P\Delta\), where
\(\bbm\Sigma_P=\mathbb E_P(\bm x\bm x^\top)\). Let
\(\mathcal Q_\lambda(S)\) be a class of design distributions satisfying sparse
eigenvalue bounds on the relevant sparse directions and a finite
sparsity-restricted \((2+2\lambda)\)-moment condition
\(M_{\bm x,2+2\lambda,s^*}\le M_x\). The precise construction of
\(\mathcal Q_\lambda(S)\) is given in Appendix~\ref{sec:appendix_minimax} of Supplementary Materials.

\begin{theorem}[Sparse operator-norm lower bound for gradient stability]
\label{thm:sparse_operator_gradient_lower}
Let \(0<\lambda\le1\), and suppose \(s^*\le c n\) for a sufficiently small
constant \(c>0\). There exists a class \(\mathcal Q_\lambda(S)\) satisfying the
conditions above such that
\begin{equation}\label{eq:sparse_operator_gradient_lower}
\inf_{\widehat G}
\sup_{P\in\mathcal Q_\lambda(S)}
\mathbb E_P\left[
\sup_{\Delta:\ \|\Delta\|_2=1,\ \operatorname{supp}(\Delta)\subset S}
\|\widehat G(\Delta)-G_P(\Delta)\|_2
\right]
\ge
c_\lambda M_x^{1/(1+\lambda)}
\left(\frac{s^*}{n}\right)^{\lambda/(1+\lambda)},
\end{equation}
where the infimum is over all estimators of the gradient map based on \(n\)
independent design observations from \(P\), and \(c_\lambda>0\) depends only on
\(\lambda\).
\end{theorem}

Theorem~\ref{thm:sparse_operator_gradient_lower} shows that the
\(\lambda\)-dependent rate in the multiplicative SRS term is not merely a proof
artifact. Under only finite sparsity-restricted \((2+2\lambda)\)-moments of
the covariates, the sparse design-gradient map
\(\Delta\mapsto \bbm\Sigma_{\bm x}\Delta\) cannot generally be estimated over
sparse directions at a faster tail-dependent exponent. Thus the design-tail
index \(\lambda\) controls the intrinsic difficulty of stable global
localization.

Combining this theorem with the known adaptive-Huber local benchmark gives the
decoupled lower-bound interpretation $\phi_{\rm lin}\|\bbm\theta-\bbm\theta^*\|_2 + \gamma_{\rm lin}$.
The two exponents have the same functional form, but they govern different objects. The exponent \(\xi(\lambda)\) limits
design-driven gradient stability and hence global localization, while
\(\xi(\delta)\) limits the final local refinement accuracy. Therefore, these
results should be interpreted as phase-wise lower-bound benchmarks rather than
as a unified random-design minimax theorem for the full sparse regression
problem. In short, design tails limit stable localization, whereas noise-score
tails limit final refinement.

\vspace{-10pt}
\section{Numerical Studies}\label{sec:numerical_studies}

This section presents numerical experiments validating our theory. First, we verify how the estimation error rate and phase-I gradient stability depend on the noise tail index $\delta$ and design tail index $\lambda$, respectively. Then we benchmark the proposed TS-RIGHT algorithm against other robust high-dimensional methods in linear and logistic regression. We also demonstrate the computational improvement from the second stage and compare the delayed sample-splitting strategy with no splitting and equal splitting; these additional results are deferred to Appendix~\ref{append:delayed_splitting_advantage} of Supplementary Materials.

\vspace{-10pt}
\subsection{Verification of Estimation Error Rate Adaptation}

We first verify the theoretical prediction of Theorem \ref{thm:linear_two_stage_right}, which states that the convergence rate adapts to the noise tail index $\delta$ as $\mathcal{O}(n^{-\delta/(1+\delta)})$. Specifically, after sufficient iterations, the log-error should scale as:
\begin{equation}
\log \left( \Vert\widehat{\bbm\theta}-\bbm\theta^*\Vert_2 \right) \approx -\zeta(\delta) \log n + C_{\delta,\bar s}, \quad \text{with } \zeta(\delta) = \min\left(\delta/(1+\delta), 1/2\right).
\end{equation}

We set $n \in [300, 20006]$, $p=600$, and $s^*=5$ with $\bbm\theta^*=(5,-5,6,-6,7,0,0,\dots,0)^\top$. The covariates are generated from a standard Gaussian distribution, $\bm{x}_i\sim N_p(\bm 0,\bm I_p)$, to isolate the impact of heavy-tailed noise. The noise follows a Student's $t$-distribution with degrees of freedom $\nu \in \{1.2, 1.4, 1.6, 1.8, 2.5, 6\}$. These correspond to tail indices $\delta \in \{0.15, 0.35, 0.55, 0.75, 1.45, 4.95\}$ (since the $(1+\delta)$-th moment exists if $\delta < \nu - 1$). We take the median $l_2$ norm of estimation errors over 100 trials.

Figure \ref{fig:rate_plot_linear_regression} plots the log-error versus $\log n$. To facilitate comparison of slopes, curves are normalized to intersect at $n=366$. The accompanying table reports the empirical slopes obtained via linear regression on the log-log data.
The results are consistent with the theoretical prediction: as $\delta$ increases, the slope steepens, closely matching the theoretical prediction $-\delta/(1+\delta)$. Furthermore, we observe the predicted saturation: for $\delta > 1$ (i.e., $\delta=1.45$ and $4.95$), the slopes stabilize near $-0.5$, confirming that once the noise has finite variance, the rate locks into the standard parametric rate $n^{-1/2}$.
\begin{figure}[htb]
    \centering
    \begin{minipage}[c]{0.64\textwidth}
        \centering
        \includegraphics[width=\linewidth]{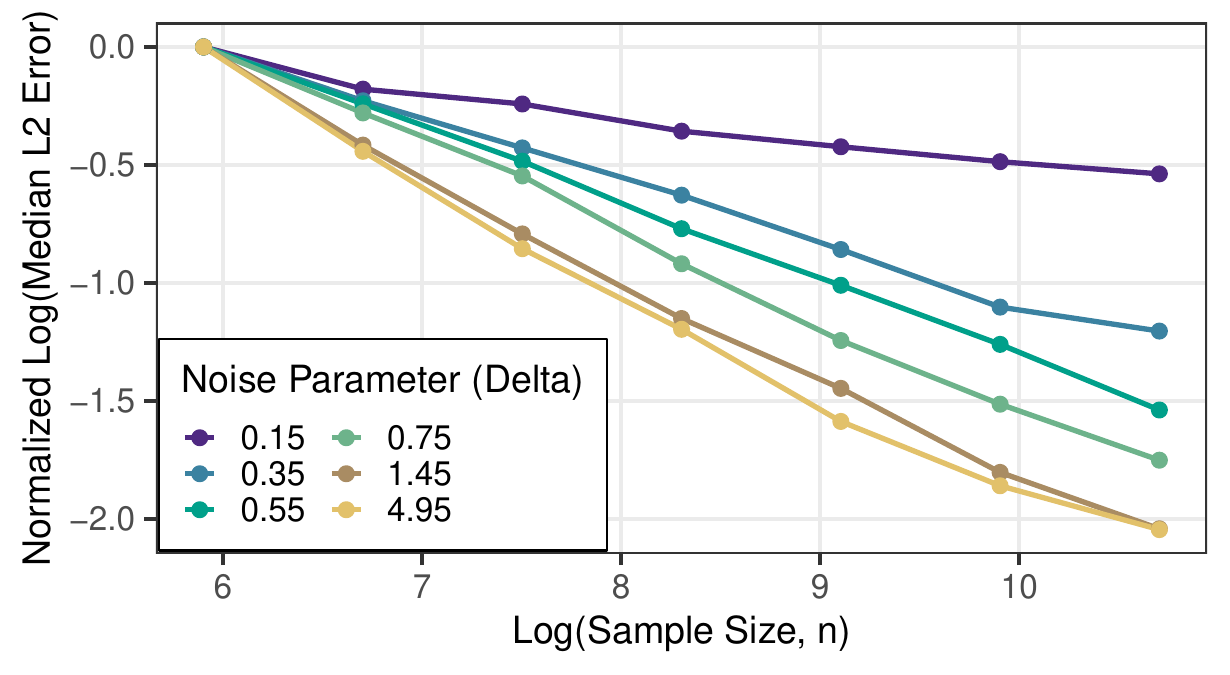}
    \end{minipage}
    \hfill 
    \begin{minipage}[c]{0.35\textwidth}
        \centering
       \vspace{-26pt}
        {\small Empirical vs. Theoretical Slopes\par}
         \vspace{4pt}
        \resizebox{\linewidth}{!}{%
            \begin{tabular}{@{}cccc@{}}
                \toprule
                $\delta$ & Empirical & Theoretical & $R^2$\\
                \midrule
                0.15 & -0.108 & -0.130 & 0.962 \\
                0.35 & -0.259 & -0.259 & 0.995 \\
                0.55 & -0.321 & -0.355 & 1.000 \\
                0.75 & -0.376 & -0.429 & 0.997 \\
                1.45 & -0.427 & -0.500 & 0.995 \\
                4.95 & -0.433 & -0.500 & 0.987 \\
                \bottomrule
            \end{tabular}%
        }
    \end{minipage}
    \vspace{-0.5cm}
    \caption{Estimation error rates of the TS-RIGHT across different $\delta$.}
    \label{fig:rate_plot_linear_regression}
\end{figure}

\vspace{-10pt}
\subsection{Verification of Gradient Stability Barrier}
We verify that when $\Vert\bbm\theta-\bbm\theta^*\Vert_2$ is large, the stability of the gradient estimation is limited by the design tail index $\lambda$, as shown in both \eqref{eq:linear_phase_one_gradient_error} and \eqref{eq:sparse_operator_gradient_lower} for linear regression. We use the MoM estimator with a fixed $\bbm\theta$ away from the true parameter and the whole dataset, and we expect the estimation error to scale as:
\begin{equation}
\log \Vert {\bm g}(\bbm\theta;K;\mathcal D_n)-G(\bbm\theta)\Vert_2 \asymp -\zeta(\lambda)\log n +C,
\end{equation} 
where $\zeta(\lambda)=\min(\lambda/(1+\lambda),1/2)$.

We vary the sample size $n$ over $[300, 9000]$, fix $p=600$, and set the ground-truth parameter to $\bbm\theta^*=(5,-5,6,-6,7,0,\dots,0)$. Features are drawn from a multivariate $t$ distribution with degrees of freedom $\nu\in\{2.2,2.4,2.8,3.2,3.6,22,32\}$, corresponding to moment parameters $\lambda\in\{0.09,0.19,0.39,0.59,0.79,9.99,15.99\}$. The noise follows a standard normal distribution. The gradient is estimated at $\bbm\theta=(0,0,0,0,0,5,-5,6,-6,7,0,\dots,0)$. 

Results are averaged over $200$ independent replications and summarized in Figure \ref{fig:sample_complexity}. The empirical slopes steepen as $\lambda$ increases up to $\lambda=1$ and closely match the theoretical slopes given by $\zeta(\lambda)$, thereby validating the dependence of the gradient stability on the design tail index for early iterations in Phase I.
\begin{figure}[htbp]
    \centering
    \begin{minipage}[c]{0.64\textwidth}
        \centering
        \includegraphics[width=\linewidth]{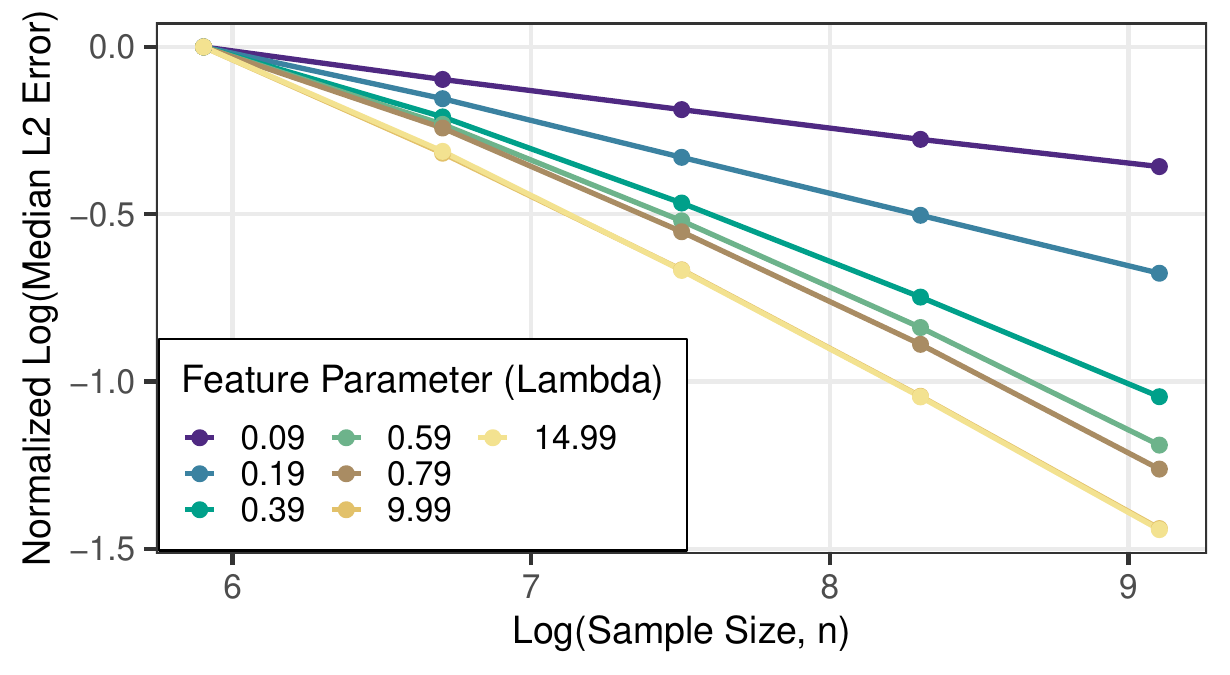}
    \end{minipage}
    \hfill 
    \begin{minipage}[c]{0.35\textwidth}
        \centering
        \vspace{-28pt}
        {\small Empirical vs. Theoretical Slopes\par}
         \vspace{4pt}
        
        \resizebox{\linewidth}{!}{%
        \renewcommand{\arraystretch}{0.8}
    \begin{tabular}{@{}cccc@{}}
        \toprule
        $\lambda$ & Empirical  & Theoretical & $R^2$\\
        \midrule
        0.09 & -0.112  & -0.083 & 0.999\\
        0.19 & -0.213 & -0.160 & 1.000 \\
        0.39 & -0.329 & -0.281 & 0.996 \\
        0.59 & -0.373 & -0.371 & 0.994 \\
        0.79 & -0.396 & -0.441 & 0.994 \\
        9.99 & -0.450 & -0.500 & 0.998 \\
        15.99 & -0.452 & -0.500 & 0.998 \\
        \bottomrule
    \end{tabular}%
}
    \end{minipage}
    \vspace{-0.5cm}
    \caption{Convergence rates of gradient estimation error across different $\lambda$.}
    \label{fig:sample_complexity}
\end{figure}

\vspace{-10pt}
\subsection{Comparative Performance Evaluation}

We benchmark TS-RIGHT against leading robust sparse regression methods: Lasso \citep{tibshirani1996regression}, Adaptive Huber \citep{sun2020adaptive}, Shrinkage Method \citep{fan2021shrinkage}, and standard IHT \citep{jain2014iterative} in linear and logistic regression. Only linear regression results are presented here; logistic regression results are given in Appendix~\ref{append:logistic_comparison}.

We generate data $\mathbf{y} = \mathbf{X}\bbm\theta^* + \bbm\epsilon$ with $p=800$, and varying $n$. To create both heavy-tailed design and noise, we draw rows of $\mathbf{X}$ from a multivariate $t$-distribution with $\nu_x=2.5$ (infinite 4th moments), and $\epsilon$ from a $t$-distribution with $\nu_\epsilon=1.5$ (infinite variance).
Hyperparameters for all methods are tuned via cross-validation on an independent validation set (see details in Appendix~\ref{app:tuning-linear-comparison} of Supplementary Materials).

Figure \ref{fig:linear_comparison} presents the results. Non-robust methods (IHT and Lasso) exhibit wider error distributions than robust methods (Adaptive Huber, Shrinkage, and TS-RIGHT). Specifically, even with careful tuning, estimation errors of IHT display a heavy upper tail, with many runs diverging to large errors due to extreme covariate values generated by the heavy-tailed design. In contrast, TS-RIGHT consistently achieves low estimation errors across most sample sizes. This advantage stems from our gradient-based handling of the design matrix, whereas shrinkage methods may struggle with low design moments and suffer from information loss due to truncation.
\begin{figure}[!htp]
  \centering
     \includegraphics[width=0.75\textwidth,height=6.8cm]{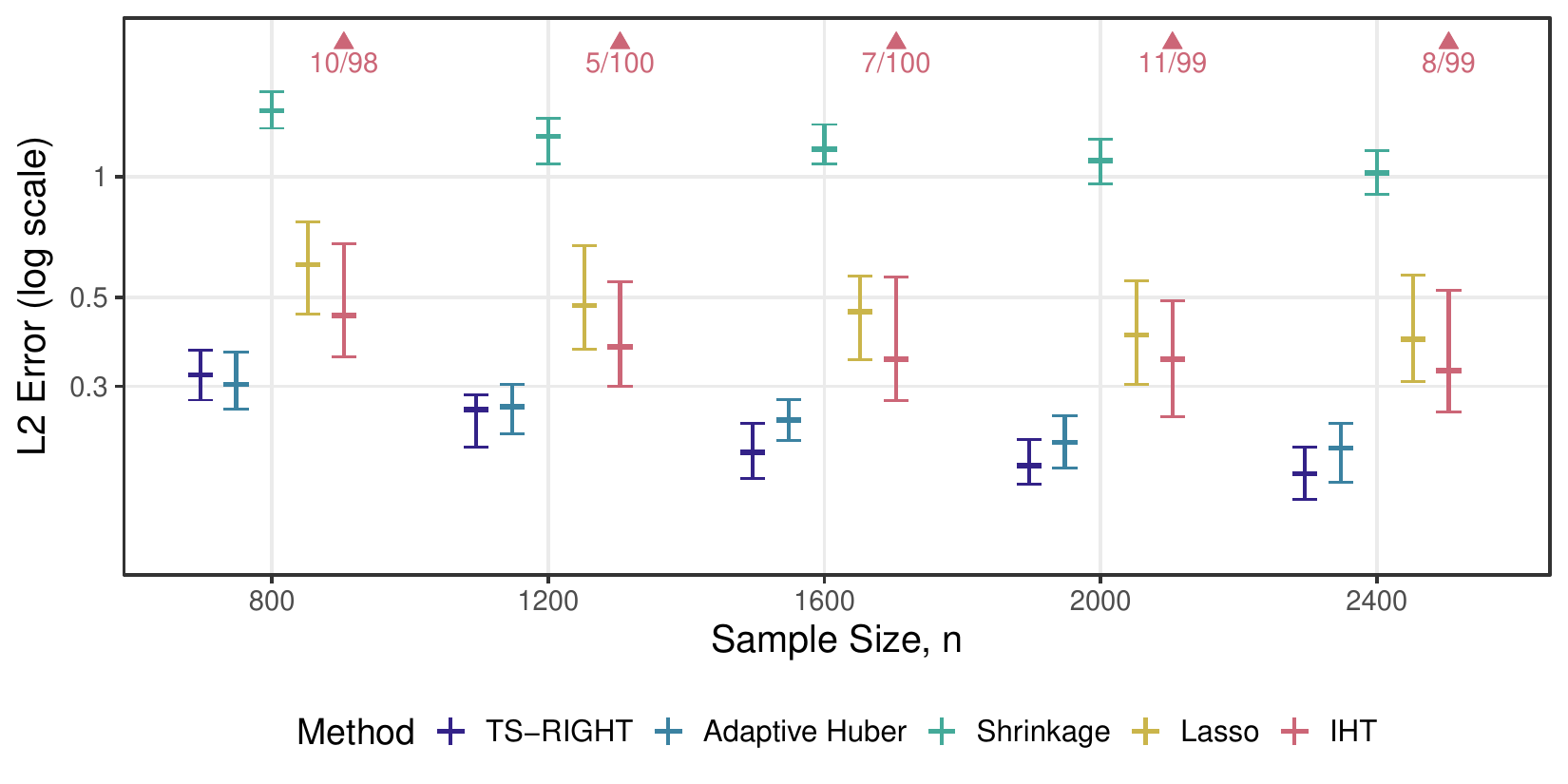}
    \caption{Linear regression performance comparison. Vertical lines show 0.25, 0.75 quantiles; thick center ticks show the median $l_2$ error. IHT triangles mark values above the display cap, where the display cap is set to the 99th percentile of the non-IHT methods' errors.}
     \label{fig:linear_comparison}
 \end{figure}

\vspace{-10pt}
\subsection{Real Data Application}
We used the GDSC:277 Linifanib response task from the GDSC/CCLE cancer drug-response data processed in the RedCDR repository. After filtering missing values, the design matrix $X\in \mathbb R^{524 \times 697}$ contains 697 coding-gene expression features for 524 cancer cell lines, and the response vector \(y\in\mathbb{R}^{524}\) contains the corresponding log-scale IC50 sensitivity scores for Linifanib. The data were obtained from the RedCDR GitHub repository (\url{https://raw.githubusercontent.com/mhxu1998/RedCDR}).

This high-dimensional linear regression task exhibits heavy-tailed design and noise: $84\%$ of features have kurtosis $>3$ ($14\%$ exceed $10$), and the response kurtosis is 30.31. We robustly center the response by its median and standardize all covariates using the median and the median absolute deviation (MAD). The data are randomly partitioned into a training set ($80\%$) and a testing set ($20\%$) and repeated over 100 independent splits. The RMSE results are summarized in Figure~\ref{fig:real_data_comparison}. The distribution of TS-RIGHT is shifted to the left relative to the competing methods, indicating better predictive accuracy on this real-data task. The improvement is most apparent compared with the shrinkage estimator, while Adaptive Huber, IHT, and Lasso show partially overlapping but generally higher RMSE values.

\begin{figure}[!htp]
  \centering
     \includegraphics[width=0.8\textwidth,height=7.3cm]{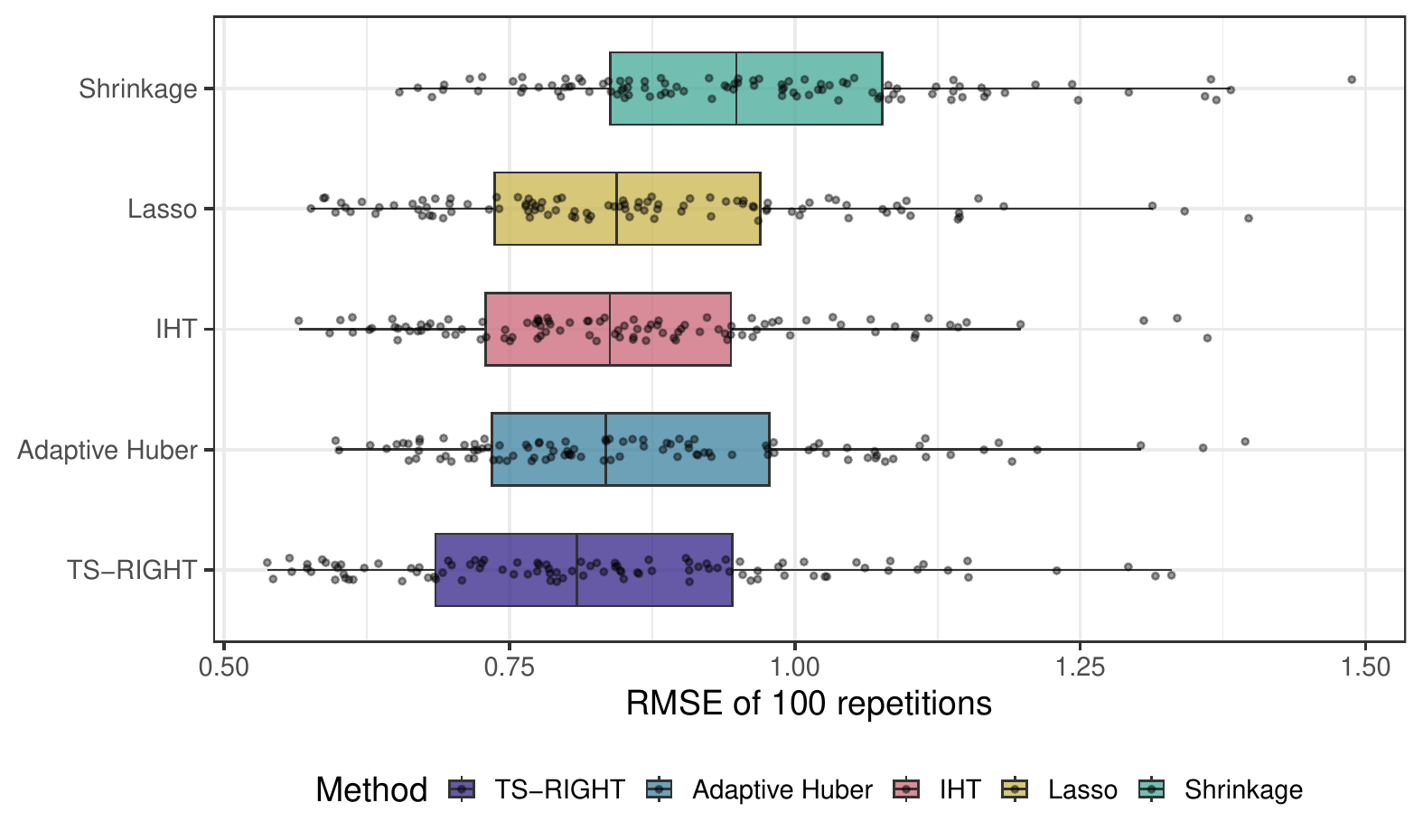}
    \caption{RMSE comparison on the GDSC:277 Linifanib response task. }
     \label{fig:real_data_comparison}
 \end{figure}

\section{Conclusion and Discussion}
\label{sec:conclusion}
We propose a robust and computationally efficient first-order method for heavy-tailed sparse regression. By combining a coordinate-wise median-of-means gradient estimator with a two-stage delayed-splitting strategy, the proposed method achieves near-optimal statistical rates. Identifying two distinct gradient regimes and corresponding optimization phases provides new insights into how heavy tails affect the optimization landscape. While our theory focuses on regression, these principles may extend to other high-dimensional heavy-tailed estimation problems.

There are several avenues for future work. First, while we focus on the sparse vector structure, it would be interesting to extend the framework to other structures, such as low-rank matrices or sparse tensors, by adapting the projection operator $\mathcal P_s(\cdot)$ to pursue the corresponding structure. Second, although our current theory addresses heavy-tailed problems, the proposed methodology may also be useful for other types of data contamination, such as adversarial corruption. Third, more advanced gradient estimation techniques could further improve the statistical rates; any advancement in multivariate mean estimation can be readily integrated into our framework.

\appendix
\onecolumn
\par
\vspace{2\baselineskip} 
\begin{center}
    \large\bfseries Supplementary Material for ``Two-Stage Robust Sparse Gradient Methods for Regression Under Heavy-Tailed Designs''
\end{center}
\par
\vspace{\baselineskip}

The supplementary material contains technical proofs and auxiliary information for the results in the main article. Appendix~\ref{append:A} compares the robust gradient approach with the robust loss approach. Appendix~\ref{append:B} presents the challenges of data dependence in path analysis of robust gradient approach. Appendix~\ref{append:C} presents the proofs of the deterministic convergence analysis of TS-RIGHT algorithm, as well as auxiliary lemmas. Appendices~\ref{append:D} and \ref{append:E} contain proofs of the stochastic analysis for heavy-tailed linear regression and logistic regression, respectively. Appendix~\ref{sec:appendix_minimax} presents proofs of the lower bound results. Implementation details and additional simulation results are presented in Appendix~\ref{append:G}.


\section{Robust Gradient and Robust Loss Methods}
\label{append:A}

\subsection{Elementwise MoM Estimator Is Not Necessarily the Gradient of a Robust Loss}
\label{subsec:right-not-loss-mom-gradient}
We point out that our MoM gradient estimator is not necessarily the gradient of any robust loss function. In particular, we clarify the distinction between the loss-MOM method of \citet{lecue2020robustclassification} and our gradient-based method since both methods apply the median-of-means operator to achieve robustness.

For a loss $L$, let \(B_1,\ldots,B_K\) be a partition of the sample, and write
\[
    \bar L_k(\bbm\theta)
    =
    \frac{1}{|B_k|}
    \sum_{i\in B_k}
    \ell(\bbm\theta;\bm x_i,y_i),
    \qquad
    \bar{\bm G}_k(\bbm\theta)
    =
    \nabla_{\bbm\theta}\bar L_k(\bbm\theta).
\]
The loss-MOM objective in \citet{lecue2020robustclassification} is
\[
    \widehat R_{\mathrm{LMOM}}(\bbm\theta)
    =
   \operatorname{median}_{1\le k\le K}
    \bigl\{\bar L_k(\bbm\theta)\bigr\},
\]
let
\[
    k_L(\bbm\theta)
    \in
    \operatorname{argmed}_{1\le k\le K}
    \bigl\{\bar L_k(\bbm\theta)\bigr\}
\]
be the median block selected by the scalar loss. The gradient of the loss-MOM objective is
\[
    \nabla_{\bbm\theta}
    \widehat R_{\mathrm{LMOM}}(\bbm\theta)
    =
    \bar{\bm G}_{k_L(\bbm\theta)}(\bbm\theta).
\]
Thus the loss-MOM method takes a gradient step using one block selected by
loss centrality.

Our method takes a different approach. It first computes blockwise gradients
and then applies the MoM operator coordinatewise:
\[
    g_{j}(\bbm\theta;K)
    =
   \operatorname{median}_{1\le k\le K}
    \left\{
    [\bar{\bm G}_k(\bbm\theta)]_j
    \right\},
    \qquad j=1,\ldots,p.
\]
Equivalently, for each coordinate \(j\), there exists a possibly
coordinate-specific median-gradient block \(k_j^G(\bbm\theta)\) that achieves the median; however, these blocks need not be the same across coordinates and they need not be the same as the median-loss block \(k_L(\bbm\theta)\).

Thus the coordinatewise MoM gradient generally does not
correspond to the gradient of the loss-MOM objective, and in high dimensions
it need not correspond to the gradient of any ordinary scalar empirical loss.

\subsection{Why Score-Level Robustification Directly Controls Heavy-Tailed Covariates}
\label{subsec:score-level-robustification-examples}

We now give two examples showing why robustifying the score is different from
robustifying the scalar loss.

\noindent\textbf{Example 1: logistic regression with a heavy-tailed inactive coordinate.}

Let \(\bm x=(U,V)^\top\in \mathbb{R}^2\), \(\bbm\theta^*=(\beta,0)^\top\in \mathbb{R}^2\), and \(\beta\neq 0\),
and assume the correctly specified logistic model
\[
    y\mid (U,V)
    \sim
    \operatorname{Bernoulli}\{\sigma(\beta U)\},
    \qquad
    \sigma(t)=\frac{\exp(t)}{1+\exp(t)}.
\]
Assume that \(V\) is heavy-tailed; for instance, \(V\) only has a finite second moment.

At the true parameter, $\ell(\bbm\theta^*;\mathcal D_n)=n^{-1}\sum_{i=1}^{n}(\Phi(\beta U_i)-y_i\beta U_i)$, where $\Phi(t) = \log(1+e^{t})$. The loss does not depend on \(V\).  Consequently, a loss-MOM method selects the
median-loss block
\[
    k_L(\bbm\theta^*)
    \in
    \operatorname{argmed}_{1\le k\le K}
    \left\{
    \frac{1}{|B_k|}
    \sum_{i\in B_k}
    \bigl[\Phi(\beta U_i)-y_i\beta U_i\bigr]
    \right\}
\]
using only the scalar losses.  The selection is blind to the heavy-tailed
coordinate \(V\).

However, the logistic score at \(\bbm\theta^*\) is
\[
    \nabla_{\bbm\theta}
    \ell(\bbm\theta^*;\mathcal D_n)
    =\frac{1}{n}\sum_{i=1}^{n}
    \{\sigma(\beta U_i)-y_i\}
    \begin{pmatrix}
        U_i\\
        V_i
    \end{pmatrix}.
\]
In particular, the second score coordinate is heavy-tailed due to the heavy-tailed \(V\). The loss-MOM gradient step uses
\[
    [\bar{\bm G}_{k_L(\bbm\theta^*)}(\bbm\theta^*)]_2
    =
    \frac{1}{|B_{k_L}|}
    \sum_{i\in B_{k_L(\bbm\theta^*)}}
    \{\sigma(\beta U_i)-y_i\}V_i.
\]
Thus the step direction can be influenced since a single extreme $V_i$ in the selected block \(B_{k_L(\bbm\theta^*)}\)  can dominate the second gradient
coordinate.

Our MoM gradient estimator robustifies the score coordinate itself:
\[
    g_{2}(\bbm\theta^*;K)
    =
   \operatorname{median}_{1\le k\le K}
    \left\{
    \frac{1}{|B_k|}
    \sum_{i\in B_k}
    \{\sigma(\beta U_i)-y_i\}V_i
    \right\}.
\]
Since \(\mathbb E\{\{\sigma(\beta U)-y\}^2V^2\}\) is finite, the MoM estimator achieves sub-Gaussian-type concentration for this coordinate, even though \(V\) is heavy-tailed.

\noindent\textbf{Example 2: Huber regression clips the residual but not the leverage factor.}

Consider the linear model
\[
    y_i=\bm x_i^\top\bbm\theta^*+\epsilon_i,
\]
and the Huber loss $\ell_H(\bbm\theta;\mathcal D_n)=n^{-1}\sum_{i=1}^{n}\rho_\tau(y_i-\bm x_i^\top\bbm\theta)$,
where
\[
    \rho_\tau(u)
    =
    \begin{cases}
        u^2/2, & |u|\le \tau,\\
        \tau |u|-\tau^2/2, & |u|>\tau,
    \end{cases}
    \qquad
    \psi_\tau(u)=\rho_\tau'(u)
    =
    \operatorname{sign}(u)\min\{|u|,\tau\}.
\]
The Huber score is
\[
    \nabla_{\bbm\theta}
    \ell_H(\bbm\theta;\mathcal D_n)
    =
    -\frac{1}{n}\sum_{i=1}^{n}\psi_\tau(y_i-\bm x_i^\top\bbm\theta)\bm x_i.
\]
At the true parameter,
\[
    \nabla_{\bbm\theta}
    \ell_H(\bbm\theta^*;\mathcal D_n)
    =
    -\frac{1}{n}\sum_{i=1}^{n}\psi_\tau(\epsilon_i)\bm x_i.
\]
Thus Huberization clips the residual factor $|\psi_\tau(\epsilon_i)|\le \tau$, 
but it does not clip the design \(\bm x_i\).  For coordinate \(j\),
\[
    \left|
    [\nabla_{\bbm\theta}
    \ell_H(\bbm\theta^*;\mathcal D_n)]_j
    \right|
    =
    \left|\frac{1}{n}\sum_{i=1}^{n}\psi_\tau(\epsilon_i)x_{ij}\right|
    \le
    \frac{\tau}{n} \sum_{i=1}^{n} |x_{ij}|,
\]
which can still inherit heavy-tail behavior from the design.

Our method addresses this object directly by applying MoM to the score coordinate:
\[
    g_{j}(\bbm\theta^*;K)
    =
   \operatorname{median}_{1\le k\le K}
    \left\{
    -\frac{1}{|B_k|}
    \sum_{i\in B_k}
    \epsilon_i x_{ij}
    \right\}.
\]
This estimator robustifies the mean of the score coordinate $\epsilon_i x_{ij}$ directly.


\section{Challenges of Dependence in Path Analysis}
\label{append:B}

Appendix~\ref{append:A} shows that the TS-RIGHT update is based on a robust gradient vector field, and that this vector field is not necessarily the gradient of a fixed empirical loss. We now explain why this distinction necessitates a pathwise analysis. We first recall the standard proof architecture for a regularized empirical-loss minimizer, and then explain why this architecture breaks down for TS-RIGHT. Finally, we discuss the key data-dependence issue that arises in pathwise analysis.

\subsection{Why Pathwise Analysis Is Needed}
\subsubsection{Why Empirical-Loss Minimizers Admit One-Shot Analysis}
We first recall the standard proof architecture for LASSO. Consider the sparse linear model with the empirical least-squares loss
\[
    \mathcal L_n(\bbm\theta)
    =
    \frac{1}{2n}\|\bm y-\bm X\bbm\theta\|_2^2 .
\]
The LASSO estimator is defined as a global minimizer of a penalized empirical
objective:
\[
    \widehat{\bbm\theta}_{\rm L}
    \in
    \arg\min_{\bbm\theta\in\mathbb R^p}
    \left\{
        \mathcal L_n(\bbm\theta)+\lambda\|\bbm\theta\|_1
    \right\}.
\]
Let \(\bbm\Delta=\widehat{\bbm\theta}_{\rm L}-\bbm\theta^*\). Since
\(\widehat{\bbm\theta}_{\rm L}\) is a global minimizer,
\begin{equation}\label{lasso_basic_inequality}
    \mathcal L_n(\widehat{\bbm\theta}_{\rm L})-\mathcal L_n(\bbm\theta^*)
    \le
    \lambda\|\bbm\theta^*\|_1 -\lambda\|\widehat{\bbm\theta}_{\rm L}\|_1.
\end{equation}
We give an informal proof sketch for the estimation error upper bound.

Under the restricted strong convexity (RSC) condition with parameter $\kappa_1$, the left-hand side is bounded from below by $\kappa_1\|\bbm\Delta\|_2^2$. Using the decomposability of the \(\ell_1\)-norm, the right-hand side is bounded by $\lambda\|\bbm\Delta_{S^*}\|_1-\lambda\|\bbm\Delta_{(S^*)^c}\|_1$, which can be further bounded by $\lambda\|\bbm\Delta_{S^*}\|_1$. By the Cauchy-Schwarz inequality, $\lambda\|\bbm\Delta_{S^*}\|_1\le \sqrt{s^*}\lambda\|\bbm\Delta\|_2$. Thus, dividing both sides by $\|\bbm\Delta\|_2$ yields an upper bound on $\|\bbm\Delta\|_2$.

\subsubsection{What Fails for Robust Gradient Methods}
\label{append:B:what_fails_for_right}

The preceding route relies on the existence of a scalar empirical objective whose
global minimizer is the estimator. This structure is not available for TS-RIGHT.
As discussed in Appendix~\ref{append:A}, the coordinatewise MoM gradient
vector field is generally not the gradient of a scalar empirical loss. That is,
one cannot in general introduce an empirical loss \(\widetilde{\mathcal L}_n\)
such that
\[
    g(\bbm\theta;K,\mathcal A)
    =
    \nabla \widetilde{\mathcal L}_n(\bbm\theta)
    \qquad
    \text{for all } \bbm\theta .
\]
More importantly, the TS-RIGHT output is not defined as
\[
    \widehat{\bbm\theta}_{\rm TS-RIGHT}
    \in
    \arg\min_{\bbm\theta}
    \left\{
        \widetilde{\mathcal L}_n(\bbm\theta)
        +
        \lambda\|\bbm\theta\|_1
    \right\}.
\]
Instead, it is produced by the hard-thresholded robust-gradient recursion
\begin{equation}\label{eq:right_path_update_appendix}
    \bbm\theta^{t+1}
    =
    \mathcal P_s
    \left\{
        \bbm\theta^t-\eta g(\bbm\theta^t;K_{t+1},\mathcal A_{t+1})
    \right\}.
\end{equation}
Consequently, the analogue of the LASSO basic inequality
\eqref{lasso_basic_inequality} stated below
\[
    Q_n(\widehat{\bbm\theta}_{\rm TS-RIGHT})
    \le
    Q_n(\bbm\theta^*),
    \qquad
    Q_n(\bbm\theta)
    =
    \widetilde{\mathcal L}_n(\bbm\theta)
    +
    \lambda\|\bbm\theta\|_1,
\]
is not available.

This also means that the LASSO cone is not obtained. Hard thresholding gives
sparsity of the iterates,
\[
    \|\bbm\theta^t\|_0\le s,
    \qquad
    \|\bbm\theta^*\|_0\le s^*,
    \qquad
    \|\bbm\theta^t-\bbm\theta^*\|_0\le s+s^*,
\]
but it does not imply the optimizer-induced cone condition
\[
    \|(\bbm\theta^T-\bbm\theta^*)_{(S^*)^c}\|_1
    \le
    3\|(\bbm\theta^T-\bbm\theta^*)_{S^*}\|_1 .
\]
Thus the sparse geometry in TS-RIGHT is not the LASSO cone geometry but the active sparse set used in the hard thresholding step.

The analysis must instead control the update map
\eqref{eq:right_path_update_appendix} along the realized sparse path
\(\{\bbm\theta^t\}_{t\ge0}\). This is the entry point for the pathwise analysis
developed below.

\subsection{The Data Dependence Problem}

The key difficulty for pathwise analysis of robust gradient methods
is that the data used to compute the gradient at the current iterate may intersect with the data used to generate the current iterate. We demonstrate this issue by considering the $(t+1)$-th iterate. 

The update is
\[
    \bbm\theta^{t+1}
    =
    \mathcal P_s
    \left\{
        \bbm\theta^t-\eta g(\bbm\theta^t;K_{t+1},\mathcal A_{t+1})
    \right\}.
\]

By induction, $\bbm\theta^t$ is a function of the data in $\mathcal A_1,\ldots,\mathcal A_t$. If $\mathcal A_{t+1}$ intersects with any of $\mathcal A_1,\ldots,\mathcal A_t$, or, in the extreme case, if $\mathcal A_1=\dots=\mathcal A_t=\mathcal A_{t+1}=\mathcal D_n$, then $\bbm\theta^t$ depends on the data $\mathcal A_{t+1}$ used to compute $g(\bbm\theta^t;K_{t+1},\mathcal A_{t+1})$. This data dependence distorts the usual MoM concentration argument for the gradient estimator, which relies on the randomness of $\mathcal A_{t+1}$ to control the estimation error. More specifically,
\[g_j(\bbm\theta^t;K_{t+1},\mathcal A_{t+1})=\operatorname{median}_{1\le k\le K_{t+1}}\left[\frac{1}{|\mathcal B_k(\mathcal A_{t+1})|}\sum_{Z_i\in\mathcal B_k(\mathcal A_{t+1})}\nabla_j\mathcal L(\bbm\theta^t;Z_i)\right].\]

Take the linear regression example for instance. The score coordinate is
\[
    \nabla_j\mathcal L(\bbm\theta^t;Z_i)
    =
    (y_i-\bm x_i^\top\bbm\theta^t)x_{ij}
    =x_{ij}\bm x_i^\top(\bbm\theta^*-\bbm\theta^t)+x_{ij}\epsilon_i.
\]

Typically, the concentration of the MoM estimator relies on the concentration of the block means. However, when \(\bbm\theta^t\) depends on the data \(\mathcal A_{t+1}\), the moments of $|\nabla_j\mathcal L(\bbm\theta^t;Z_i)-G(\bbm \theta^t)|$ cannot be directly controlled due to their complex data dependence. Consequently, the concentration of the block means cannot be established.
Thus, the usual MoM concentration argument does not apply to control the estimation error of \(g_j(\bbm\theta^t;K_{t+1},\mathcal A_{t+1})\) from $G(\bbm\theta^t)$.

\subsection{No-Splitting Strategy Needs Uniform Control}
If one sets $\mathcal A_1=\dots=\mathcal A_t=\mathcal A_{t+1}=\mathcal D_n$, i.e., no sample splitting is used, the aforementioned data dependence issue arises. We demonstrate how the uniform control approach can address this issue, and explain why it leads to suboptimal statistical rates.

The uniform control approach requires concentration of the gradient estimator over the entire sparse parameter space. More specifically, it requires that
$|g(\bbm\theta;K,\mathcal D_n) - G(\bbm\theta)|$ is bounded for all $s$-sparse $\bbm\theta$ with high probability.

Since all iterates are $s$-sparse, the uniform control ensures that $|g(\bbm\theta^t;K,\mathcal D_n)-G(\bbm\theta^t)|$ is controlled for all $t$ with the same high probability, allowing the convergence of the algorithm to be established.

However, this approach requires a much stronger uniform concentration result, as it considers the entire sparse parameter space, which has a significantly larger complexity than the realized sparse path. See Proposition \ref{prop:linear_mom_srs}, where uniform control requires $K\gtrsim \bar s \log p$, while fixed-point control only requires $K\gtrsim \log p$. This spatial complexity ultimately leads to a suboptimal statistical rate. 

\subsection{The Equal Sample Splitting Solution}
We now explain how the equal sample splitting strategy used by \cite{balakrishnan2017statistical, prasad2020robust,liu2019high} provides another way to address the dependence issue, and why equal sample splitting is sample inefficient.

The equal sample splitting design ensures that the data used to compute the gradient at the current iterate is independent of the data used to generate the iterate itself. In particular, $\{\mathcal A_t\}_{t=1}^T$ is a disjoint partition of the data, and each $\mathcal A_t$ is only used to generate the $t$-th iterate $\bbm\theta^t$. Thus, 
for the $(t+1)$-th iterate, the gradient is computed using $\mathcal A_{t+1}$, which is independent of $\mathcal A_1,\ldots,\mathcal A_t$ used to generate $\bbm\theta^t$. This independence enables the use of conditional concentration arguments. Specifically, conditional on $\mathcal A_1, \mathcal A_2,\dots,\mathcal A_t$, the iterate $\bbm\theta^t$ is fixed, and all the randomness in $|\nabla_j\mathcal L(\bbm\theta^t;Z_i)-G(\bbm \theta^t)|$ originates from the randomness of $Z_i \in \mathcal A_{t+1}$. Thus, its moments can be controlled through the assumptions on $\bm x$ and $\epsilon$. The conditional concentration of the block means can then be established, i.e.,
\[
  \mathcal P\left\{\left|\frac{1}{|\mathcal B_k(\mathcal A_{t+1})|}\sum_{Z_i\in\mathcal B_k(\mathcal A_{t+1})}\nabla_j\mathcal L(\bbm\theta^t;Z_i)
    -
    G_j(\bbm\theta^t)\right|\text{ is controlled} \;\middle|\; \mathcal A_1, \mathcal A_2,\dots,\mathcal A_t\right\}
    \ge
    1-\xi.
\]
The tower property ensures that the unconditional probability of this event is also at least \(1-\xi\). This allows the MoM concentration argument to be successfully applied to bound the estimation error of the gradient estimator.

The drawback of equal sample splitting is that the size of each data split, $|\mathcal A_{t+1}|=n/T$, is small, where $T$ is the total number of iterations. This leads to a loss in statistical efficiency because each gradient estimate relies on only a fraction of the available data.


\section{Proofs for the Deterministic Convergence Analysis}
\label{append:C}

In this section, we provide detailed proofs for the algorithmic convergence analysis presented in the main article. We first establish the connections among different forms of the SRS condition, which serves as a foundation for transitioning from fixed-point and uniform SRS to pathwise SRS. Subsequently, we present the step-by-step proofs for the main convergence theorems of our proposed algorithms.

\subsection{Conditional Fixed-Point SRS and Pathwise SRS}
\label{append:srs_to_pathwise}
We establish the connection between uniform SRS, fixed-point SRS, and pathwise SRS in the following lemma.

\begin{lemma}[Routes to pathwise SRS]
\label{lem:srs_to_pathwise}
The pathwise SRS condition can be established through either of the following two routes.

\emph{(1) Uniform-to-pathwise route.}
Let \(\{\bbm\theta^t\}_{t\in\mathcal I}\subset\Theta_s\) be an \(s\)-sparse
sequence of iterates. Suppose \(g(\cdot;K,\mathcal A)\) satisfies uniform
\((\Theta_s,\bar s,\phi,\gamma)\)-SRS. Let \(\mathcal J=\{0\}\), \(\mathcal A_0=\mathcal A\), and \(\pi(t)=0\) for all
\(t\in\mathcal I\). Then the scheduled gradient evaluations satisfy pathwise
\((\mathcal I,\pi,\bar s,\phi,\gamma)\)-SRS. Consequently, if the uniform SRS
event holds with probability at least \(1-\alpha\), then the induced pathwise
SRS event also holds with probability at least \(1-\alpha\).

\emph{(2) Independent-batch fixed-point route.}
Consider Phase II of Algorithm~\ref{alg:two_stage_right}. Let
\[
\mathcal I_2=\{T_1,\ldots,T_1+T_2-1\},
\qquad
\mathcal J_2=\{1,\ldots,T_2\},
\qquad
\pi_2(t)=t-T_1+1 .
\]
Let \(\mathcal F_{\ell-1}\) be the information generated before the batch
\(\mathcal D^{(2)}_\ell\) is used. Suppose
\(\bbm\theta^{T_1+\ell-1}\in\Theta_s\) is \(\mathcal F_{\ell-1}\)-measurable
and \(\mathcal D^{(2)}_\ell\) is independent of \(\mathcal F_{\ell-1}\). If,
for every deterministic \(\bbm\theta\in\Theta_s\), $\mathsf{SRS}(\bbm\theta;K_2,\mathcal D^{(2)}_\ell,\bar s,\phi_2,\gamma_2)$ holds with probability at least \(1-\alpha_\ell\) over
\(\mathcal D^{(2)}_\ell\), then the scheduled Phase-II gradient evaluations
satisfy pathwise \((\mathcal I_2,\pi_2,\bar s,\phi_2,\gamma_2)\)-SRS with
probability at least $1-\sum_{\ell=1}^{T_2}\alpha_\ell$.
\end{lemma}

\begin{proof}
For part \emph{(1)}, the uniform SRS event implies
\(\mathsf{SRS}(\bbm\theta;K,\mathcal A,\bar s,\phi,\gamma)\) for every
\(\bbm\theta\in\Theta_s\). Since every iterate \(\bbm\theta^t\) belongs to
\(\Theta_s\), the same event implies
\(
\mathsf{SRS}(\bbm\theta^t;K,\mathcal A,\bar s,\phi,\gamma)
\)
for every \(t\in\mathcal I\). This is exactly pathwise
\((\mathcal I,\pi,\bar s,\phi,\gamma)\)-SRS with
\(\pi(t)=0\). The probability statement follows because no additional event is
introduced.

For part \emph{(2)}, define
\(
E_\ell
=
\mathsf{SRS}
\bigl(
\bbm\theta^{T_1+\ell-1};
K_2,\mathcal D^{(2)}_\ell,\bar s,\phi_2,\gamma_2
\bigr)
\) for $\ell=1,\ldots,T_2$.
Conditioning on \(\mathcal F_{\ell-1}\), the iterate
\(\bbm\theta^{T_1+\ell-1}\) is fixed, while
\(\mathcal D^{(2)}_\ell\) is independent of \(\mathcal F_{\ell-1}\). Therefore,
by the fixed-point SRS assumption applied to the realized deterministic value of
\(\bbm\theta^{T_1+\ell-1}\),
\[
\mathbb P(E_\ell^c\mid \mathcal F_{\ell-1})\le \alpha_\ell .
\]
Taking expectations gives \(\mathbb P(E_\ell^c)\le \alpha_\ell\). Hence, by the
union bound,
\[
\mathbb P\left(\bigcap_{\ell=1}^{T_2}E_\ell\right)
\ge
1-\sum_{\ell=1}^{T_2}\mathbb P(E_\ell^c)
\ge
1-\sum_{\ell=1}^{T_2}\alpha_\ell .
\]
Since \(E_\ell\) is exactly the SRS event for the scheduled pair
\(t=T_1+\ell-1\) and \(\pi(t)=\ell\), the intersection
\(\bigcap_{\ell=1}^{T_2}E_\ell\) is the desired pathwise SRS event.
\end{proof}

\subsection{Restricted Curvature and SRCG}
\label{append:SRCG}
We compare the SRCG condition with the commonly used restricted strong convexity and restricted strong smoothness (RSC/RSS) conditions in the literature \cite{jain2017non}. SRCG is imposed directly on the population gradient, while RSC/RSS are conditions on the population risk surface. When the population risk exists, RSC/RSS imply SRCG as shown in the lemma below. However, SRCG can hold even when the population risk is infinite.
\begin{lemma}[SRCG Implied by RSC/RSS]
    \label{lem:RSS_RSC_imply_SRCG}
    If the population risk $\mathcal{R}(\bbm{\theta})$ exists and satisfies $\beta$-restricted strong smoothness (RSS) and $\alpha$-restricted strong convexity (RSC), i.e., 
    \begin{equation}
        \frac{\alpha}{2}\|\bbm{\theta}_2-\bbm{\theta}_1\|_2^2 \leq \mathcal{R}(\bbm{\theta}_1) - \mathcal{R}(\bbm{\theta}_2) - \langle\nabla\mathcal{R}(\bbm{\theta}_2),\bbm{\theta}_1-\bbm{\theta}_2\rangle \leq \frac{\beta}{2}\|\bbm{\theta}_1-\bbm{\theta}_2\|_2^2,
    \end{equation}
    over the set of $(2s+s^*)$-sparse vectors $\bbm{\theta}_1$ and $\bbm{\theta}_2$, then the population score map \(G=\nabla \mathcal R\) satisfies the \((s,a,b)\)-SRCG condition with $a=(2\beta)^{-1}$ and $b=\alpha/2$.
\end{lemma}

\begin{proof}[Proof of Lemma \ref{lem:RSS_RSC_imply_SRCG}]
	By the definitions of $\alpha$-RSC and $\beta$-RSS on the set of vectors
with sparsity at most $2s+s^*$, for any $\bbm{\theta}_1$ and $\bbm{\theta}_2$ satisfying $\|\bbm{\theta}_i\|_0\le 2s+s^*$, $i=1,2$, we have
	\begin{equation}\label{eq:RSC}
		\mathcal{R}(\bbm{\theta}_1)-\mathcal{R}(\bbm{\theta}_2)\ge\frac{\alpha}{2}\lVert \bbm{\theta}_1 - \bbm{\theta}_2 \rVert_2^2 +\left\langle\bbm{\theta}_1-\bbm{\theta}_2,\nabla \mathcal{R}(\bbm{\theta}_2)\right\rangle,
	\end{equation}
	and
	\begin{equation}\label{eq:RSS}
		\mathcal{R}(\bbm{\theta}_1)-\mathcal{R}(\bbm{\theta}_2)\le \frac{\beta}{2}\lVert \bbm{\theta}_1 - \bbm{\theta}_2 \rVert_2^2+ \left\langle\bbm{\theta}_1-\bbm{\theta}_2,\nabla \mathcal{R}(\bbm{\theta}_2)\right\rangle,
	\end{equation}
	respectively. 

	Setting $\bbm{\theta}_1=\bbm{\theta}^*$ and $\bbm{\theta}_2=\bbm{\theta}$ in \eqref{eq:RSC}, we have
	\begin{equation}
		\left\langle\nabla \mathcal{R}(\bbm{\theta}),\bbm{\theta}-\bbm{\theta}^*\right\rangle \ge \mathcal{R}(\bbm{\theta})-\mathcal{R}(\bbm{\theta}^*)+\frac{\alpha}{2}\lVert \bbm{\theta} - \bbm{\theta}^* \rVert_2^2.
	\end{equation}
	Similarly, setting $\bbm{\theta}_1=\bbm{\theta}-\beta^{-1}\nabla \mathcal{R}(\bbm{\theta})_S$ and $\bbm{\theta}_2=\bbm{\theta}$ in \eqref{eq:RSS}, we have
	\begin{equation}
		\mathcal R(\bbm{\theta})-\mathcal R(\bbm{\theta}-\beta^{-1}\nabla \mathcal{R}(\bbm{\theta})_S)\ge \frac{1}{2\beta}\Vert\nabla\mathcal R(\bbm{\theta})_S\Vert_2^2.
	\end{equation}
	Noting that $G(\bbm{\theta}^*)=0$, by \eqref{eq:RSC} we know that $\mathcal R(\bbm{\theta}^*)\le \mathcal R(\bbm{\theta})$ for any $\bbm{\theta}$. Therefore,
	\begin{equation}
		\mathcal R(\bbm \theta)-\mathcal R(\bbm{\theta}-\beta^{-1}\nabla \mathcal{R}(\bbm{\theta})_S)\le \mathcal R(\bbm{\theta})-\mathcal R(\bbm{\theta}^*),
	\end{equation}

	we have
	\begin{equation}
		\left\langle\nabla \mathcal{R}(\bbm{\theta}),\bbm{\theta}-\bbm{\theta}^*\right\rangle \ge \frac{1}{2\beta}\Vert\nabla\mathcal R(\bbm \theta)_S\Vert_2^2+\frac{\alpha}{2}\lVert \bbm{\theta} - \bbm{\theta}^* \rVert_2^2,
	\end{equation}
	which implies that the $(s,1/(2\beta),\alpha/2)$-SRCG condition holds.
    
\end{proof}

\subsection{Hard-Thresholding Inequality}
In this subsection, we present an auxiliary lemma used in the following proof. This lemma shows that the hard-thresholding operator is nearly non-expansive when the working sparsity level sufficiently exceeds the true sparsity. The result follows from Lemma 3.3 of \cite{li2016nonconvex}; we restate it below for completeness.

\begin{lemma}[Vector Hard Thresholding Property]\label{supp:lemma:property_of_hard_thresholding}
	Let $\bbm{\theta}^*\in \mathbb R^p$ and $\Vert \bbm{\theta}^* \Vert_0 \leq k^*$. For any $k > k^*$ and any vector $\bbm{\theta} \in \mathbb{R}^p$, the following inequality holds:
	\begin{equation}
		\lVert P_k(\bbm{\theta}) - \bbm{\theta}^* \rVert_2^2 \leq \left( 1 + \frac{2\sqrt{k^*}}{\sqrt{k - k^*}} \right) \lVert \bbm{\theta} - \bbm{\theta}^* \rVert _2^2.
	\end{equation}
\end{lemma}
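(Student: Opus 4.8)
The plan is a short, self-contained argument built entirely around the ``top-$k$'' structure of $P_k$. First I would dispose of the trivial case: if $\bbm{\theta}$ has fewer than $k$ nonzero entries then $P_k(\bbm{\theta})=\bbm{\theta}$ and the claimed inequality is immediate since the multiplicative factor exceeds $1$. So assume $\hat T\coloneqq\operatorname{supp}(P_k(\bbm{\theta}))$ has $|\hat T|=k$; set $T\coloneqq\operatorname{supp}(\bbm{\theta}^*)$ with $|T|\le k^*$, write $\bm u\coloneqq\bbm{\theta}-\bbm{\theta}^*$, and introduce the block sets $D\coloneqq T\setminus\hat T$ and $E\coloneqq\hat T\setminus T$. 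A counting step gives $|D|\le k^*$ and $|E|\ge k-k^*$, and since $|T|\le k^*<k$ one also has $|D|<|E|$; this last inequality is what ultimately pins down the exact constant.

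Next I would decompose the error over the three relevant index blocks. Because $P_k(\bbm{\theta})$ vanishes off $\hat T$ while $\bbm{\theta}^*$ vanishes off $T$, a coordinate-wise computation gives
\[
\|P_k(\bbm{\theta})-\bbm{\theta}^*\|_2^2=\|\bm u_{\hat T}\|_2^2+\|\bbm{\theta}^*_D\|_2^2\le\big(\|\bm u\|_2^2-\|\bm u_D\|_2^2\big)+\|\bbm{\theta}^*_D\|_2^2,
\]
where the inequality uses $D\subseteq\hat T^c$. The triangle inequality on $D$ yields $\|\bbm{\theta}^*_D\|_2\le\|\bbm{\theta}_D\|_2+\|\bm u_D\|_2$, so after expanding the square the $\|\bm u_D\|_2^2$ terms cancel and the task reduces to bounding $\|\bbm{\theta}_D\|_2^2+2\|\bbm{\theta}_D\|_2\|\bm u_D\|_2$ by $\tfrac{2\sqrt{k^*}}{\sqrt{k-k^*}}\|\bm u\|_2^2$.

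The key step is controlling $\|\bbm{\theta}_D\|_2$. By definition of $P_k$, every coordinate outside $\hat T$ (in particular every coordinate in $D$) has magnitude at most that of every coordinate in $\hat T$ (in particular in $E$); averaging this comparison over $E$ gives $\|\bbm{\theta}_D\|_2^2\le\tfrac{|D|}{|E|}\|\bbm{\theta}_E\|_2^2=\tfrac{|D|}{|E|}\|\bm u_E\|_2^2$, the equality because $\bbm{\theta}^*$ vanishes on $E\subseteq\hat T\setminus T$. Writing $\rho^2\coloneqq|D|/|E|$, $p\coloneqq\|\bm u_D\|_2$, $q\coloneqq\|\bm u_E\|_2$, I then have $\|\bbm{\theta}_D\|_2\le\rho q$, $\rho\le1$ (from $|D|<|E|$), $\rho^2\le k^*/(k-k^*)$, and $p^2+q^2\le\|\bm u\|_2^2$ by disjointness of $D$ and $E$. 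A chain of elementary inequalities — $\rho^2q^2\le\rho q^2$, then $2\rho pq\le\rho(p^2+q^2)$, then $\rho p^2\le 2\rho p^2$ — bounds $\|\bbm{\theta}_D\|_2^2+2\|\bbm{\theta}_D\|_2\|\bm u_D\|_2$ by $2\rho(p^2+q^2)\le2\rho\|\bm u\|_2^2\le\tfrac{2\sqrt{k^*}}{\sqrt{k-k^*}}\|\bm u\|_2^2$, which closes the argument. The proof is routine; the only point needing care is this final bookkeeping, which must simultaneously use $\rho\le1$ and $\rho^2\le k^*/(k-k^*)$ to land precisely on the advertised constant, and the former relies on the counting fact $|D|<|E|$, i.e., that the working sparsity strictly exceeds the true sparsity.
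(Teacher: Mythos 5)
Your proof is correct: the support decomposition $\lVert P_k(\bbm\theta)-\bbm\theta^*\rVert_2^2=\lVert \bm u_{\hat T}\rVert_2^2+\lVert\bbm\theta^*_D\rVert_2^2$, the magnitude comparison $\lVert\bbm\theta_D\rVert_2^2\le(|D|/|E|)\lVert\bm u_E\rVert_2^2$, the counting facts $|D|\le k^*$, $|E|\ge k-k^*$, $|D|<|E|$, and the final chain using both $\rho\le 1$ and $\rho\le\sqrt{k^*}/\sqrt{k-k^*}$ all check out and land exactly on the stated constant. Note that the paper itself does not prove this lemma—it imports it verbatim as Lemma 3.3 of \cite{li2016nonconvex}—and your argument is essentially the standard proof from that literature, so you have simply supplied the omitted details correctly.
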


\subsection{Proof of Lemma~\ref{lem:one_stage_contraction_pathwise_srs}}
\begin{proof}
The proof is deterministic on the event that the pathwise SRS condition holds. For \(t=t_0,\ldots,t_0+T-1\), write $g_t(\bbm\theta^t)=g(\bbm\theta^t;K,\mathcal A_{\pi(t)})$. Let
 $e_t=g_t(\bbm\theta^t)-G(\bbm\theta^t)$, and $D_t=\|\bbm\theta^t-\bbm\theta^*\|_2$.

Let $S^t=\operatorname{supp}(\bbm\theta^t)$, $S^*=\operatorname{supp}(\bbm\theta^*)$, and $\widetilde S^t=S^t\cup S^{t+1}\cup S^* $.
Since \(\bbm\theta^{t_0}\) is \(s\)-sparse and the update is hard-thresholded at sparsity level \(s\), all iterates are \(s\)-sparse. Hence
\[
|\widetilde S^t|\le 2s+s^*=\bar s .
\]
Moreover, \(\operatorname{supp}(\bbm\theta^t)\cup\operatorname{supp}(\bbm\theta^*)\subseteq \widetilde S^t\), so the SRCG condition can be applied with \(S=\widetilde S^t\).

Define $u^t=\bbm\theta^t-\eta g_t(\bbm\theta^t)$. By definition, \(\widetilde S^t\) contains the support selected by \(\mathcal P_s(u^t)\). Therefore, zeroing out coordinates outside \(\widetilde S^t\) does not change the hard-thresholded vector, thus $\bbm\theta^{t+1}=\mathcal P_s(u^t_{\widetilde S^t})$. 
By the hard-thresholding inequality in Lemma~\ref{supp:lemma:property_of_hard_thresholding}, applied with \(k=s\) and \(k^*=s^*\),
\begin{align}
\|\bbm\theta^{t+1}-\bbm\theta^*\|_2
&\le
\left(1+\frac{2\sqrt{s^*}}{\sqrt{s-s^*}}\right)^{1/2}
\left\|
\bbm\theta^t_{\widetilde S^t}
-\eta g_t(\bbm\theta^t)_{\widetilde S^t}
-\bbm\theta^*_{\widetilde S^t}
\right\|_2 .
\label{eq:supp_one_stage_ht}
\end{align}
For simplicity, let $q_s={2\sqrt{s^*}}/{\sqrt{s-s^*}}$.

We next control the norm on the right-hand side of \eqref{eq:supp_one_stage_ht}. Since $g_t(\bbm\theta^t)=G(\bbm\theta^t)+e_t $,

the triangle inequality gives
\begin{equation}
\left\| \bbm\theta^t_{\widetilde S^t} -\eta g_t(\bbm\theta^t)_{\widetilde S^t} -\bbm\theta^*_{\widetilde S^t} \right\|_2 \le \left\| (\bbm\theta^t-\bbm\theta^*)_{\widetilde S^t} -\eta G(\bbm\theta^t)_{\widetilde S^t} \right\|_2 + \eta\|e_{t,\widetilde S^t}\|_2 .
\label{eq:supp_one_stage_split}
\end{equation}

For the first term in \eqref{eq:supp_one_stage_split}, using
\(\operatorname{supp}(\bbm\theta^t-\bbm\theta^*)\subseteq \widetilde S^t\), we have
\begin{equation}
\left\|
(\bbm\theta^t-\bbm\theta^*)_{\widetilde S^t}
-\eta G(\bbm\theta^t)_{\widetilde S^t}
\right\|_2^2
\nonumber\\
 =
D_t^2
+
\eta^2\|G(\bbm\theta^t)_{\widetilde S^t}\|_2^2
-
2\eta
\left\langle
G(\bbm\theta^t),
\bbm\theta^t-\bbm\theta^*
\right\rangle .
\end{equation}
By the \((s,a,b)\)-SRCG condition,
\[
\left\langle
G(\bbm\theta^t),
\bbm\theta^t-\bbm\theta^*
\right\rangle
\ge
a\|G(\bbm\theta^t)_{\widetilde S^t}\|_2^2
+
bD_t^2 .
\]
Therefore,
\begin{equation}
\left\|
(\bbm\theta^t-\bbm\theta^*)_{\widetilde S^t}
-\eta G(\bbm\theta^t)_{\widetilde S^t}
\right\|_2^2
\nonumber\\
\le
(1-2\eta b)D_t^2
+
(\eta^2-2\eta a)
\|G(\bbm\theta^t)_{\widetilde S^t}\|_2^2 .
\end{equation}
Since \(\eta=2a\eta_0\) and \(\eta_0\in(0,1]\),
\[
\eta^2-2\eta a
=
4a^2\eta_0(\eta_0-1)
\le 0 .
\]
Let $\varphi=\sqrt{1-4ab\eta_0}$.
Then \(0\le\varphi<1\), and the preceding display implies
\begin{equation}
\left\|
(\bbm\theta^t-\bbm\theta^*)_{\widetilde S^t}
-\eta G(\bbm\theta^t)_{\widetilde S^t}
\right\|_2
\le
\varphi D_t .
\label{eq:supp_one_stage_population_step}
\end{equation}

For the second term in \eqref{eq:supp_one_stage_split}, pathwise \((\{t_0,\ldots,t_0+T-1\},\pi,\bar s,\phi,\gamma)\)-SRS applies to the random set \(\widetilde S^t\), because \(|\widetilde S^t|\le \bar s\). Thus
\begin{equation}
\|e_{t,\widetilde S^t}\|_2
\le
\phi D_t+\gamma .
\label{eq:supp_one_stage_srs_step}
\end{equation}
Combining \eqref{eq:supp_one_stage_ht}, \eqref{eq:supp_one_stage_split}, \eqref{eq:supp_one_stage_population_step}, and \eqref{eq:supp_one_stage_srs_step}, we obtain the one-step recursion
\begin{equation}
D_{t+1}
\le
\sqrt{1+q_s}\,(\varphi+\eta\phi)D_t
+
\sqrt{1+q_s}\,\eta\gamma .
\label{eq:supp_one_stage_basic_recursion}
\end{equation}

It remains to choose constants independent of \(s\), \(s^*\), \(p\), and \(T\). For \(c>1\), define $A(c)=\left(1+{2}/{\sqrt{c-1}}\right)^{1/2}$.

If \(s\ge c_s s^*\), then \(q_s\le 2/\sqrt{c_s-1}\), and hence \(\sqrt{1+q_s}\le A(c_s)\). Fix any \(\rho\in(\varphi,1)\). Since \(A(c)\downarrow1\) as \(c\to\infty\), we can choose \(c_s>1\) large enough so that
\[
A(c_s)\varphi<\rho .
\]
Then define
\[
\bar\phi
=
\frac{\rho-A(c_s)\varphi}{\eta A(c_s)}
\qquad\text{and}\qquad
C_\gamma
=
\frac{\eta A(c_s)}{1-\rho}.
\]
These constants depend only on \(a\), \(b\), and \(\eta_0\). If \(s\ge c_s s^*\) and \(\phi\le\bar\phi\), then \eqref{eq:supp_one_stage_basic_recursion} yields
\[
D_{t+1}
\le
\rho D_t
+
\eta A(c_s)\gamma .
\]
Iterating this recursion from \(t=t_0\) to \(t=t_0+T-1\) gives
\[
D_{t_0+T}
\le
\rho^T D_{t_0}
+
\eta A(c_s)\sum_{j=0}^{T-1}\rho^j\gamma
\le
\rho^T D_{t_0}
+
C_\gamma\gamma .
\]
Equivalently,
\[
\|\bbm\theta^{t_0+T}-\bbm\theta^*\|_2
\le
\rho^T
\|\bbm\theta^{t_0}-\bbm\theta^*\|_2
+
C_\gamma\gamma ,
\]
which proves the lemma.
\end{proof}

\subsection{Proof of Theorem~\ref{thm:two_stage_right_convergence}}
We now prove the two-stage convergence result for the proposed algorithm, which applies the one-stage contraction lemma twice, first to the localization phase and then to the refinement phase.
\begin{proof}
We work on the event where the stated Phase I and Phase II SRS conditions hold.

First consider Phase I. 
By the uniform \((\Theta_s,\bar s,\phi_1,\gamma_1)\)-SRS condition on \(\mathcal D^{(1)}\), and because the Phase I iterates belong to \(\Theta_s\), we have, for every \(t=0,\ldots,T_1-1\) and every \(S\subset[p]\) with \(|S|\le \bar s\),
\[
\|[g(\bbm\theta^t;K,\mathcal D^{(1)})-G(\bbm\theta^t)]_S\|_2
\le
\phi_1\|\bbm\theta^t-\bbm\theta^*\|_2+\gamma_1 .
\]
Thus the Phase I sequence satisfies pathwise \((\mathcal I_1,\pi_1,\bar s,\phi_1,\gamma_1)\)-SRS for \(\mathcal I_1=\{0,\ldots,T_1-1\}\), \(\pi_1(t)=0\) and $\mathcal A_0=\mathcal D^{(1)}$.

 Since \(\phi_1\le\bar\phi\) and \(s\ge c_s s^*\), Lemma~\ref{lem:one_stage_contraction_pathwise_srs}, applied with \(t_0=0\), \(T=T_1\), \(\phi=\phi_1\), and \(\gamma=\gamma_1\), gives
\[
\|\bbm\theta^{T_1}-\bbm\theta^*\|_2
\le
\rho^{T_1}\|\bbm\theta^0-\bbm\theta^*\|_2
+
C_\gamma\gamma_1 .
\]
This proves the first inequality in the theorem.

Next consider Phase II. By assumption, the realized Phase II refinement path satisfies pathwise \((\mathcal I_2,\pi_2,\bar s,\phi_2,\gamma_2)\)-SRS. Since \(\phi_2\le\bar\phi\) and the Phase II recursion is initialized at the \(s\)-sparse vector \(\bbm\theta^{T_1}\), Lemma~\ref{lem:one_stage_contraction_pathwise_srs}, applied with \(t_0=T_1\), \(T=T_2\), \(\phi=\phi_2\), and \(\gamma=\gamma_2\), gives
\[
\|\bbm\theta^{T_1+T_2}-\bbm\theta^*\|_2
\le
\rho^{T_2}\|\bbm\theta^{T_1}-\bbm\theta^*\|_2
+
C_\gamma\gamma_2 .
\]
Since \(\widehat{\bbm\theta}=\bbm\theta^{T_1+T_2}\), this proves the second inequality in the theorem.

Substituting the Phase I bound into the Phase II bound yields
\begin{align*}
\|\widehat{\bbm\theta}-\bbm\theta^*\|_2
&\le
\rho^{T_2}
\left\{
\rho^{T_1}\|\bbm\theta^0-\bbm\theta^*\|_2
+
C_\gamma\gamma_1
\right\}
+
C_\gamma\gamma_2 \\
&=
\rho^{T_1+T_2}\|\bbm\theta^0-\bbm\theta^*\|_2
+
C_\gamma\rho^{T_2}\gamma_1
+
C_\gamma\gamma_2 ,
\end{align*}
which proves \eqref{eq:two_stage_combined_bound}.
\end{proof}


\section{Heavy-Tailed Linear Regression}
\label{append:D}

This section verifies the linear-regression results used in the main text. 
We first prove the SRCG condition for the population linear score. 
We then derive moment bounds for the design-driven and noise-driven components of the score. 
These moment bounds are combined with median-of-means concentration to establish fixed-point and uniform SRS for the robust gradient estimator. 
Finally, we combine the SRS bounds with the abstract two-stage convergence theorem to prove Theorem~\ref{thm:linear_two_stage_right}.

\subsection{Proof of SRCG for Linear Regression}
\begin{proof}[Proof of Proposition \ref{prop:linear_srcg}]
	Let $\bbm\Delta=\bbm\theta-\bbm\theta^*$ and
$G(\bbm\theta)=\mathbb E[\nabla_{\bbm\theta}\mathcal L(\bbm\theta;\bm x_i,y_i)]
=\bbm\Sigma_{\bm x}\bbm\Delta$.
For any admissible set $S$ with
$\operatorname{supp}(\bbm\Delta)\subseteq S$ and $|S|\le 2s+s^*$,
\[
\|G(\bbm\theta)_S\|_2
=
\sup_{\|\bm u\|_2=1,\operatorname{supp}(\bm u)\subseteq S}
\bm u^\top \bbm\Sigma_{\bm x}\bbm\Delta .
\]
Since both $\bm u$ and $\bbm\Delta$ are supported on $S$, by the Cauchy--Schwarz inequality in the $\bbm \Sigma_{\bm x}$ inner product and the restricted upper eigenvalue condition, we have
\[
\bm u^\top \bbm\Sigma_{\bm x}\bbm\Delta
\le
\sqrt{\bm u^\top\bbm \Sigma_{\bm x} \bm u}\sqrt{\bbm\Delta^\top\bbm \Sigma_{\bm x}\bbm\Delta}
\le
\kappa_+\|\bbm\Delta\|_2 .
\]
Moreover, the restricted lower eigenvalue condition gives
\[
\bbm\Delta^\top\bbm\Sigma_{\bm x}\bbm\Delta\ge \kappa_-\|\bbm\Delta\|_2^2 .
\]
Therefore,
\[
\langle G(\bbm\theta),\bbm\Delta\rangle
=
\bbm\Delta^\top\bbm\Sigma_{\bm x}\bbm\Delta
\ge
\frac{\kappa_-}{2\kappa_+^2}\|G(\bbm\theta)_S\|_2^2
+
\frac{\kappa_-}{2}\|\bbm\Delta\|_2^2 .
\]
Thus the SRCG condition holds with
$a=\kappa_-/(2\kappa_+^2)$ and $b=\kappa_-/2$.
\end{proof}

\subsection{Moment Bounds for Linear Regression Score Components}
The following lemma establishes that, under Assumption \ref{assump:linear_moments}, the coordinates of the two constituent terms of the gradient $\nabla_{\bbm \theta} \mathcal{L}(\bbm\theta;Z_i)$ for linear regression admit finite moments.
\begin{lemma}\label{supp:lemma:moment_result_for_linear_regression}
	Under Assumption \ref{assump:linear_moments}, for any fixed $\bbm \theta\in \mathbb{R}^p$ with $\Vert\bbm \theta\Vert_0\le s$, and for any $j\in \{1,2,\ldots,p\}$, the $j$-th elements of the two components of the gradient $\nabla_{\bbm \theta} \mathcal{L}(\bbm\theta;Z_i)$ have $(1+\lambda)$-th and $(1+\delta)$-th central moments respectively, i.e., 
 
	\begin{itemize}
		\item[(i)] $\mathbb{E} \left[ \left| \left( x_{ij}\bm{x}_i^{\top}-\bbm\Sigma_{{\bm x}j \cdot} \right) \bbm{\Delta} \right|^{1+\lambda} \right]\le M_{\bm{x},2+2\lambda,\bar s}\Vert \bbm{\Delta} \Vert_{2}^{1+\lambda},$
		\item[(ii)] $\mathbb{E}\left[|x_{ij}\epsilon_i|^{1+\delta}\right]\le M_{\epsilon,1+\delta}M_{\bm{x},1+\delta,\bar s}.$
	\end{itemize}
\end{lemma}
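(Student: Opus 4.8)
The plan is a direct moment computation for the two components of the gradient $\nabla\mathcal{L}(\bbm\theta,(\bm x_i,y_i))=-\epsilon_i\bm x_i+\bm x_i\bm x_i^\top(\bbm\theta-\bbm\theta^*)$, the only substantive ideas being (a) conditioning on $\bm x_i$ to peel off the noise in part (ii), and (b) a Cauchy--Schwarz split that reduces the quadratic-in-$\bm x$ term in part (i) to the $(2+2\lambda)$-th sparsity-restricted moment. Throughout I would write $\bm e_j\in\mathbb{R}^p$ for the $j$-th standard basis vector so that $x_{ij}=\bm x_i^\top\bm e_j$, and record at the outset that $\bbm\Delta\coloneqq\bbm\theta-\bbm\theta^*$ has $\|\bbm\Delta\|_0\le s+s^*$ (since $\|\bbm\theta\|_0\le s$ and $\bbm\theta^*$ is $s^*$-sparse), and that $s+s^*\ge 1$; consequently both $\bm e_j$ and $\bbm\Delta/\|\bbm\Delta\|_2$ are admissible test vectors in the definition of $M_{\bm x,\eta,s+s^*}$.

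For part (ii), since $\mathbb{E}[\epsilon_i\mid\bm x_i]=0$ and $M_{\epsilon,1+\delta}=\sup_{\bm x}\mathbb{E}[|\epsilon_i|^{1+\delta}\mid\bm x_i=\bm x]$, conditioning on $\bm x_i$ gives
\[
\mathbb{E}\big[|x_{ij}\epsilon_i|^{1+\delta}\big]=\mathbb{E}\big[|x_{ij}|^{1+\delta}\,\mathbb{E}[|\epsilon_i|^{1+\delta}\mid\bm x_i]\big]\le M_{\epsilon,1+\delta}\,\mathbb{E}\big[|\bm x_i^\top\bm e_j|^{1+\delta}\big]\le M_{\epsilon,1+\delta}M_{\bm x,1+\delta,s+s^*},
\]
the last step by the definition of $M_{\bm x,1+\delta,s+s^*}$ applied to the $1$-sparse unit vector $\bm e_j$. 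This quantity is automatically centered ($\mathbb{E}[x_{ij}\epsilon_i]=0$), so it is also the claimed central moment.

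For part (i), I would first note that $\bbm\Sigma_{{\bm x}j\cdot}=\mathbb{E}[x_{ij}\bm x_i^\top]$, so that $(x_{ij}\bm x_i^\top-\bbm\Sigma_{{\bm x}j\cdot})\bbm\Delta=W_j-\mathbb{E}[W_j]$ with $W_j\coloneqq x_{ij}(\bm x_i^\top\bbm\Delta)$. The key step is Cauchy--Schwarz on the raw moment:
\[
\mathbb{E}\big[|W_j|^{1+\lambda}\big]=\mathbb{E}\big[|x_{ij}|^{1+\lambda}|\bm x_i^\top\bbm\Delta|^{1+\lambda}\big]\le\big(\mathbb{E}|x_{ij}|^{2+2\lambda}\big)^{1/2}\big(\mathbb{E}|\bm x_i^\top\bbm\Delta|^{2+2\lambda}\big)^{1/2}.
\]
The first factor is $\le M_{\bm x,2+2\lambda,s+s^*}^{1/2}$ exactly as in part (ii); for the second, assuming $\bbm\Delta\ne\bm 0$ (otherwise trivial) I would write $\bbm\Delta=\|\bbm\Delta\|_2\bm v$ with $\bm v$ a unit vector of sparsity $\le s+s^*$, obtaining $\mathbb{E}|\bm x_i^\top\bbm\Delta|^{2+2\lambda}=\|\bbm\Delta\|_2^{2+2\lambda}\mathbb{E}|\bm x_i^\top\bm v|^{2+2\lambda}\le\|\bbm\Delta\|_2^{2+2\lambda}M_{\bm x,2+2\lambda,s+s^*}$. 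Multiplying yields $\mathbb{E}[|W_j|^{1+\lambda}]\le M_{\bm x,2+2\lambda,s+s^*}\|\bbm\Delta\|_2^{1+\lambda}$, and passing to the centered variable $W_j-\mathbb{E}[W_j]$ changes this by at most the absolute constant $2^{1+\lambda}\le 4$ (via the triangle inequality in $L^{1+\lambda}$ together with $|\mathbb{E}W_j|\le(\mathbb{E}|W_j|^{1+\lambda})^{1/(1+\lambda)}$), which is harmless and is precisely the source of the $2^{1+\lambda}$, $2^{1+\delta}$ factors later seen in the proof of Proposition~\ref{RIGHT:prop:SRS_of_MoM_gradient_linear_regression}. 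There is no deep obstacle here; the one conceptual point to flag is that this Cauchy--Schwarz split is exactly why Assumption~\ref{RIGHT:assumption:moment_condition_for_linear_regression} must require a $(2+2\lambda)$-th moment of the design rather than only a $(1+\lambda)$-th one — handling the quadratic term $\bm x_i\bm x_i^\top\bbm\Delta$ costs a factor of two in the moment order — while the only bookkeeping care needed is to keep both post-split test vectors within the sparsity budget, which holds since $1\le s+s^*$ and $\|\bbm\Delta\|_0\le s+s^*$.
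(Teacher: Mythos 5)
Your proof is correct and follows essentially the same route as the paper: part (ii) by conditioning on $\bm x_i$ and applying the sparsity-restricted moment to the $1$-sparse vector $\bm e_j$, and part (i) by reducing the centered moment to the raw moment and then splitting $x_{ij}(\bm x_i^\top\bbm\Delta)$ via Cauchy--Schwarz with the test vectors $\bm e_j$ and $\bbm\Delta/\|\bbm\Delta\|_2$, both within the sparsity budget $s+s^*$. The only cosmetic differences are that the paper centers first (via the elementary $(a+b)^{1+\lambda}$ bound plus Jensen) and you center last via the $L^{1+\lambda}$ triangle inequality, so both arguments carry the same harmless absolute constant relative to the stated bound; note only that the $2^{1+\lambda}$, $2^{1+\delta}$ factors in the proof of Proposition \ref{RIGHT:prop:SRS_of_MoM_gradient_linear_regression} actually arise from splitting the deviation threshold $\eta$ into $\eta/2$, not from this centering step.
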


\begin{proof}[Proof of Lemma \ref{supp:lemma:moment_result_for_linear_regression}]
\begin{align*} \mathbb{E} \left[ \left| \left( x_{ij}\bm{x}_i^{\top}-\bbm\Sigma_{{\bm x}j\cdot} \right) \bbm{\Delta} \right|^{1+\lambda} \right]
	&\lesssim  \mathbb{E} \left[ |x_{ij}\bm{x}_i^{\top}\bbm{\Delta} |^{1+\lambda} \right] + |\bbm\Sigma_{{\bm x}j\cdot} \bbm{\Delta}|^{1+\lambda} \\ 
	&=  \mathbb{E} \left[ |x_{ij}\bm{x}_i^{\top}\bbm{\Delta} |^{1+\lambda} \right] + |\mathbb{E} [x_{ij}\bm{x}_i^{\top}\bbm{\Delta}]|^{1+\lambda} \\
	&\overset{\text{Jensen}}{\lesssim} \mathbb{E} \left[ |x_{ij}\bm{x}_i^{\top}\bbm{\Delta} |^{1+\lambda} \right] \\ 
	 &\overset{\text{Cauchy}}{\le} \left( \mathbb{E} \left[ |x_{ij}|^{2+2\lambda} \right] \right) ^{1/2}\left( \mathbb{E} \left[ |\bm{x}_i^{\top}\bbm{\Delta} |^{2+2\lambda} \right] \right) ^{1/2} \\
	  &= \left( \mathbb{E} \left[ |x_{ij}|^{2+2\lambda} \right] \right) ^{1/2}\left( \mathbb{E} \left[ \left| \bm{x}_i^{\top}\frac{\bbm{\Delta}}{\Vert \bbm{\Delta} \Vert _2} \right|^{2+2\lambda} \right] \cdot \Vert \bbm{\Delta} \Vert _{2}^{2+2\lambda} \right) ^{1/2} \\ 
	  &\le \left( M_{\bm{x},2+2\lambda,\bar s} \right)^{1/2} \left( M_{\bm{x},2+2\lambda,\bar s} \cdot \Vert \bbm{\Delta} \Vert_{2}^{2+2\lambda} \right)^{1/2} \tag{by Assumption} \\ 
	  &= M_{\bm{x},2+2\lambda,\bar s}\Vert \bbm{\Delta} \Vert_{2}^{1+\lambda}. 
\end{align*}
For the second part, we have
\begin{align*}
	\mathbb{E}\left[|x_{ij}\epsilon_i|^{1+\delta}\right] &= \mathbb{E}\left[|x_{ij}|^{1+\delta}\mathbb{E}\left[|\epsilon_i|^{1+\delta}|\bm{x}_i\right]\right] \\
	&\leq M_{\bm{x},1+\delta,\bar s} M_{\epsilon,1+\delta}.
\end{align*}
\end{proof}

\subsection{Fixed-Point and Uniform SRS for MoM Gradients}
\begin{proof}[Proof of Proposition~\ref{prop:linear_mom_srs}]
Let \(\bbm\Delta=\bbm\theta-\bbm\theta^*\), and write \(B=m/K\) for the block size. For notational simplicity we assume equal block sizes; nearly equal block sizes only change constants. For a block \(\mathcal B_k(\mathcal A)\), the block mean is
\[
\bar g_k(\bbm\theta;\mathcal A)
=
B^{-1}\sum_{Z_i\in\mathcal B_k(\mathcal A)}
\nabla_{\bbm\theta}\mathcal L(\bbm\theta;Z_i).
\]
Since \(y_i=\bm x_i^\top\bbm\theta^*+\epsilon_i\) and
\(\mathbb E(\epsilon_i\mid \bm x_i)=0\),
\[
\bar g_k(\bbm\theta;\mathcal A)-G(\bbm\theta)
=
- B^{-1}\sum_{Z_i\in\mathcal B_k(\mathcal A)}\epsilon_i\bm x_i
+
B^{-1}\sum_{Z_i\in\mathcal B_k(\mathcal A)}
(\bm x_i\bm x_i^\top-\bm\Sigma_{\bm x})\bbm\Delta .
\]
Set
\[
\phi_0
=
C M_{\bm x,2+2\lambda,\bar s}^{1/(1+\lambda)}
\left(\frac{K}{m}\right)^{\lambda/(1+\lambda)},
\qquad
\gamma_0
=
C M_{\mathrm{eff},1+\delta}^{1/(1+\delta)}
\left(\frac{K}{m}\right)^{\delta/(1+\delta)},
\]
where \(C>0\) is sufficiently large. By Lemma~\ref{supp:lemma:moment_result_for_linear_regression} and Lemma~\ref{supp:concentration_inequality_for_sample_mean_under_heavy_tailed_distribution}, applied also to the negated variables for two-sided control, for every \(k\), \(j\), and every fixed sparse direction \(\bm v\) with \(\|\bm v\|_2=1\) and \(\|\bm v\|_0\le s+s^*\),
\[
\mathbb P\left(
\left|
B^{-1}\sum_{Z_i\in\mathcal B_k(\mathcal A)}
\epsilon_i x_{ij}
\right|>\gamma_0
\right)
\le \frac1{16},
\]
and
\begin{equation}\label{eq:supp_linear_design_term_bound}
\mathbb P\left(
\left|
B^{-1}\sum_{Z_i\in\mathcal B_k(\mathcal A)}
\{x_{ij}\bm x_i^\top\bm v-\mathbb E(x_{ij}\bm x_i^\top\bm v)\}
\right|>\phi_0
\right)
\le \frac1{16}.
\end{equation}~

\noindent\textbf{We first prove fixed SRS.}

Fix a deterministic \(\bbm\theta\in\Theta_s\), and let \(r=\|\bbm\Delta\|_2\). If \(r>0\), write \(\bbm\Delta=r\bm v\), where
\(\|\bm v\|_2=1\) and \(\|\bm v\|_0\le s+s^*\); if \(r=0\), the design-driven term below is zero. The preceding block bounds imply
\[
\mathbb P\left(
|\bar g_{k,j}(\bbm\theta;\mathcal A)-G_j(\bbm\theta)|
>
\gamma_0+r\phi_0
\right)
\le \frac18 .
\]
For fixed \(j\), define
\[
I_{k,j}
=
\mathbf 1\left\{
|\bar g_{k,j}(\bbm\theta;\mathcal A)-G_j(\bbm\theta)|
>
\gamma_0+r\phi_0
\right\}.
\]
The variables \(I_{k,j}\) are independent over \(k\) and satisfy
\(\mathbb E I_{k,j}\le1/8\). If the coordinate-wise median violates
\[
|g_j(\bbm\theta;K,\mathcal A)-G_j(\bbm\theta)|
\le
\gamma_0+r\phi_0,
\]
then at least \(K/2\) block means violate the same bound. Therefore Hoeffding's inequality gives
\[
\mathbb P\left(
|g_j(\bbm\theta;K,\mathcal A)-G_j(\bbm\theta)|
>
\gamma_0+r\phi_0
\right)
\le
\exp(-cK).
\]
Taking a union bound over \(j\in[p]\), with probability at least
\(1-p\exp(-cK)\),
\[
\max_{1\le j\le p}
|g_j(\bbm\theta;K,\mathcal A)-G_j(\bbm\theta)|
\le
\gamma_0+\phi_0\|\bbm\theta-\bbm\theta^*\|_2 .
\]
Hence, for every \(S\subset[p]\) with \(|S|\le\bar s\),
\[
\|[g(\bbm\theta;K,\mathcal A)-G(\bbm\theta)]_S\|_2
\le
\sqrt{\bar s}\,
\{\gamma_0+\phi_0\|\bbm\theta-\bbm\theta^*\|_2\}.
\]
After absorbing constants into \(\phi_{\mathrm{lin}}(m,K)\) and
\(\gamma_{\mathrm{lin}}(m,K)\), this proves the fixed-point SRS condition
with parameters
\[
(\bar s,\phi_{\mathrm{lin}}(m,K),\gamma_{\mathrm{lin}}(m,K)).
\]
If \(K\ge C\log p\), then \(p\exp(-cK)\le Cp^{-c}\).

\noindent\textbf{Next we prove uniform SRS.}

We now prove simultaneous control over \(\Theta_s\). The block-level probability bounds used above remain valid. First define the noise event
\[
\mathcal E_{\mathrm{noise}}
=
\bigcap_{j=1}^p
\left\{
\frac1K\sum_{k=1}^K
\mathbf 1\left[
\left|
B^{-1}\sum_{Z_i\in\mathcal B_k(\mathcal A)}
\epsilon_i x_{ij}
\right|>\gamma_0
\right]
\le \frac16
\right\}.
\]
By Hoeffding's inequality and a union bound over \(j\),
\[
\mathbb P(\mathcal E_{\mathrm{noise}}^c)
\le p\exp(-cK).
\]

For the design-driven term, since the support of $\bbm \Delta$ may vary with $\bbm \theta$, we need to control the supremum over all sparse directions. We consider a fixed support set $U$ first and then take a union bound over all support sets.

Define
\[
A_{k,j}
=
B^{-1}\sum_{Z_i\in\mathcal B_k(\mathcal A)}
\{x_{ij}\bm x_i-\mathbb E(x_{ij}\bm x_i)\}\in\mathbb R^p .
\]
For each support set \(U\subset[p]\), let
$
\mathbb S_U
=
\{\bm v\in\mathbb R^p:\operatorname{supp}(\bm v)\subseteq U,\ \|\bm v\|_2=1\}$. For fixed \(j\) and \(U\) with \(|U|\le s+s^*\), set \(W_k=(A_{k,j})_U\). Take a supremum over \(\bm v\in\mathbb S_U\) in \eqref{eq:supp_linear_design_term_bound},
\[
\sup_{\bm v\in\mathbb S_U}
\mathbb P\{|A_{k,j}^\top\bm v|>\phi_0\}
\le \frac1{16}.
\]
Equivalently, for the class
\[
\mathcal A_U
=
\Big\{
\{w\in\mathbb R^{|U|}: |w^\top u|>\phi_0\}
:\|u\|_2=1
\Big\},
\]
we have \(\sup_{A\in\mathcal A_U}\nu(A)\le1/16\), where
\(\nu(A)=\mathbb P(W_1\in A)\). Applying Lemma~\ref{supp:lemma:vc_deviation_binary_class} and then Lemma~\ref{supp:lemma:shatter_bound_bad_direction_class},
\[
\mathbb P\left(
\sup_{\bm v\in\mathbb S_U}
\frac1K\sum_{k=1}^K
\mathbf 1\{|A_{k,j}^\top\bm v|>\phi_0\}
>
\frac16
\right)
\le
8\exp\left\{
C|U|\log\left(\frac{e(K\vee |U|)}{|U|}\right)-cK
\right\}.
\]
Taking a union bound over \(j\in[p]\) and all \(U\subset[p]\) with
\(|U|\le s+s^*\), and bounding the resulting sparsity factors by \(\bar s=2s+s^*\) since  $t\mapsto t\log\!\big(e(K\vee \bar s)/t \big)$ is increasing for \(1\le t \le \bar s\), we have
\[
\mathbb P(\mathcal E_{\mathrm{sig}}^c)
\le
8\exp\left\{
\log p
+
C\bar s\log\left(\frac{ep}{\bar s}\right)
+
C\bar s\log\left(\frac{e(K\vee \bar s)}{\bar s}\right)
-
cK
\right\},
\]
where
\[
\mathcal E_{\mathrm{sig}}
:=
\bigcap_{j=1}^p
\bigcap_{\substack{U\subset[p]\\ |U|\le s+s^*}}
\left\{
\sup_{\bm v\in\mathbb S_U}
\frac1K\sum_{k=1}^K
\mathbf 1\{|A_{k,j}^\top \bm v|>\phi_0\}
\le \frac{1}{6}
\right\}.
\]

On \(\mathcal E_{\mathrm{noise}}\cap\mathcal E_{\mathrm{sig}}\), consider any
\(\bbm\theta\in\Theta_s\) and any coordinate \(j\), at most \(K/6\) blocks violate the noise bound and at most \(K/6\) blocks violate the signal bound. Thus more than \(K/2\) blocks satisfy both bounds. For those blocks,
\[
|\bar g_{k,j}(\bbm\theta;\mathcal A)-G_j(\bbm\theta)|
\le
\gamma_0+r\phi_0 .
\]
Since \(g_j(\bbm\theta;K,\mathcal A)\) is the median of the \(K\) block means, the following bound
\[
|g_j(\bbm\theta;K,\mathcal A)-G_j(\bbm\theta)|
\le
\gamma_0+\phi_0\|\bbm\theta-\bbm\theta^*\|_2
\]
holds simultaneously for all \(j\in[p]\) and all \(\bbm\theta\in\Theta_s\). Therefore, for every
\(S\subset[p]\) with \(|S|\le\bar s\),
\[
\|[g(\bbm\theta;K,\mathcal A)-G(\bbm\theta)]_S\|_2
\le
\sqrt{\bar s}\,
\{\gamma_0+\phi_0\|\bbm\theta-\bbm\theta^*\|_2\}
\le
\gamma_{\mathrm{lin}}(m,K)
+
\phi_{\mathrm{lin}}(m,K)\|\bbm\theta-\bbm\theta^*\|_2 .
\]
Combining the two event probabilities,
\[
\mathbb P(\mathcal E_{\mathrm{noise}}^c\cup\mathcal E_{\mathrm{sig}}^c)
\le
p\exp(-cK)
+
8\exp\left\{
\log p
+
C\bar s\log\left(\frac{ep}{\bar s}\right)
+
C\bar s\log\left(\frac{e(K\vee \bar s)}{\bar s}\right)
-
cK
\right\}.
\]
If \(K\ge C\bar s\log p\), after increasing \(C\) and adjusting constants, the last display is bounded by
\(C\exp(-c\bar s\log p)\). Hence uniform
\((\Theta_s,\bar s,\phi_{\mathrm{lin}}(m,K),\gamma_{\mathrm{lin}}(m,K))\)-SRS holds with probability at least
\(1-C\exp(-c\bar s\log p)\).
\end{proof}

\subsection{Proof of Theorem \ref{thm:linear_two_stage_right}}
\begin{proof}
Throughout the proof, constants \(C,c>0\) may change from line to line. By Proposition~\ref{prop:linear_srcg}, the population score \(G(\bbm\theta)=\bm\Sigma_{\bm x}(\bbm\theta-\bbm\theta^*)\) satisfies the \((s,a,b)\)-SRCG condition with \(a=\kappa_-/(2\kappa_+^2)\) and \(b=\kappa_-/2\). Let \(\bar\phi,\rho,C_\gamma\) and \(c_s\) be the constants in Lemma~\ref{lem:one_stage_contraction_pathwise_srs}. We take \(s\ge c_s s^*\). We also assume that \(\bbm\theta^0\in\Theta_s=\{\bbm \theta\in \mathbb R^p:\Vert\bbm\theta\Vert_0\le s\}\); equivalently, one may replace the initialization by \(\mathcal P_s(\bbm\theta^0)\) before the first gradient step. Since every update is hard-thresholded at level \(s\), all subsequent iterates are in \(\Theta_s\).

\noindent\textbf{Phase I.}
Set
\[
\phi_1=\phi_{\mathrm{lin}}(n_1,K_1),
\qquad
\gamma_1=\gamma_{\mathrm{lin}}(n_1,K_1).
\]
Since \(K_1\asymp \bar s\log p\), \(n_1\asymp n\), and the sample-size condition implies \(K_1\le c n_1\) after enlarging constants, Proposition~\ref{prop:linear_mom_srs} gives a uniform \((\Theta_s,\bar s,\phi_1,\gamma_1)\)-SRS event on \(\mathcal D^{(1)}\) with probability at least \(1-C\exp(-c\bar s\log p)\). By Lemma~\ref{lem:srs_to_pathwise}, part (1), this uniform event implies pathwise \((\mathcal I_1,\pi_1,\bar s,\phi_1,\gamma_1)\)-SRS for all Phase I iterates, i.e., $\mathcal I_1=\{0,\ldots,T_1-1\}$, $\pi_1(t)=0$ and $\mathcal A_0=\mathcal D^{(1)}$.

The same choice \(K_1\asymp\bar s\log p\) gives
\[
\phi_1
\lesssim_{\lambda}
\sqrt{\bar s}\,
M_{\bm x,2+2\lambda,\bar s}^{1/(1+\lambda)}
\left(\frac{\bar s\log p}{n}\right)^{\lambda/(1+\lambda)} .
\]
Thus \(\phi_1\le\bar\phi\) is ensured by
\[
n\gtrsim_{\kappa,\lambda,\eta_0}
M_{\bm x,2+2\lambda,\bar s}^{1/\lambda}
\bar s^{1+(1+\lambda)/(2\lambda)}
\log p ,
\]
which is exactly the sample-size condition in the theorem up to constants. Applying Lemma~\ref{lem:one_stage_contraction_pathwise_srs} to Phase I gives
\[
\|\bbm\theta^{T_1}-\bbm\theta^*\|_2
\le
\rho^{T_1}\|\bbm\theta^0-\bbm\theta^*\|_2
+
C_\gamma\gamma_1 .
\]
Because \(\|\bbm\theta^0-\bbm\theta^*\|_2\le R_0\) and \(R_0\) is at most polynomial in \(n\), the implicit constant in \(T_1\asymp\log n\) can be chosen sufficiently large so that \(\rho^{T_1}R_0\le C\gamma_1\). 
Recall that $C_\gamma=(\eta A(c_s))/(1-\rho)$, where $\eta$ depends on $\eta_0$, and $A(c_s)$ depends on $\kappa$; therefore
\[
\|\bbm\theta^{T_1}-\bbm\theta^*\|_2
\lesssim_{\kappa,\eta_0}
\gamma_1
\lesssim_{\kappa,\lambda,\delta,\eta_0}
\sqrt{\bar s}\,
M_{\mathrm{eff},1+\delta}^{1/(1+\delta)}
\left(\frac{\bar s\log p}{n}\right)^{\delta/(1+\delta)} ,
\]
which proves \eqref{eq:linear_phase_one_rate}.

\noindent\textbf{Phase II.}
Let \(m_2=n_2/T_2\) denote the batch size in Phase II; unequal batch sizes only change constants if \(m_2\) is replaced by the minimum batch size. Set
\[
\phi_2=\phi_{\mathrm{lin}}(m_2,K_2),
\qquad
\gamma_2=\gamma_{\mathrm{lin}}(m_2,K_2).
\]
For \(\ell=1,\ldots,T_2\), let \(\mathcal F_{\ell-1}\) be the sigma-field generated by \(\mathcal D^{(1)}\) and the first \(\ell-1\) Phase II batches. Then \(\bbm\theta^{T_1+\ell-1}\) is \(\mathcal F_{\ell-1}\)-measurable, while \(\mathcal D^{(2)}_\ell\) is independent of \(\mathcal F_{\ell-1}\). Conditional on \(\mathcal F_{\ell-1}\), the current iterate is therefore fixed relative to the fresh batch \(\mathcal D^{(2)}_\ell\). Since \(K_2\asymp\log p\) and the sample-size condition ensures \(K_2\le c m_2\), the fixed-point part of Proposition~\ref{prop:linear_mom_srs} applies conditionally and gives failure probability at most \(Cp^{-c}\) for each \(\ell\). By Lemma~\ref{lem:srs_to_pathwise}, part (2), the whole Phase II path satisfies pathwise \((\mathcal I_2,\pi_2, \bar s,\phi_2,\gamma_2)\)-SRS for $\mathcal I_2=\{T_1,\ldots,T_1+T_2-1\}$, $\pi_2(T_1+\ell-1)=\ell$, with probability at least
\[
1-CT_2p^{-c}.
\]
Since \(T_2\asymp\log s\) and \(s\le p\), this probability is at least \(1-Cp^{-c}\) after decreasing \(c\) and increasing \(C\).

Next,
\[
\phi_2
\lesssim_{\lambda}
\sqrt{\bar s}\,
M_{\bm x,2+2\lambda,\bar s}^{1/(1+\lambda)}
\left(\frac{T_2\log p}{n}\right)^{\lambda/(1+\lambda)} .
\]
The condition \(\phi_2\le\bar\phi\) follows from
\[
n\gtrsim_{\kappa,\lambda,\eta_0}
M_{\bm x,2+2\lambda,\bar s}^{1/\lambda}
\bar s^{(1+\lambda)/(2\lambda)}
T_2\log p .
\]
This is implied by the theorem's sample-size condition because \(T_2\asymp\log s\) and \(\bar s=2s+s^*\) imply \(T_2\lesssim\bar s\). Hence both phases satisfy the multiplicative stability requirement \(\phi_1,\phi_2\le\bar\phi\).

Applying Theorem~\ref{thm:two_stage_right_convergence} on the intersection of the Phase I and Phase II pathwise SRS events yields
\[
\|\widehat{\bbm\theta}-\bbm\theta^*\|_2
\le
\rho^{T_2}\|\bbm\theta^{T_1}-\bbm\theta^*\|_2
+
C_\gamma\gamma_2 .
\]
Using the Phase I bound already proved, it remains to compare the carried-over localization term with the Phase II statistical floor. Since \(K_1\asymp\bar s\log p\), \(K_2\asymp\log p\), \(n_1\asymp n_2\asymp n\), and \(m_2=n_2/T_2\),
\[
\gamma_1
\lesssim
\sqrt{\bar s}\,
M_{\mathrm{eff},1+\delta}^{1/(1+\delta)}
\left(\frac{\bar s\log p}{n}\right)^{\delta/(1+\delta)},
\qquad
\gamma_2
\lesssim
\sqrt{\bar s}\,
M_{\mathrm{eff},1+\delta}^{1/(1+\delta)}
\left(\frac{T_2\log p}{n}\right)^{\delta/(1+\delta)} .
\]
Moreover, since $s$ is a constant multiple of $s^*$, so
\(\bar s\asymp s\). Choosing the implicit constant in \(T_2\asymp\log s\) sufficiently large gives
\[
\rho^{T_2}\gamma_1
\lesssim
\gamma_2 .
\]
Therefore
\[
\|\widehat{\bbm\theta}-\bbm\theta^*\|_2
\lesssim_{\kappa,\eta_0}
\gamma_2
\lesssim_{\kappa,\lambda,\delta,\eta_0}
\sqrt{\bar s}\,
M_{\mathrm{eff},1+\delta}^{1/(1+\delta)}
\left(\frac{\log s\cdot\log p}{n}\right)^{\delta/(1+\delta)} ,
\]
which proves \eqref{eq:linear_final_rate}.

Finally, the Phase I failure probability is at most \(C\exp(-c\bar s\log p)\), and the Phase II failure probability is at most \(Cp^{-c}\). A union bound gives the claimed probability
\[
1-C\exp(-c\bar s\log p)-Cp^{-c}.
\]
This completes the proof.
\end{proof}

\subsection{Auxiliary Lemmas}
We present some auxiliary lemmas that are used in the proofs of the main results.

The following lemma, adapted from \cite[Lemma 3]{bubeck2013bandits}, provides a concentration inequality for the sample mean under bounded $(1+\lambda)$-th moments. It is used to bound the MoM gradient estimator in Proposition \ref{prop:linear_mom_srs}. The proof is omitted.

\begin{lemma}\label{supp:concentration_inequality_for_sample_mean_under_heavy_tailed_distribution}
	Let $X, X_1, \dots, X_n$ be a real i.i.d. sequence with finite mean $\mu$. We assume that for some $\lambda \in (0, 1]$ and $v \ge 0$, one has $\mathbb{E}[|X - \mu|^{1+\lambda}] \le v$. Let $\widehat{\mu}$ be the empirical mean:
$
\widehat{\mu} = n^{-1} \sum_{t=1}^n X_t\,.
$
Then for any $\omega > 0$,
$$
\mathbb{P}(\widehat \mu-\mu>\omega)\le \frac{3v}{n^{\lambda}\omega^{1+\lambda}}.
$$
\end{lemma}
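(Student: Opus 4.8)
The plan is to reduce to the mean-zero case and then combine a sharp moment inequality for sums of independent random variables with Markov's inequality. First I would set $Y_t \coloneqq X_t - \mu$, so that the $Y_t$ are i.i.d., $\mathbb{E}[Y_t]=0$, $\mathbb{E}[|Y_t|^{1+\lambda}]\le v$, and $\widehat\mu-\mu = n^{-1}\sum_{t=1}^n Y_t$. Writing $S_n \coloneqq \sum_{t=1}^n Y_t$, it then suffices to bound $\mathbb{P}(S_n > n\eta)$, and since $\mathbb{P}(S_n > n\eta) \le \mathbb{P}(|S_n| > n\eta)$ it is enough to produce the symmetric two-sided bound.

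The key tool is the von Bahr--Esseen inequality: for independent, mean-zero random variables and any exponent $r\in[1,2]$ one has $\mathbb{E}\bigl|\sum_t Y_t\bigr|^{r} \le 2\sum_t \mathbb{E}|Y_t|^{r}$. Applying it with $r = 1+\lambda \in (1,2]$ (using $\lambda\in(0,1]$) gives $\mathbb{E}|S_n|^{1+\lambda} \le 2\sum_{t=1}^n \mathbb{E}|Y_t|^{1+\lambda} \le 2nv$, after which Markov's inequality yields
\[
\mathbb{P}(|S_n| > n\eta) \;\le\; \frac{\mathbb{E}|S_n|^{1+\lambda}}{(n\eta)^{1+\lambda}} \;\le\; \frac{2nv}{n^{1+\lambda}\eta^{1+\lambda}} \;=\; \frac{2v}{n^{\lambda}\eta^{1+\lambda}} \;\le\; \frac{3v}{n^{\lambda}\eta^{1+\lambda}},
\]
which is the claimed bound (in fact with the better constant $2$). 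Combined with the first-paragraph reduction, this proves the lemma.

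The argument has no real obstacle: the constant $3$ in the statement is simply a convenient round-up of the $2$ delivered by von Bahr--Esseen, and the hypothesis $1 < 1+\lambda \le 2$ places us exactly in the regime where that inequality applies. If one prefers a self-contained derivation that avoids citing von Bahr--Esseen, the alternative is a Fuk--Nagaev-style truncation: one may assume $v < n^{\lambda}\eta^{1+\lambda}$ (otherwise the right-hand side exceeds $1$ and there is nothing to prove), truncate each $Y_t$ at a level $B \asymp n\eta$, bound the contribution of the event $\{\exists\, t: |Y_t|>B\}$ by a union bound plus Markov (contributing order $v/(n^{\lambda}\eta^{1+\lambda})$), use $|\mathbb{E}[Y_t\mathbf{1}\{|Y_t|\le B\}]| \le v B^{-\lambda}$ to control the centering bias, and bound the sum of the recentered truncated pieces via a Bernstein-type inequality using $\mathrm{Var}(Y_t\mathbf{1}\{|Y_t|\le B\}) \le B^{1-\lambda}v$. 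This route reaches the same order but is considerably more delicate in matching the numerical constant, so I would present the von Bahr--Esseen route as the main proof and mention the truncation argument only as a remark.
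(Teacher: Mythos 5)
Your proposal is correct: the reduction to the centered sum, the von Bahr--Esseen moment bound $\mathbb{E}|S_n|^{1+\lambda}\le 2nv$ for $1+\lambda\in(1,2]$, and Markov's inequality indeed give the claim (with the sharper constant $2$ in place of $3$). The paper itself omits a proof and simply cites Lemma 3 of \cite{bubeck2013bandits}, whose underlying argument is exactly this von Bahr--Esseen-plus-Markov derivation, so your route coincides with the paper's (cited) one.
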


The following two lemmas are used to control the design-driven term in the MoM gradient estimator, which is a supremum over all sparse directions. The first lemma is a VC deviation bound for a binary class, and the second lemma bounds the shatter coefficient of the bad-direction class on a fixed support.
\begin{lemma}[VC deviation bound for a binary class]
\label{supp:lemma:vc_deviation_binary_class}
Let $W_1,\ldots,W_K$ be i.i.d. random variables taking values in a measurable space
$\mathcal W$, and let $\mathcal A$ be a class of measurable subsets of $\mathcal W$.
Define
\[
\nu(A)=\mathbb P(W_1\in A),
\qquad
\nu_K(A)=\frac1K\sum_{k=1}^K \mathbf 1\{W_k\in A\},
\qquad A\in\mathcal A.
\]
Let $s(\mathcal A,m)$ denote the $m$-th shatter coefficient of $\mathcal A$.
Then for any $t>0$,
\[
\mathbb P\!\left(
\sup_{A\in\mathcal A} |\nu_K(A)-\nu(A)|>t
\right)
\le
8\, s(\mathcal A,K)\exp\!\left(-\frac{Kt^2}{32}\right).
\]
Consequently, if $\sup_{A\in\mathcal A}\nu(A)\le q$, then for any $t>0$,
\[
\mathbb P\!\left(
\sup_{A\in\mathcal A} \nu_K(A)>q+t
\right)
\le
8\, s(\mathcal A,K)\exp\!\left(-\frac{Kt^2}{32}\right).
\]
\end{lemma}

\begin{proof}
The first inequality is a direct consequence of Theorem 12.5 of
\citet{devroye2013probabilistic}. The second follows from the inclusion
\[
\left\{
\sup_{A\in\mathcal A}\nu_K(A)>q+t
\right\}
\subseteq
\left\{
\sup_{A\in\mathcal A}\big(\nu_K(A)-\nu(A)\big)>t
\right\}
\subseteq
\left\{
\sup_{A\in\mathcal A}|\nu_K(A)-\nu(A)|>t
\right\}.
\]
\end{proof}

\begin{lemma}[Shatter coefficient of the bad-direction class on a fixed support]
\label{supp:lemma:shatter_bound_bad_direction_class}
Fix a support set $U\subset[p]$ with $|U|=d\ge 1$, and let $\phi_0>0$.
Define the class of subsets of $\mathbb R^d$ by
\[
\mathcal A_U
=
\Big\{
\{w\in\mathbb R^d: |w^\top u|>\phi_0\}
:\ u\in\mathbb R^d,\ \|u\|_2=1
\Big\}.
\]
Then for every integer $m\ge 1$,
\[
s(\mathcal A_U,m)
\le
\left(
2\sum_{\ell=0}^d \binom{m-1}{\ell}
\right)^2.
\]
In particular, there exists an absolute constant $C>0$ such that
\[
s(\mathcal A_U,m)
\le
\exp\!\left\{
C d \log\!\left(\frac{e(m\vee d)}{d}\right)
\right\},
\qquad \forall\, m\ge 1.
\]
\end{lemma}

\begin{proof}
Define
\[
\mathcal A_U^+
=
\Big\{
\{w\in\mathbb R^d: w^\top u>\phi_0\}
:\ u\in\mathbb R^d,\ \|u\|_2=1
\Big\},
\]
and
\[
\mathcal A_U^-
=
\Big\{
\{w\in\mathbb R^d: w^\top u<-\phi_0\}
:\ u\in\mathbb R^d,\ \|u\|_2=1
\Big\}.
\]
Each of $\mathcal A_U^+$ and $\mathcal A_U^-$ is a subclass of the class of affine halfspaces
in $\mathbb R^d$. Hence, by Corollary 13.1 of \citet{devroye2013probabilistic},
\[
s(\mathcal A_U^+,m)
\le
2\sum_{\ell=0}^d \binom{m-1}{\ell},
\qquad
s(\mathcal A_U^-,m)
\le
2\sum_{\ell=0}^d \binom{m-1}{\ell}.
\]

Now let
\[
\widetilde{\mathcal A}_U
=
\{A^+\cup A^-: A^+\in\mathcal A_U^+,\ A^-\in\mathcal A_U^-\}.
\]
Since
\[
\{w:|w^\top u|>\phi_0\}
=
\{w:w^\top u>\phi_0\}\cup \{w:w^\top u<-\phi_0\},
\]
we have $\mathcal A_U\subseteq \widetilde{\mathcal A}_U$. Therefore, by Theorem 13.5(iv)
of \citet{devroye2013probabilistic},
\[
s(\mathcal A_U,m)
\le
s(\widetilde{\mathcal A}_U,m)
\le
s(\mathcal A_U^+,m)\, s(\mathcal A_U^-,m)
\le
\left(
2\sum_{\ell=0}^d \binom{m-1}{\ell}
\right)^2.
\]

For the second claim, if $m\ge d$, then the standard combinatorial bound gives
\[
\sum_{\ell=0}^d \binom{m-1}{\ell}
\le
\left(\frac{e(m-1)}{d}\right)^d
\le
\left(\frac{em}{d}\right)^d.
\]
If $m<d$, then trivially $s(\mathcal A_U,m)\le 2^m\le 2^d$.
Combining the two cases yields
\[
s(\mathcal A_U,m)
\le
\exp\!\left\{
C d \log\!\left(\frac{e(m\vee d)}{d}\right)
\right\}
\]
for a sufficiently large absolute constant $C>0$.
\end{proof}


\section{Heavy-Tailed Logistic Regression}
\label{append:E}

\subsection{Proof of SRCG for Logistic Regression}
\begin{proof}[Proof of Proposition \ref{prop:logistic_srcg}]
Let $\boldsymbol\Delta=\boldsymbol\theta-\boldsymbol\theta^*$. Under the logistic model,
\[
\mathbb E[y_i\mid \bm x_i]
=
\sigma(\bm x_i^\top\boldsymbol\theta^*),
\qquad
\sigma(z)=\frac{e^z}{1+e^z}.
\]
Therefore,
\[
\mathbb E[\nabla_{\bbm \theta} \mathcal{L}(\bbm\theta;Z_i)]
=
\mathbb E\left[
\left\{
\sigma(\bm x_i^\top\boldsymbol\theta)
-
\sigma(\bm x_i^\top\boldsymbol\theta^*)
\right\}
\bm x_i
\right].
\]

By the integral form of the mean value theorem,
\[
\sigma(\bm x_i^\top\boldsymbol\theta)
-
\sigma(\bm x_i^\top\boldsymbol\theta^*)
=
(\bm x_i^\top\boldsymbol\Delta)
\int_0^1
\sigma'\left\{
\bm x_i^\top
(\boldsymbol\theta^*+u\boldsymbol\Delta)
\right\}
\,du .
\]
Since
\[
\boldsymbol\theta^*+u\boldsymbol\Delta
=
(1-u)\boldsymbol\theta^*+u\boldsymbol\theta
\in B(R)
\]
for every $u\in[0,1]$, and since
\[
\operatorname{supp}(\boldsymbol\theta^*+u\boldsymbol\Delta)
\subseteq
\operatorname{supp}(\boldsymbol\theta)\cup
\operatorname{supp}(\boldsymbol\theta^*)
\subseteq S,
\]
Assumption \ref{assump:sparse_logistic_curvature} applies along the whole line segment between
$\boldsymbol\theta^*$ and $\boldsymbol\theta$.

Define the $p\times p$ matrix
\[
\boldsymbol H
=
\int_0^1
\mathbb E\left[
\sigma'\left\{
\bm x_i^\top
(\boldsymbol\theta^*+u\boldsymbol\Delta)
\right\}
\bm x_{i}\bm x_{i}^\top
\right]du ,
\]
By Assumption \ref{assump:sparse_logistic_curvature} and Assumption \ref{assump:linear_sparse_eigenvalues}, for any $\bm v\in \mathbb R^p$ and $\operatorname{supp}(\bm v)\subseteq S$,
\[
\bm v^\top\boldsymbol H\bm v\ge c_\kappa\kappa_- \|\bm v\|_2^2.
\]
Moreover, since $\sigma'(z)\le 1/4$ for all $z$ and Assumption
\ref{assump:linear_sparse_eigenvalues} holds, for any $\bm v\in \mathbb R^p$ and $\operatorname{supp}(\bm v)\subseteq S$,
\[
\bm v^{\top}\boldsymbol H\bm v
\le
\frac14
\bm v^{\top} \mathbb E[\bm x_{i}\bm x_{i}^\top]\bm v
\le
\frac{\kappa_+}{4}\Vert\bm v\Vert_2^2.
\]
Note that $\operatorname{supp}(\boldsymbol\Delta)\subseteq S$, and
\(
\left\langle
\mathbb E[\nabla_{\bbm \theta} \mathcal{L}(\bbm\theta;Z_i)],
\boldsymbol\Delta
\right\rangle
=
\boldsymbol\Delta^\top
\boldsymbol H
\boldsymbol\Delta.
\)
The lower bound on $\boldsymbol H$ gives
\[
\left\langle
\mathbb E[\nabla_{\bbm \theta} \mathcal{L}(\bbm\theta;Z_i)],
\boldsymbol\Delta
\right\rangle
\ge
c_\kappa\kappa_-
\|\boldsymbol\Delta\|_2^2.
\]
Since \(\boldsymbol H\) is positive semidefinite,
the Cauchy--Schwarz inequality in the seminorm induced by
\(\boldsymbol H\) yields
\[
|\bm v^\top \boldsymbol H\boldsymbol\Delta|
\le
(\bm v^\top\boldsymbol H\bm v)^{1/2}
(\boldsymbol\Delta^\top\boldsymbol H\boldsymbol\Delta)^{1/2}.
\]
Therefore,
\[
\begin{aligned}
\left\|
\mathbb E[\nabla_{\bbm \theta} \mathcal{L}(\bbm\theta;Z_i)]_S
\right\|_2
&=
\sup_{\substack{\|\bm v\|_2=1\\ \operatorname{supp}(\bm v)\subseteq S}}
\left|
\bm v^\top
\mathbb E[\nabla_{\bbm \theta} \mathcal{L}(\bbm\theta;Z_i)]
\right|  \\
&=
\sup_{\substack{\|\bm v\|_2=1\\ \operatorname{supp}(\bm v)\subseteq S}}
\left|
\bm v^\top\boldsymbol H\boldsymbol\Delta
\right|  \\
&\le
\left(\frac{\kappa_+}{4}\right)^{1/2}
\left(
\boldsymbol\Delta^\top\boldsymbol H\boldsymbol\Delta
\right)^{1/2}.
\end{aligned}
\]
Equivalently,
\[
\left\|
\mathbb E[\nabla_{\bbm \theta} \mathcal{L}(\bbm\theta;Z_i)]_S
\right\|_2^2
\le
\frac{\kappa_+}{4}
\left\langle
\mathbb E[\nabla_{\bbm \theta} \mathcal{L}(\bbm\theta;Z_i)],
\boldsymbol\Delta
\right\rangle .
\]
Hence,
\[
\frac{2}{\kappa_+}
\left\|
\mathbb E[\nabla_{\bbm \theta} \mathcal{L}(\bbm\theta;Z_i)_S]
\right\|_2^2
\le
\frac12
\left\langle
\mathbb E[\nabla_{\bbm \theta} \mathcal{L}(\bbm\theta;Z_i)],
\boldsymbol\Delta
\right\rangle .
\]
Also,
\[
\frac{c_\kappa\kappa_-}{2}
\|\boldsymbol\Delta\|_2^2
\le
\frac12
\left\langle
\mathbb E[\nabla_{\bbm \theta} \mathcal{L}(\bbm\theta;Z_i)],
\boldsymbol\Delta
\right\rangle .
\]
Adding the last two inequalities yields
\[
\left\langle
\mathbb E[\nabla_{\bbm \theta} \mathcal{L}(\bbm\theta;Z_i)],
\boldsymbol\Delta
\right\rangle
\ge
\frac{2}{\kappa_+}
\left\|
\mathbb E[\nabla_{\bbm \theta} \mathcal{L}(\bbm\theta;Z_i)]_S
\right\|_2^2
+
\frac{c_\kappa\kappa_-}{2}
\|\boldsymbol\Delta\|_2^2.
\]
This proves the SRCG condition.
\end{proof}

\subsection{Moment Bounds for Logistic Regression Score Components}
\begin{lemma}\label{supp:lemma:moment_result_for_logistic_regression}
	Under Assumption \ref{assump:linear_sparse_eigenvalues}, for any fixed $\bbm \theta\in \mathbb{R}^p$ with $\Vert\bbm \theta\Vert_0\le s$, and for any $j\in \{1,2,\ldots,p\}$, the $j$-th elements of the gradient $\nabla_{\bbm \theta} \mathcal{L}(\bbm\theta;Z_i)$ in logistic regression have finite variance, i.e., 
 
\[
\mathrm{Var}\left(
\left(
\frac{e^{\langle \bm{x}_i,\bbm{\theta}\rangle}}
{1+e^{\langle \bm{x}_i,\bbm{\theta}\rangle}}
-y_i
\right)x_{ij}
\right)
\le M_{\bm{x},2,1}
.\]
\end{lemma}
\begin{proof}
By Assumption \ref{assump:linear_sparse_eigenvalues}, $M_{\bm{x},2,1}$ is finite and the second moment of $x_{ij}$ is bounded by $M_{\bm{x},2,1}$.

Note that
\[\left|\frac{e^{\langle \bm{x}_i,\bbm{\theta}\rangle}}
{1+e^{\langle \bm{x}_i,\bbm{\theta}\rangle}}
-y_i\right|\le 1,\]
so
\[\mathrm{Var}\left(
\left(
\frac{e^{\langle \bm{x}_i,\bbm{\theta}\rangle}}
{1+e^{\langle \bm{x}_i,\bbm{\theta}\rangle}}
-y_i
\right)x_{ij}
\right)
\le \mathbb{E}\left[\left(
\left(
\frac{e^{\langle \bm{x}_i,\bbm{\theta}\rangle}}
{1+e^{\langle \bm{x}_i,\bbm{\theta}\rangle}}
-y_i
\right)x_{ij}\right)^2
\right]
\le M_{\bm{x},2,1}.\]
\end{proof}

\subsection{Fixed-Point and Uniform SRS for MoM Gradients}
\begin{proof}[Proof of Proposition \ref{prop:logistic_mom_srs}]
Since \(1\le \bar s\), Assumption~\ref{assump:linear_sparse_eigenvalues} implies
\[
M_{\bm x,2,1}\le M_{\bm x,2,\bar s}\le \kappa_+<\infty .
\]
Hence all quantities below are well defined.\\

\noindent\textbf{Step 1. Fixed-point SRS.}

Fix $\bbm{\theta}\in \Theta_s$, and for each $j\in[p]$, define
\[
\bbm{\mu}(\bbm{\theta})_j
=
\mathbb E\left[
\left(
\frac{e^{\langle \bm{x}_i,\bbm{\theta}\rangle}}
{1+e^{\langle \bm{x}_i,\bbm{\theta}\rangle}}
-y_i
\right)x_{ij}
\right]
=
\mathbb E[\nabla \mathcal L(\bbm{\theta};\bm{x}_i,y_i)]_j ,
\]
 and the $j$-th coordinate of the block-wise mean gradient is
\[
\bar{g}_k(\bbm{\theta};\mathcal A)_j
=
B_{\mathcal A}^{-1}
\sum_{(\bm{x}_i,y_i)\in \mathcal{B}_k(\mathcal A)}
\left(
\frac{e^{\langle \bm{x}_i,\bbm{\theta}\rangle}}
{1+e^{\langle \bm{x}_i,\bbm{\theta}\rangle}}
-y_i
\right)x_{ij},
\qquad B_{\mathcal A}=\frac{m}{K}.
\]
By Lemma \ref{supp:lemma:moment_result_for_logistic_regression}, the variance of $\bar{g}_k(\bbm{\theta};\mathcal A)_j$ is bounded by $M_{\bm{x},2,1}/B_{\mathcal A}$. By Chebyshev's inequality, for any $\omega>0$,
\[
\mathbb P\!\left(
|\bar g_k(\bbm{\theta};\mathcal A)_j-\bbm{\mu}(\bbm{\theta})_j|\ge \omega
\right)
\le
\frac{M_{\bm{x},2,1}}{B_{\mathcal A}\omega^2}.
\]
Set $\omega_0=2\sqrt{{M_{\bm{x},2,1}}/{B_{\mathcal A}}}$,
then
\[
\mathbb P\!\left(
|\bar g_k(\bbm{\theta};\mathcal A)_j-\bbm{\mu}(\bbm{\theta})_j|>\omega_0
\right)
\le \frac{1}{4}.
\]

Define
\[
I_{k,j}
=
\mathbf 1\!\left\{
|\bar g_k(\bbm{\theta};\mathcal A)_j-\bbm{\mu}(\bbm{\theta})_j|>\omega_0
\right\},
\qquad k=1,\dots,K.
\]
By the standard median-of-means argument,
\[
\mathbb P\!\left(
|g(\bbm{\theta};K,\mathcal A)_j-\bbm{\mu}(\bbm{\theta})_j|>\omega_0
\right)
\le
\mathbb P\!\left(
\frac{1}{K}\sum_{k=1}^K I_{k,j}>\frac{1}{2}
\right)
\le
2\exp(-K/8).
\]

Taking a union bound over $j\in[p]$, we obtain
\[
\mathbb P\left(
\max_{1\le j\le p}
| g(\bbm{\theta};K,\mathcal A)_j-\bbm{\mu}(\bbm{\theta})_j |
>
2\sqrt{\frac{M_{\bm{x},2,1}}{B_{\mathcal A}}}
\right)
\le 2p\exp(-K/8).
\]
On the complement of this event, for every $S\subset[p]$ with $|S|\le 2s+s^*$,
\[
\begin{aligned}
\left\lVert
 g(\bbm{\theta};K,\mathcal A)_S
 - \mathbb E[\nabla \mathcal L(\bbm{\theta};\bm{x}_i,y_i)]_{S}
 \right\rVert _2 
&\le
\sqrt{|S|}
\max_{1\le j\le p}
\left| g(\bbm{\theta};K,\mathcal A)_j-\bbm{\mu}(\bbm{\theta})_j \right| \\
&\lesssim
\sqrt{M_{\bm{x},2,1}(2s+s^*)}
\sqrt{\frac{K}{m}}.
\end{aligned}
\]
This inequality holds with probability at least $1-2p\exp(-K/8)$. If $K\ge C\log p$ for some constant $C>0$, then $1-2p\exp(-K/8)\ge 1-Cp^{-c}$ for some constant $c>0$, which proves the fixed-point SRS condition.\\

\noindent\textbf{Step 2. Uniform SRS.}

 Fix $j\in[p]$. Define a class of functions $\mathcal F_j$ by
\[
\mathcal F_j
=
\left\{
f_{\bbm{\theta},j}:(\bm{x},y)\mapsto
\left(
\frac{e^{\langle \bm{x},\bbm{\theta}\rangle}}
{1+e^{\langle \bm{x},\bbm{\theta}\rangle}}
-y
\right)x_j
:\ \Vert\bbm{\theta}\Vert_0\le s
\right\}.
\]
For $f_{\bbm{\theta},j}\in\mathcal F_j$, write
\[
Pf_{\bbm{\theta},j}
=
\bbm{\mu}(\bbm{\theta})_j
=
\mathbb E[\nabla\mathcal L(\bbm{\theta};\bm{x}_i,y_i)]_j.
\]
The coordinate-wise MoM estimator satisfies
\[
g(\bbm{\theta};K,\mathcal A)_j
=
\operatorname{median}
\left\{
B_{\mathcal A}^{-1}
\sum_{(\bm{x}_i,y_i)\in\mathcal B_k(\mathcal A)}
f_{\bbm{\theta},j}(\bm{x}_i,y_i)
:\ 1\le k\le K
\right\}.
\]
We want to control the deviation of the MoM estimator from the population mean uniformly over $\Theta_{s}$, so we apply Lemma \ref{RIGHT:lemma:uniform_MoM_function_class} to $\mathcal F_j$ by verifying the three conditions therein.

First, we verify the block-tail condition. By Lemma \ref{supp:lemma:moment_result_for_logistic_regression}, we have, uniformly over $\Vert\bbm{\theta}\Vert_0\le s$,
\[
\mathbb E\!\left[f_{\bbm{\theta},j}(\bm{x}_i,y_i)^2\right]
\le
\mathbb E(x_{ij}^2)
\le
M_{\bm{x},2,1}.
\]
Therefore,
\[
\sup_{\Vert\bbm{\theta}\Vert_0\le s}
\mathrm{Var}\!\left(
B_{\mathcal A}^{-1}
\sum_{(\bm{x}_i,y_i)\in\mathcal B_1(\mathcal A)}
f_{\bbm{\theta},j}(\bm{x}_i,y_i)
\right)
\le
\frac{M_{\bm{x},2,1}}{B_{\mathcal A}}.
\]
By Chebyshev's inequality,
\[
\sup_{\Vert\bbm{\theta}\Vert_0\le s}
\mathbb P\!\left(
\left|
B_{\mathcal A}^{-1}
\sum_{(\bm{x}_i,y_i)\in\mathcal B_1(\mathcal A)}
f_{\bbm{\theta},j}(\bm{x}_i,y_i)
-
Pf_{\bbm{\theta},j}
\right|
>\omega
\right)
\le
\frac{M_{\bm{x},2,1}}{B_{\mathcal A}\omega^2}.
\]
Taking
\[
\omega_0=5\sqrt{\frac{M_{\bm{x},2,1}}{B_{\mathcal A}}},
\]
we obtain
\[
p_{B_{\mathcal A}}(\omega_0)
=
\sup_{f\in\mathcal F_j}
\mathbb P\!\left(
|\bar f_{1,\mathcal A}-Pf|>\omega_0
\right)
\le \frac{1}{25}<\frac{1}{20},
\]
where
\[
\bar f_{1,\mathcal A}
=
B_{\mathcal A}^{-1}
\sum_{(\bm{x}_i,y_i)\in\mathcal B_1(\mathcal A)}
f(\bm{x}_i,y_i).
\]

Second, we verify the packing condition. For every $f_{\bbm{\theta},j}\in\mathcal F_j$, under the law $P$ of $(\bm{x}_i,y_i)$,
\[
\|f_{\bbm{\theta},j}\|_{L_2(P)}^2
=
\mathbb E\!\left[f_{\bbm{\theta},j}(\bm{x}_i,y_i)^2\right]
\le
M_{\bm{x},2,1}.
\]
If we choose $\eta_1=4\sqrt{M_{\bm{x},2,1}}$, then the packing number $M(\mathcal F_j,\eta_1 D)=1$, where $D$ is the unit ball in $L_2(P)$.

Third, we verify the localized empirical-process condition. Let
\[
W_j=(\mathcal F_j-\mathcal F_j)\cap \eta_1 D,
\qquad
\bar W_j=\{w-Pw:\ w\in W_j\}.
\]
By Lemma \ref{RIGHT:lemma:shatter_coefficient_logistic_coordinate_class},
$\mathcal F_j$ is a VC-subgraph class with VC index
\[
V_j \le C s\log\!\left(\frac{ep}{s}\right),
\]
and envelope $F_j(\bm{x},y)=|x_j|$. Since $W_j\subset \mathcal F_j-\mathcal F_j$, the envelope of $W_j$ is bounded by
$2F_j$. Moreover, for every probability measure $Q$ and every $0<\varepsilon\le 1$,
\[
N\!\left(
\varepsilon \|2F_j\|_{Q,2},
W_j,
L_2(Q)
\right)
\le
N\!\left(
\varepsilon \|F_j\|_{Q,2},
\mathcal F_j,
L_2(Q)
\right)^2.
\]
Hence, by Lemma \ref{RIGHT:lemma:shatter_coefficient_logistic_coordinate_class},
\[
\log N\!\left(
\varepsilon \|2F_j\|_{Q,2},
W_j,
L_2(Q)
\right)
\le
C V_j \log\!\left(\frac{A}{\varepsilon}\right),
\qquad 0<\varepsilon\le 1.
\]
Therefore the uniform entropy integral of $W_j$ satisfies
\[
\sup_Q
\int_0^1
\sqrt{
1+
\log N\!\left(
\varepsilon \|2F_j\|_{Q,2},
W_j,
L_2(Q)
\right)
}\,d\varepsilon
\lesssim
\sqrt{V_j}.
\]
Since
\[
\|2F_j\|_{P,2}^2
=
4\mathbb E(x_{ij}^2)
\le
4M_{\bm{x},2,1},
\]
by the proof of \citet[Theorem 2.14.1]{van1996weak},
\[
\mathbb E\sup_{w\in W_j}
\left|
\sum_{(\bm{x}_i,y_i)\in\mathcal A}
\varepsilon_i w(\bm{x}_i,y_i)
\right|
\lesssim
\sqrt{m}\,\|2F_j\|_{P,2}\,\sqrt{V_j}
\lesssim
\sqrt{mM_{\bm{x},2,1}V_j}.
\]
By symmetrization,
\[
\mathbb E\sup_{\bar w\in\bar W_j}
\left|
\sum_{(\bm{x}_i,y_i)\in\mathcal A}
\varepsilon_i \bar w(\bm{x}_i,y_i)
\right|
\le
2\mathbb E\sup_{w\in W_j}
\left|
\sum_{(\bm{x}_i,y_i)\in\mathcal A}
\varepsilon_i w(\bm{x}_i,y_i)
\right|
\lesssim
\sqrt{mM_{\bm{x},2,1}V_j}.
\]

Now choose
\[
\eta_2 = A_0 \sqrt{\frac{K M_{\bm{x},2,1}}{m}},
\]
where $A_0>0$ is a sufficiently large absolute constant. Since
$B_{\mathcal A}=m/K$ and
$\eta_1=4\sqrt{M_{\bm{x},2,1}}$, we have
\[
\eta_2 \ge c_3\frac{\eta_1}{\sqrt {B_{\mathcal A}}}
\]
for all sufficiently large $A_0$. Moreover, if
\(
K\ge C V_j,
\)
then
\(
\sqrt{mM_{\bm{x},2,1}V_j}
\lesssim
\eta_2 m
\).

Thus all assumptions of Lemma \ref{RIGHT:lemma:uniform_MoM_function_class} are
satisfied, and therefore
\[
\mathbb P\left(
\sup_{\Vert\bbm{\theta}\Vert_0\le s}
|g(\bbm{\theta};K,\mathcal A)_j-\bbm{\mu}(\bbm{\theta})_j|
>
C\sqrt{M_{\bm{x},2,1}}\sqrt{\frac{K}{m}}
\right)
\le
2\exp(-cK).
\]

Finally, taking a union bound over $j\in[p]$, we obtain, after adjusting the constants, that
\[
\mathbb P\left(
\sup_{\Vert\bbm{\theta}\Vert_0\le s}
\max_{1\le j\le p}
|g(\bbm{\theta};K,\mathcal A)_j-\bbm{\mu}(\bbm{\theta})_j|
>
C\sqrt{M_{\bm{x},2,1}}\sqrt{\frac{K}{m}}
\right)
\le
2p\exp(-cK).
\]
On the complement of this event, for every $\bbm{\theta}$ with
$\Vert\bbm{\theta}\Vert_0\le s$ and every $S\subset[p]$ with $|S|\le 2s+s^*$,
\begin{align}
\left\lVert
g(\bbm{\theta};K,\mathcal A)_S
-
\mathbb E[\nabla \mathcal L(\bbm{\theta};\bm{x}_i,y_i)]_{S}
\right\rVert _2
&\le
\sqrt{|S|}
\sup_{\Vert\bbm{\theta}\Vert_0\le s}
\max_{1\le j\le p}
|g(\bbm{\theta};K,\mathcal A)_j-\bbm{\mu}(\bbm{\theta})_j|\\
&\lesssim
\sqrt{\frac{M_{\bm{x},2,1}\bar s K}{m}}.\label{supp:eq:uniform_srs_logistic_regression}
\end{align}
Since $K\ge C V_j$, we can take $K\ge C \bar s \log p$, then \eqref{supp:eq:uniform_srs_logistic_regression} holds with probability at least $1-C\exp(-c\bar s \log p)$ for adjusted constants $C,c>0$. This proves the uniform SRS condition.
\end{proof}

\subsection{Proof of Theorem \ref{thm:logistic_two_stage_right}}
\begin{proof}[Proof of Theorem~\ref{thm:logistic_two_stage_right}]
Throughout the proof, constants \(C,c>0\) may change from line to line.
Let $a={2}/{\kappa_+}$, $b={c_\kappa\kappa_-}/{2}$. By Proposition~\ref{prop:logistic_srcg}, the population logistic score map
satisfies the SRCG inequality on \(\Theta_s(R)\) with constants \(a\) and \(b\).
Let \(\bar\phi,\rho,C_\gamma\), and \(c_s\) be the constants in
Lemma~\ref{lem:one_stage_contraction_pathwise_srs}. We take
\(s\ge c_s s^*\) and $s\asymp s^*$. Since the logistic SRS bounds in
Proposition~\ref{prop:logistic_mom_srs} have no multiplicative term, the
corresponding values of \(\phi\) in both phases are equal to zero, and hence are
smaller than \(\bar\phi\).

The initialization satisfies \(\|\bbm\theta^0\|_0\le s\), and every update in
Algorithm~\ref{alg:two_stage_right} is hard-thresholded at level \(s\).
Therefore all iterates are \(s\)-sparse. 

\noindent\textbf{Phase I.}
Set
\[
\gamma_1=\gamma_{\mathrm{log}}(n_1,K_1)
=
C\sqrt{\bar s\,M_{\bm x,2,1}}
\left(\frac{K_1}{n_1}\right)^{1/2}.
\]
Since \(K_1\asymp \bar s\log p\), \(n_1\asymp n\), and the sample-size condition
implies \(K_1\le c n_1\) after enlarging the implicit constant,
Proposition~\ref{prop:logistic_mom_srs}, part (1), gives a uniform
\((\Theta_s,\bar s,0,\gamma_1)\)-SRS event on \(\mathcal D^{(1)}\) with
probability at least \(1-C\exp\{-c\bar s\log p\}\). Denote this event by \(E_1\).

On \(E_1\), since every update is
hard-thresholded at level \(s\), the uniform-to-pathwise route in
Lemma~\ref{lem:srs_to_pathwise}, part (1), applies with
\[
\mathcal I_1=\{0,\ldots,T_1-1\},
\qquad
\mathcal A_0=\mathcal D^{(1)},
\qquad
\pi_1(t)=0 .
\]
Consequently, the Phase I gradient evaluations satisfy pathwise
\((\mathcal I_1,\pi_1,\bar s,0,\gamma_1)\)-SRS.

We now verify the required $\|\bbm\theta^t\|_2\le R$ condition for SRCG in Proposition \ref{prop:logistic_srcg} by induction. We prove by induction that \(\bbm\theta^t\in B(R)\) for
\(t=0,\ldots,T_1\). Since
\[
\|\bbm\theta^0\|_2
\le
\|\bbm\theta^*\|_2+\|\bbm\theta^0-\bbm\theta^*\|_2
\le
\|\bbm\theta^*\|_2+R_0
<
R,
\]
the claim holds at \(t=0\). Suppose it holds for
\(t=0,\ldots,k\), where \(k<T_1\). Then, for each
\(t=0,\ldots,k\), we have
\(\bbm\theta^t\in\Theta_s(R)\), and hence the localized SRCG inequality in
Proposition~\ref{prop:logistic_srcg} applies at \(\bbm\theta^t\).

Inspecting the proof of Lemma~\ref{lem:one_stage_contraction_pathwise_srs},
the same deterministic recursion holds whenever the SRCG inequality is valid
along the realized path. Therefore, applying that pathwise contraction argument
to the first \(k+1\) updates yields
\[
\|\bbm\theta^{k+1}-\bbm\theta^*\|_2
\le
\rho^{k+1}\|\bbm\theta^0-\bbm\theta^*\|_2
+
C_\gamma\gamma_1 .
\]
By the sample-size condition, after choosing the implicit constant sufficiently
large, \(C_\gamma\gamma_1\le 1/2\). Hence
\[
\|\bbm\theta^{k+1}-\bbm\theta^*\|_2
\le
R_0+\frac12 .
\]
Consequently,
\[
\|\bbm\theta^{k+1}\|_2
\le
\|\bbm\theta^*\|_2+R_0+\frac12
<
R .
\]
This completes the induction. Thus all Phase I iterates belong to
\(\Theta_s(R)\), and the SRCG inequality is valid throughout Phase I.

Since both pathwise SRS and the SRCG inequality are valid, appling Lemma~\ref{lem:one_stage_contraction_pathwise_srs} to Phase I with
\(t_0=0\), \(T=T_1\), \(\phi=0\), and \(\gamma=\gamma_1\), we obtain
\[
\|\bbm\theta^{T_1}-\bbm\theta^*\|_2
\le
\rho^{T_1}\|\bbm\theta^0-\bbm\theta^*\|_2
+
C_\gamma\gamma_1 .
\]
By assumption, \(\|\bbm\theta^0-\bbm\theta^*\|_2\le R_0\), where \(R_0\) is at
most polynomial in \(n\). Therefore the implicit constant in
\(T_1\asymp\log n\) can be chosen sufficiently large so that
\[
\rho^{T_1}R_0\le C_\gamma\gamma_1 .
\]
It follows that, on \(E_1\),
\[
\|\bbm\theta^{T_1}-\bbm\theta^*\|_2
\le
C_\gamma'\gamma_1 .
\]
Using \(K_1\asymp\bar s\log p\) and \(n_1\asymp n\), we have
\[
\gamma_1
\lesssim
\sqrt{\bar s\,M_{\bm x,2,1}}
\left(\frac{\bar s\log p}{n}\right)^{1/2}.
\]
Hence
\[
\|\bbm\theta^{T_1}-\bbm\theta^*\|_2
\lesssim
\sqrt{\bar s\,M_{\bm x,2,1}}
\left(\frac{\bar s\log p}{n}\right)^{1/2},
\]
with the implicit constant depending on the SRCG constants and \(\eta_0\).
This proves the Phase I bound.

\noindent\textbf{Phase II.}
Let
\[
m_2=\frac{n_2}{T_2}
\]
denote the Phase II batch size. Unequal batch sizes only change constants if
\(m_2\) is replaced by the minimum batch size. Set
\[
\gamma_2=\gamma_{\mathrm{log}}(m_2,K_2)
=
C\sqrt{\bar s\,M_{\bm x,2,1}}
\left(\frac{K_2}{m_2}\right)^{1/2}.
\]
For each \(\ell=1,\ldots,T_2\), define the sigma-field
\[
\mathcal F_{\ell-1}
=
\sigma\!\left(
\mathcal D^{(1)},
\mathcal D^{(2)}_1,\ldots,\mathcal D^{(2)}_{\ell-1}
\right).
\]
The iterate \(\bbm\theta^{T_1+\ell-1}\) is
\(\mathcal F_{\ell-1}\)-measurable, while the fresh batch
\(\mathcal D^{(2)}_\ell\) is independent of \(\mathcal F_{\ell-1}\). Conditional
on \(\mathcal F_{\ell-1}\), the current iterate
\(\bbm\theta^{T_1+\ell-1}\) is therefore fixed relative to
\(\mathcal D^{(2)}_\ell\). Moreover, by the same induction argument as in Phase I, the bounded-region condition holds, so \(\bbm\theta^{T_1+\ell-1}\in\Theta_s(R)\).

Since \(K_2\asymp\log p\), \(m_2=n_2/T_2\), \(n_2\asymp n\), and the
sample-size condition ensures \(K_2\le c m_2\) after enlarging constants,
Proposition~\ref{prop:logistic_mom_srs}, part (2), applied conditionally on
\(\mathcal F_{\ell-1}\), gives
\[
\mathbb P\!\left(
\mathsf{SRS}\bigl(
\bbm\theta^{T_1+\ell-1};
K_2,\mathcal D^{(2)}_\ell,\bar s,0,\gamma_2
\bigr)^c
\;\middle|\;
\mathcal F_{\ell-1}
\right)
\le
Cp^{-c}.
\]
Now apply the independent-batch fixed-point route in
Lemma~\ref{lem:srs_to_pathwise}, part (2), with
\[
\mathcal I_2=\{T_1,\ldots,T_1+T_2-1\},
\qquad
\pi_2(T_1+\ell-1)=\ell .
\]
It follows that the whole Phase II path satisfies pathwise
\((\mathcal I_2,\pi_2,\bar s,0,\gamma_2)\)-SRS with probability at least
\[
1-CT_2p^{-c}.
\]
Since \(T_2\asymp\log s\) and \(s\le p\), this probability is at least
\(1-Cp^{-c}\) after adjusting \(C\) and \(c\).

On the intersection of the Phase I event and the Phase II pathwise SRS event,
Lemma~\ref{lem:one_stage_contraction_pathwise_srs}, applied to Phase II with
\(t_0=T_1\), \(T=T_2\), \(\phi=0\), and \(\gamma=\gamma_2\), yields
\[
\|\widehat{\bbm\theta}-\bbm\theta^*\|_2
=
\|\bbm\theta^{T_1+T_2}-\bbm\theta^*\|_2
\le
\rho^{T_2}\|\bbm\theta^{T_1}-\bbm\theta^*\|_2
+
C_\gamma\gamma_2 .
\]
Using the Phase I bound already proved,
\[
\|\widehat{\bbm\theta}-\bbm\theta^*\|_2
\le
C\rho^{T_2}\gamma_1
+
C_\gamma\gamma_2 .
\]

It remains to compare the carried-over Phase I error with the Phase II
statistical radius. Since \(K_1\asymp\bar s\log p\), \(K_2\asymp\log p\),
\(n_1\asymp n_2\asymp n\), and \(m_2=n_2/T_2\),
\[
\gamma_1
\lesssim
\sqrt{\bar s\,M_{\bm x,2,1}}
\left(\frac{\bar s\log p}{n}\right)^{1/2},
\qquad
\gamma_2
\lesssim
\sqrt{\bar s\,M_{\bm x,2,1}}
\left(\frac{T_2\log p}{n}\right)^{1/2}.
\]
Therefore
\[
\frac{\gamma_1}{\gamma_2}
\lesssim
\left(\frac{\bar s}{T_2}\right)^{1/2}.
\]
Because \(s\asymp s^*\), we have \(\bar s=2s+s^*\asymp s\). Choosing the
implicit constant in \(T_2\asymp\log s\) sufficiently large gives
\[
\rho^{T_2}
\left(\frac{\bar s}{T_2}\right)^{1/2}
\le C,
\]
where $C>0$ is an absolute constant, and hence
\[
\rho^{T_2}\gamma_1\lesssim \gamma_2 .
\]
Consequently,
\[
\|\widehat{\bbm\theta}-\bbm\theta^*\|_2
\lesssim
\gamma_2 .
\]
Finally, using \(K_2\asymp\log p\), \(m_2=n_2/T_2\), \(T_2\asymp\log s\),
and \(n_2\asymp n\), we obtain
\[
\gamma_2
\lesssim
\sqrt{\bar s\,M_{\bm x,2,1}}
\left(\frac{\log s\cdot\log p}{n}\right)^{1/2}.
\]
Thus
\[
\|\widehat{\bbm\theta}-\bbm\theta^*\|_2
\lesssim
\sqrt{\bar s\,M_{\bm x,2,1}}
\left(\frac{\log s\cdot\log p}{n}\right)^{1/2}.
\]

Combining the Phase I and Phase II failure probabilities by a union bound gives
the overall probability
\[
1-C\exp\{-c\bar s\log p\}-Cp^{-c}.
\]
This completes the proof.
\end{proof}

\subsection{Auxiliary Lemmas}
The following lemma is a convenient corollary of the uniform median-of-means
control developed in \citet[Theorem 2]{lugosi2019near}.
\begin{lemma}[Uniform MoM estimator on a function class]
\label{RIGHT:lemma:uniform_MoM_function_class}
Let $\mathcal F\subset L_2(P)$ be a class of measurable real-valued functions, and let
$Z_1,\dots,Z_n$ be i.i.d.\ random variables with common law $P$. Partition $[n]$ into
$K$ equal-size blocks $\mathcal B_1,\dots,\mathcal B_K$ of size $B=n/K$. For
$f\in\mathcal F$, write
\[
Pf=\mathbb E[f(Z_1)],
\qquad
\bar f_k = B^{-1}\sum_{i\in\mathcal B_k} f(Z_i),
\qquad
\operatorname{MoM}_K(f)=\operatorname{med}(\bar f_1,\dots,\bar f_K).
\]
Let
\[
D=\{h:\ \|h\|_{L_2(P)}\le 1\},
\]
and let $M(\mathcal F,\eta_1 D)$ denote the $L_2(P)$-packing number of $\mathcal F$ by
balls of radius $\eta_1$. Assume that there exist $\eta_0,\eta_1,\eta_2>0$ such that
\[
p_B(\eta_0):=\sup_{f\in\mathcal F}\mathbb P\bigl(|\bar f_1-Pf|>\eta_0\bigr)\le \frac{1}{20},
\]
\[
\log M(\mathcal F,\eta_1 D)\le c_1 K\log(e/p_B(\eta_0)),
\]
and, with
\[
W=(\mathcal F-\mathcal F)\cap \eta_1 D,
\qquad
\bar W=\{w-Pw:\ w\in W\},
\]
\[
\mathbb E\sup_{\bar w\in\bar W}
\left|
\sum_{i=1}^n \varepsilon_i \bar w(Z_i)
\right|
\le c_2 \eta_2 n,
\qquad
\eta_2\ge c_3\frac{\eta_1}{\sqrt{B}},
\]
where $\varepsilon_1,\dots,\varepsilon_n$ are i.i.d.\ Rademacher random variables,
independent of $Z_1,\dots,Z_n$. Then
\[
\mathbb P\left(
\sup_{f\in\mathcal F}
\bigl|\operatorname{MoM}_K(f)-Pf\bigr|
>
\eta_0+\eta_2
\right)
\le
2\exp(-c_4 K),
\]
where $c_1,c_2,c_3,c_4>0$ are absolute constants.
\end{lemma}
To characterize the complexity of the logistic coordinate class, we introduce the following lemma.
\begin{lemma}[VC-subgraph complexity of the logistic coordinate class]
\label{RIGHT:lemma:shatter_coefficient_logistic_coordinate_class}
For each fixed $j\in[p]$, define
\[
\mathcal F_j
=
\left\{
f_{\bbm\theta,j}:(\bm x,y)\mapsto
\left(
\frac{e^{\langle \bm x,\bbm\theta\rangle}}
{1+e^{\langle \bm x,\bbm\theta\rangle}}
-y
\right)x_j
:\ \Vert\bbm\theta\Vert_0\le s
\right\}.
\]
Let
\[
\mathcal A_j
=
\left\{
\left\{
(\bm x,y,t)\in\mathbb R^p\times\{0,1\}\times\mathbb R:
f_{\bbm\theta,j}(\bm x,y)>t
\right\}
:\ \Vert\bbm\theta\Vert_0\le s
\right\}.
\]
Then, for every integer $m\ge 1$,
\[
s(\mathcal A_j,m)
\le
\exp\!\left\{
C s\log\!\left(\frac{ep}{s}\right)
+
C s\log\!\left(\frac{e(m\vee s)}{s}\right)
\right\},
\]
where $C>0$ is an absolute constant.

Consequently, $\mathcal F_j$ is a VC-subgraph class with VC index $V_j$ satisfying
\[
V_j \le C s\log\!\left(\frac{ep}{s}\right).
\]
Moreover, if $F_j(\bm x,y)=|x_j|$, then there exist absolute constants $A>1$ and
$C>0$ such that, for every probability measure $Q$ with $\|F_j\|_{Q,2}>0$ and every
$0<\varepsilon\le 1$,
\[
\log N\!\left(
\varepsilon \|F_j\|_{Q,2},
\mathcal F_j,
L_2(Q)
\right)
\le
C V_j \log\!\left(\frac{A}{\varepsilon}\right).
\]
\end{lemma}

\begin{proof}
Fix $j\in[p]$. For each support set $U\subset[p]$ with $|U|=d$, define
\[
\mathcal A_{U,j}
=
\left\{
\left\{
(\bm x,y,t)\in\mathbb R^p\times\{0,1\}\times\mathbb R:
\left(
\frac{e^{\langle \bm x,\bbm\theta\rangle}}
{1+e^{\langle \bm x,\bbm\theta\rangle}}
-y
\right)x_j>t
\right\}
:\ \operatorname{supp}(\bbm\theta)\subseteq U
\right\}.
\]
Then
\[
\mathcal A_j = \bigcup_{U\subset[p],\,|U|\le s}\mathcal A_{U,j}.
\]

Fix such a set $U$ and write $U=\{u_1,\dots,u_d\}$. For any
$\bbm\theta\in\mathbb R^p$ with $\operatorname{supp}(\bbm\theta)\subseteq U$, define
\[
\bbm\theta^{(U)}=(\theta_{u_1},\dots,\theta_{u_d})^\top\in\mathbb R^d,
\qquad
\bm x^{(U)}=(x_{u_1},\dots,x_{u_d})^\top\in\mathbb R^d.
\]
Then
\[
\langle \bm x,\bbm\theta\rangle
=
\langle \bm x^{(U)},\bbm\theta^{(U)}\rangle.
\]

Now fix
\[
f_{\bbm\theta,j}(\bm x,y)
=
\left(
\frac{e^{\langle \bm x,\bbm\theta\rangle}}
{1+e^{\langle \bm x,\bbm\theta\rangle}}
-y
\right)x_j.
\]
We analyze the event $\{f_{\bbm\theta,j}(\bm x,y)>t\}$.

If $x_j=0$, then $f_{\bbm\theta,j}(\bm x,y)=0$ for all $\bbm\theta$, and hence
$\{f_{\bbm\theta,j}(\bm x,y)>t\}$ is the parameter-free event $\{0>t\}$.

Now suppose $x_j\neq 0$. Since
\[
\frac{e^{\langle \bm x,\bbm\theta\rangle}}
{1+e^{\langle \bm x,\bbm\theta\rangle}}
\in (0,1),
\]
the only nontrivial cases are
\[
\begin{array}{ll}
y=0,\ x_j>0,\ 0<t<x_j, &
y=0,\ x_j<0,\ x_j<t<0, \\[0.3em]
y=1,\ x_j>0,\ -x_j<t<0, &
y=1,\ x_j<0,\ 0<t<-x_j.
\end{array}
\]
Outside these four cases, the event $\{f_{\bbm\theta,j}(\bm x,y)>t\}$ is parameter-free.

On the nontrivial region, define $\varepsilon(\bm x,y,t)\in\{\pm1\}$ and
$\psi(\bm x,y,t)\in\mathbb R$ by
\[
\varepsilon(\bm x,y,t)
=
\begin{cases}
\ \ 1, & y=0,\ x_j>0,\ 0<t<x_j,\\
-1, & y=0,\ x_j<0,\ x_j<t<0,\\
\ \ 1, & y=1,\ x_j>0,\ -x_j<t<0,\\
-1, & y=1,\ x_j<0,\ 0<t<-x_j,
\end{cases}
\]
and
\[
\psi(\bm x,y,t)
=
\begin{cases}
\log\!\left(\dfrac{t}{x_j-t}\right),
& y=0,\ x_j>0,\ 0<t<x_j,\\[1em]
\log\!\left(\dfrac{t}{x_j-t}\right),
& y=0,\ x_j<0,\ x_j<t<0,\\[1em]
\log\!\left(\dfrac{x_j+t}{-t}\right),
& y=1,\ x_j>0,\ -x_j<t<0,\\[1em]
\log\!\left(\dfrac{x_j+t}{-t}\right),
& y=1,\ x_j<0,\ 0<t<-x_j.
\end{cases}
\]
A direct case-by-case verification shows that, on the nontrivial region,
\[
f_{\bbm\theta,j}(\bm x,y)>t
\iff
\varepsilon(\bm x,y,t)
\Bigl(
\langle \bm x^{(U)},\bbm\theta^{(U)}\rangle
-
\psi(\bm x,y,t)
\Bigr)>0.
\]
Define
\[
\Phi_U(\bm x,y,t)
=
\Bigl(
\varepsilon(\bm x,y,t)\bm x^{(U)},
-
\varepsilon(\bm x,y,t)\psi(\bm x,y,t)
\Bigr)\in\mathbb R^{d+1}.
\]
Then, on the nontrivial region,
\[
f_{\bbm\theta,j}(\bm x,y)>t
\iff
(\bbm\theta^{(U)},1)^\top \Phi_U(\bm x,y,t)>0.
\]

Now fix arbitrary points
\[
(\bm x_1,y_1,t_1),\dots,(\bm x_m,y_m,t_m)
\in \mathbb R^p\times\{0,1\}\times\mathbb R.
\]
Let $I\subset[m]$ be the set of indices belonging to the nontrivial region, and write
$q=|I|\le m$. For $r\notin I$, the label
\[
\mathbf 1\!\left\{
f_{\bbm\theta,j}(\bm x_r,y_r)>t_r
\right\}
\]
is independent of $\bbm\theta$. Therefore, the number of dichotomies induced by
$\mathcal A_{U,j}$ on the $m$ points is at most the number of dichotomies induced on
the $q$ nontrivial points by affine halfspaces in $\mathbb R^{d+1}$.

Hence, by Corollary 13.1 of \citet{devroye2013probabilistic},
\[
s(\mathcal A_{U,j},m)
\le
2\sum_{\ell=0}^{d+1}\binom{q-1}{\ell}
\le
2\sum_{\ell=0}^{d+1}\binom{m-1}{\ell}
\le
\exp\!\left\{
C d\log\!\left(\frac{e(m\vee d)}{d}\right)
\right\}.
\]

We now take the union over supports. Since
\[
\mathcal A_j = \bigcup_{U\subset[p],\,|U|\le s}\mathcal A_{U,j},
\]
we have
\[
s(\mathcal A_j,m)
\le
1+\sum_{r=1}^s \binom{p}{r}
\exp\!\left\{
C r\log\!\left(\frac{e(m\vee r)}{r}\right)
\right\}.
\]
Using
\[
\sum_{r=1}^s \binom{p}{r}
\le
\left(\frac{ep}{s}\right)^s,
\]
and enlarging $C$ if necessary, it follows that
\[
s(\mathcal A_j,m)
\le
\exp\!\left\{
C s\log\!\left(\frac{ep}{s}\right)
+
C s\log\!\left(\frac{e(m\vee s)}{s}\right)
\right\}.
\]
This proves the shatter-coefficient bound and hence $\mathcal F_j$ is a VC-subgraph class.

Next, let $V_j$ denote the VC index of $\mathcal F_j$. Since
\[
2^{V_j}\le s(\mathcal A_j,V_j),
\]
the displayed shatter-coefficient bound implies, after enlarging the constant if necessary,
that
\[
V_j \le C s\log\!\left(\frac{ep}{s}\right).
\]

Finally, the covering-number bound follows from the standard entropy estimate for
VC-subgraph classes; see \citet[Section 2.6.2]{van1996weak}. More precisely, if
$F_j(\bm x,y)=|x_j|$ is the envelope of $\mathcal F_j$, then there exist absolute
constants $A>1$ and $C>0$ such that, for every probability measure $Q$ with
$\|F_j\|_{Q,2}>0$ and every $0<\varepsilon\le 1$,
\[
\log N\!\left(
\varepsilon \|F_j\|_{Q,2},
\mathcal F_j,
L_2(Q)
\right)
\le
C V_j \log\!\left(\frac{A}{\varepsilon}\right).
\]
This completes the proof.
\end{proof}

We now present the following lemma to show that Assumption \ref{assump:sparse_logistic_curvature} can hold for heavy-tailed designs.
\begin{lemma}[Student-$t$ designs satisfy sparsity-restricted logistic curvature]
\label{lem:student_t_sparse_logistic_curvature}
Let $\sigma(z)={e^z}/{(1+e^z)}$. Suppose that $\bm x_i$ follows a centered elliptical multivariate
Student-$t_\nu$ distribution with $\nu>2$, normalized so that $\mathbb E(\bm x_i\bm x_i^\top) = \boldsymbol\Sigma_{\bm x}$.
 Suppose that Assumption \ref{assump:linear_sparse_eigenvalues} holds with parameters $\kappa_-$ and $\kappa_+$. Then there exists a constant $c_{\nu,R,\kappa_+}>0$, depending only on $\nu$, $R$, and $\kappa_+$, such that, for every
$S\subset[p]$ with $|S|\le 2s+s^*$, every
${\bbm\theta}\in B(R)$ satisfying
$\operatorname{supp}({\bbm\theta})\subseteq S$, and every
$\bm v\in\mathbb R^p$ satisfying
$\operatorname{supp}(\bm v)\subseteq S$,
\[
\mathbb E\left[
\sigma'(\bm x_i^\top{\bbm\theta})
(\bm x_i^\top\bm v)^2
\right]
\ge
c_{\nu,R,\kappa_+}
\mathbb E\left[
(\bm x_i^\top\bm v)^2
\right].
\]
\end{lemma}

\begin{proof}
The claim is trivial when $\bm v=\boldsymbol 0$, so assume
$\bm v\ne \boldsymbol 0$. Fix
$S\subset[p]$ with $|S|\le 2s+s^*$,
$\operatorname{supp}(\bbm{\theta})\subseteq S$, and
$\operatorname{supp}(\bm v)\subseteq S$.

Let
\[
A^2
=
\bbm{\theta}^{\top}
\boldsymbol\Sigma_{\boldsymbol x}
\bbm{\theta},
\qquad
B^2
=
\bm v^\top
\boldsymbol\Sigma_{\boldsymbol x}
\bm v .
\]
By the sparse upper eigenvalue condition and
$\bbm{\theta}\in B(R)$,
\[
A^2
\le
\kappa_+\|\bbm{\theta}\|_2^2
\le
\kappa_+R^2.
\]
Moreover, by the sparse lower eigenvalue condition,
\[
B^2
\ge
\kappa_-\|\bm v\|_2^2
>0.
\]

Since $\bm x_i$ is elliptically Student-$t_\nu$, every two-dimensional
linear projection is again a centered bivariate Student-$t_\nu$ vector. This follows from the closure of elliptical Student-$t$ distributions under
linear transformations. Hence
there exist $\rho\in[-1,1]$ and a centered bivariate Student-$t_\nu$ vector
$(U_1,U_2)$ with covariance matrix $I_2$ such that
\[
\left(
\bm x_i^\top\bbm{\theta},
\bm x_i^\top\bm v
\right)
\stackrel{d}{=}
\left(
A U_1,\,
B\{\rho U_1+\sqrt{1-\rho^2}U_2\}
\right).
\]
When $A=0$, the same representation holds with any fixed choice of
$\rho$, for instance $\rho=0$.

Therefore,
\[\mathbb E\left[
\sigma'(\bm x_i^\top\bbm{\theta})
(\bm x_i^\top\bm v)^2
\right]=B^2\mathbb E\left[\sigma'(A U_1)\{\rho U_1+\sqrt{1-\rho^2}U_2\}^2
\right].\]
Since $A\le R\sqrt{\kappa_+}$ and
\[
\sigma'(z)=\frac{e^z}{(1+e^z)^2}
\]
is positive and symmetric in $z$, we have, on the event
$\{|U_1|\le 1\}$,
\[
\sigma'(A U_1)
\ge
\frac{e^{R\sqrt{\kappa_+}}}
{(1+e^{R\sqrt{\kappa_+}})^2}.
\]
Thus,
\[
\begin{aligned}
&
\mathbb E\left[
\sigma'(A U_1)
\{\rho U_1+\sqrt{1-\rho^2}U_2\}^2
\right]
\\
&\quad\ge
\frac{e^{R\sqrt{\kappa_+}}}
{(1+e^{R\sqrt{\kappa_+}})^2}
\mathbb E\left[
\{\rho U_1+\sqrt{1-\rho^2}U_2\}^2
\mathbf 1_{\{|U_1|\le 1\}}
\right].
\end{aligned}
\]
By the symmetry of the bivariate Student-$t_\nu$ distribution,
\[
\mathbb E\left[
U_1U_2\mathbf 1_{\{|U_1|\le 1\}}
\right]=0.
\]
Hence
\[
\begin{aligned}
&
\mathbb E\left[
\{\rho U_1+\sqrt{1-\rho^2}U_2\}^2
\mathbf 1_{\{|U_1|\le 1\}}
\right]
\\
&\quad =
\rho^2
\mathbb E\left[
U_1^2\mathbf 1_{\{|U_1|\le 1\}}
\right]
+
(1-\rho^2)
\mathbb E\left[
U_2^2\mathbf 1_{\{|U_1|\le 1\}}
\right]
\\
&\quad \ge
\min\left\{
\mathbb E\left[
U_1^2\mathbf 1_{\{|U_1|\le 1\}}
\right],
\mathbb E\left[
U_2^2\mathbf 1_{\{|U_1|\le 1\}}
\right]
\right\}.
\end{aligned}
\]
The two expectations in the minimum are strictly positive because the
bivariate Student-$t_\nu$ distribution has a positive density on
$\mathbb R^2$. Define
\[
c_{\nu,R,\kappa_+}
=
\frac{e^{R\sqrt{\kappa_+}}}
{(1+e^{R\sqrt{\kappa_+}})^2}
\min\left\{
\mathbb E\left[
U_1^2\mathbf 1_{\{|U_1|\le 1\}}
\right],
\mathbb E\left[
U_2^2\mathbf 1_{\{|U_1|\le 1\}}
\right]
\right\}.
\]
Then $c_{\nu,R,\kappa_+}>0$, and
\[
\mathbb E\left[
\sigma'(\bm x_i^\top\bbm{\theta})
(\bm x_i^\top\bm v)^2
\right]
\ge
c_{\nu,R,\kappa_+}
B^2.
\]
Since
\[
B^2
=
\bm v^\top
\boldsymbol\Sigma_{\boldsymbol x}
\bm v
=
\mathbb E[
(\bm x_i^\top\bm v)^2],
\]
we obtain
\[
\mathbb E\left[
\sigma'(\bm x_i^\top\bbm{\theta})
(\bm x_i^\top\bm v)^2
\right]
\ge
c_{\nu,R,\kappa_+}
\mathbb E[
(\bm x_i^\top\bm v)^2].
\]
This proves the lemma.
\end{proof}


\section{Proofs of Lower Bounds}
\label{sec:appendix_minimax}

This appendix proves the phase-wise lower bounds stated in Section~\ref{sec:lower_bound}. Appendix~\ref{append:local_refinement_lower} proves the fixed-design local-refinement lower bound. Appendix~\ref{append:proof_sparse_operator_lower} proves the sparse operator-norm lower bound for design-driven gradient stability. Appendix~\ref{append:logistic_lower_bound_result} gives the corresponding lower-bound statement and proof for sparse logistic regression.

\subsection{Fixed-Design Local-Refinement Lower Bound}
\label{append:local_refinement_lower}

We first show that the final local refinement rate cannot be improved under only a finite \((1+\delta)\)-moment condition on the noise. This lower bound is stated in a fixed-design framework, following the adaptive Huber regression literature. Its purpose is to isolate the noise-driven difficulty: even when the design is fixed and well behaved, heavy-tailed noise limits the final statistical precision.

Let \(\bm X\in\mathbb R^{n\times p}\) be a fixed design matrix with rows \(\bm x_i^\top\). Consider the model
\[
y_i=\bm x_i^\top\bbm\theta^*+\epsilon_i,\qquad i=1,\ldots,n,
\]
where the errors are independent and mean zero. We impose the following fixed-design condition on a sparse support.

\begin{assumption}[Fixed-design lower-bound condition]
\label{assump:fixed_design_lower}
There exists a support \(S_0\subset[p]\) with \(|S_0|=s^*\) and a sign vector \(\bm u\in\{-1,1\}^n\cap\operatorname{col}(\bm X_{S_0})\) such that
\[
\left\|\frac{1}{n}\bm X_{S_0}^\top \bm u\right\|_2\ge \alpha\sqrt{s^*},
\qquad
\lambda_{\max}\left(\frac{1}{n}\bm X_{S_0}^\top\bm X_{S_0}\right)\le \kappa_u,
\]
for constants \(\alpha>0\) and \(\kappa_u<\infty\).
\end{assumption}

\begin{theorem}[Fixed-design lower bound for local refinement]
\label{thm:fixed_design_local_refinement_lower}
Let \(0<\delta\le1\), \(s^*\ge1\), and \(p\ge2\). Suppose Assumption~\ref{assump:fixed_design_lower} holds. For any \(A_0>0\) satisfying \(A_0\log p\le c n\) for a sufficiently small constant \(c>0\), there exist constants \(c_1,c_2>0\), depending only on \(\delta\), such that
\begin{equation}
\inf_{\widehat{\bbm\theta}}
\sup_{P}
\mathbb{P}_{P}\left[
\|\widehat{\bbm\theta}-\bbm\theta^*\|_2
\ge
c_1\,\frac{\alpha}{\kappa_u}\,
M_{\epsilon,1+\delta}^{1/(1+\delta)}
\sqrt{s^*}
\left(\frac{A_0\log p}{n}\right)^{\delta/(1+\delta)}
\right]
\ge
c_2p^{-A_0},
\label{eq:fixed_design_local_refinement_lower}
\end{equation}
where the infimum is over all estimators and the supremum is over all fixed-design linear models with \(\|\bbm\theta^*\|_0\le s^*\), independent mean-zero errors, and \(\max_{1\le i\le n}\mathbb{E}[|\epsilon_i|^{1+\delta}]\le M_{\epsilon,1+\delta}\).
\end{theorem}

\begin{proof}[Proof of Theorem~\ref{thm:fixed_design_local_refinement_lower}]
The proof follows the fixed-design lower-bound construction used for adaptive Huber regression. The main idea is to construct two sparse regression models that are well separated in parameter space but generate the same response vector with non-negligible probability.

Let \(S_0\subset[p]\) and \(\bm u\in\{-1,1\}^n\cap\operatorname{col}(\bm X_{S_0})\) be as in Assumption~\ref{assump:fixed_design_lower}. Since \(\bm u\in\operatorname{col}(\bm X_{S_0})\), there exists \(\bm b\in\mathbb R^{s^*}\) such that \(\bm X_{S_0}\bm b=\bm u\). Moreover,
\[
\frac{1}{n}\bm X_{S_0}^\top \bm u=\frac{1}{n}\bm X_{S_0}^\top\bm X_{S_0}\bm b.
\]
By Assumption~\ref{assump:fixed_design_lower},
\[
\alpha\sqrt{s^*}
\le
\left\|\frac{1}{n}\bm X_{S_0}^\top \bm u\right\|_2
\le
\lambda_{\max}\left(\frac{1}{n}\bm X_{S_0}^\top\bm X_{S_0}\right)\|\bm b\|_2
\le
\kappa_u\|\bm b\|_2.
\]
Thus
\begin{equation}
\|\bm b\|_2\ge \frac{\alpha}{\kappa_u}\sqrt{s^*}.
\label{eq:lower_bound_b_norm}
\end{equation}

Let \(0<\gamma<1/2\) and \(c_0>0\) be chosen later. Define two parameter vectors \(\bbm\theta_+\) and \(\bbm\theta_-\), both supported on \(S_0\), by
\[
(\bbm\theta_+)_{S_0}=c_0\gamma \bm b,\qquad
(\bbm\theta_-)_{S_0}=-c_0\gamma \bm b,
\]
and set all coordinates outside \(S_0\) equal to zero. Then
\[
\bm X\bbm\theta_+=c_0\gamma \bm u,
\qquad
\bm X\bbm\theta_-=-c_0\gamma \bm u.
\]
Their separation satisfies
\begin{equation}
\frac{1}{2}\|\bbm\theta_+-\bbm\theta_-\|_2
=
\|\bbm\theta_+\|_2
=
c_0\gamma\|\bm b\|_2
\ge
c_0\gamma\frac{\alpha}{\kappa_u}\sqrt{s^*}.
\label{eq:fixed_design_param_sep}
\end{equation}

We next define two response distributions. Under the \(+\) model, independently over \(i=1,\ldots,n\),
\[
y_i=
\begin{cases}
0, & \text{with probability }1-\gamma,\\
c_0u_i, & \text{with probability }\gamma.
\end{cases}
\]
Under the \(-\) model,
\[
y_i=
\begin{cases}
0, & \text{with probability }1-\gamma,\\
-c_0u_i, & \text{with probability }\gamma.
\end{cases}
\]
Under the \(+\) model, since \(\bm x_i^\top\bbm\theta_+=c_0\gamma u_i\), the regression error \(\epsilon_i^+=y_i-\bm x_i^\top\bbm\theta_+\) takes values
\[
-c_0\gamma u_i
\quad\text{with probability }1-\gamma,
\qquad
c_0(1-\gamma)u_i
\quad\text{with probability }\gamma.
\]
Thus \(\mathbb{E}[\epsilon_i^+]=0\). Similarly, under the \(-\) model, \(\mathbb{E}[\epsilon_i^-]=0\). In both models,
\[
\mathbb{E}\left[|\epsilon_i^\pm|^{1+\delta}\right]
=
c_0^{1+\delta}
\left\{(1-\gamma)\gamma^{1+\delta}+\gamma(1-\gamma)^{1+\delta}\right\}
\le
2c_0^{1+\delta}\gamma,
\]
where we used \(0<\gamma<1/2\). Choose
\[
c_0=\left(\frac{M_{\epsilon,1+\delta}}{2\gamma}\right)^{1/(1+\delta)}.
\]
Then \(\mathbb{E}[|\epsilon_i^\pm|^{1+\delta}]\le M_{\epsilon,1+\delta}\) for every \(i\).

Let \(\mathcal E_0=\{y_1=\cdots=y_n=0\}\). Under both models,
\[
\mathbb{P}_+(\mathcal E_0)=\mathbb{P}_-(\mathcal E_0)=(1-\gamma)^n.
\]
On \(\mathcal E_0\), the observed data are identical under the two models. For any estimator \(\widehat{\bbm\theta}\), at least one of the two inequalities
\[
\|\widehat{\bbm\theta}-\bbm\theta_+\|_2
\ge
\frac{1}{2}\|\bbm\theta_+-\bbm\theta_-\|_2,
\qquad
\|\widehat{\bbm\theta}-\bbm\theta_-\|_2
\ge
\frac{1}{2}\|\bbm\theta_+-\bbm\theta_-\|_2
\]
must hold on this event. Therefore,
\[
\sup_{P\in\{P_+,P_-\}}
\mathbb{P}_{P}\left[
\|\widehat{\bbm\theta}-\bbm\theta^*\|_2
\ge
c_0\gamma\frac{\alpha}{\kappa_u}\sqrt{s^*}
\right]
\ge
(1-\gamma)^n.
\]

Set \(\gamma=A_0\log p/(2n)\). By the assumption \(A_0\log p\le cn\), with \(c\) sufficiently small, we have \(0<\gamma<1/2\). Moreover, \((1-\gamma)^n\ge \exp(-2\gamma n)=p^{-A_0}\). Finally,
\[
c_0\gamma
=
\left(\frac{M_{\epsilon,1+\delta}}{2\gamma}\right)^{1/(1+\delta)}\gamma
=
2^{-1/(1+\delta)}
M_{\epsilon,1+\delta}^{1/(1+\delta)}
\gamma^{\delta/(1+\delta)}
\asymp
M_{\epsilon,1+\delta}^{1/(1+\delta)}
\left(\frac{A_0\log p}{n}\right)^{\delta/(1+\delta)}.
\]
Combining this display with \eqref{eq:fixed_design_param_sep} proves \eqref{eq:fixed_design_local_refinement_lower}.
\end{proof}

\subsection{Proof of the Sparse Operator-Norm Lower Bound}
\label{append:proof_sparse_operator_lower}

We prove Theorem~\ref{thm:sparse_operator_gradient_lower}. The proof reduces uniform sparse gradient stability to sparse covariance operator estimation. The construction below is a sparse version of the dense heavy-tailed operator-norm construction: all non-common perturbations are restricted to a fixed support \(S\) with \(|S|=s^*\), so the ambient dimension in the lower-bound construction is \(s^*\).

\begin{lemma}[Sparse heavy-tailed covariance family]
\label{lem:sparse_covariance_operator_lower}
Let \(S\subset[p]\) with \(|S|=s^*\), and let \(0<\lambda\le1\). Suppose \(s^*\le cn\) for a sufficiently small constant \(c>0\). There exists a collection of distributions \(\{P_1,\ldots,P_N\}\) for a random vector \(\bm x\in\mathbb R^p\), with \(N\ge \exp(c_1s^*)\), such that the following properties hold. Each \(P_j\) has mean zero, covariance matrix \(\bm\Sigma_j\), sparse eigenvalues bounded above and below on all sparse directions of the relevant size, and
\[
\sup_{\|\bm v\|_2=1,\,\|\bm v\|_0\le s^*}
\mathbb{E}_{P_j}\left[|\bm x^\top\bm v|^{2+2\lambda}\right]
\le C M_x.
\]
Moreover, for all \(j\ne k\),
\[
\|\bm\Sigma_j-\bm\Sigma_k\|_\textup{op}
\ge
c_2 M_x^{1/(1+\lambda)}
\left(\frac{s^*}{n}\right)^{\lambda/(1+\lambda)},
\]
and the average KL divergence satisfies
\[
\frac{1}{N^2}\sum_{j,k=1}^{N}D_{\mathrm{KL}}(P_j^{\otimes n},P_k^{\otimes n})
\le
c_3\log N,
\]
where \(c_3>0\) is sufficiently small.
\end{lemma}

\begin{proof}
Let \(d=s^*\) and identify the support \(S\) with \(\{1,\ldots,d\}\). Let \(\mathcal V=\{\bm v_1,\ldots,\bm v_N\}\subset\{d^{-1/2},-d^{-1/2}\}^d\) be a packing of sign vectors with \(N\ge\exp(c d)\) and \(|\langle\bm v_j,\bm v_k\rangle|\le 1/2\) for \(j\ne k\). Such a packing follows from the Varshamov--Gilbert lemma.

We first construct a rare-spike distribution on the coordinates in \(S\). Let \(\mathsf U\) denote the uniform distribution on \(\{-1,1\}^d\). Fix a sufficiently small constant \(\tau>0\). For each \(\bm v_j\), define a tilted sign distribution \(\mathsf R_j\) by
\[
\frac{d\mathsf R_j}{d\mathsf U}(\bm b)
=
\frac{\exp\{\tau(\bm b^\top\bm v_j)^2\}}{Z(\tau,\bm v_j)},
\qquad \bm b\in\{-1,1\}^d,
\]
where \(Z(\tau,\bm v_j)\) is the normalizing constant. Since \(\bm v_j\in\{d^{-1/2},-d^{-1/2}\}^d\), the law of \(\bm b^\top\bm v_j\) under \(\mathsf U\) does not depend on \(j\). Therefore \(Z(\tau,\bm v_j)\) is the same for all \(j\). For fixed small \(\tau\), standard moment bounds for Rademacher sums imply that
\[
D_{\mathrm{KL}}(\mathsf R_j,\mathsf R_k)\le C_\tau
\]
for all \(j,k\), and there exists \(c_\tau>0\) such that
\[
\left\|\mathbb{E}_{\mathsf R_j}[\bm b\bm b^\top]
-
\mathbb{E}_{\mathsf R_k}[\bm b\bm b^\top]\right\|_\textup{op}
\ge c_\tau
\]
whenever \(|\langle\bm v_j,\bm v_k\rangle|\le 1/2\). This follows by expanding the tilted second moment along the directions \(\bm v_j\) and \(\bm v_k\); the tilt increases the second moment in the direction \(\bm v_j\) by a constant amount while the packing condition keeps the corresponding rank-one directions separated. The constants depend only on \(\tau\).

Let \(\gamma=c_0 d/n\) with \(c_0>0\) sufficiently small, and let
\[
R=\left(\frac{M_x}{\gamma}\right)^{1/(2+2\lambda)}.
\]
Define a sparse rare-spike vector \(\bm y_j\in\mathbb R^p\) as follows. With probability \(1-\gamma\), set \(\bm y_j=0\). With probability \(\gamma\), draw \(\bm b\sim\mathsf R_j\) and set \((\bm y_j)_S=R\bm b\) and \((\bm y_j)_{S^c}=0\). Then \(\mathbb{E}[\bm y_j]=0\), and
\[
\operatorname{Cov}(\bm y_j)
=
\gamma R^2\,\mathbb{E}_{\mathsf R_j}[\bm b\bm b^\top]
\quad\text{on }S.
\]
For \(j\ne k\),
\[
\|\operatorname{Cov}(\bm y_j)-\operatorname{Cov}(\bm y_k)\|_\textup{op}
\ge
c_\tau\gamma R^2
=
c_\tau M_x^{1/(1+\lambda)}
\gamma^{\lambda/(1+\lambda)}.
\]
Since \(\gamma\asymp d/n\), this lower bound has the desired order.

We next control the KL divergence. The distributions of \(\bm y_j\) and \(\bm y_k\) share the atom at zero with mass \(1-\gamma\). Conditional on the rare-spike event, their sign distributions are \(\mathsf R_j\) and \(\mathsf R_k\). Thus
\[
D_{\mathrm{KL}}(\mathcal L(\bm y_j),\mathcal L(\bm y_k))
=
\gamma D_{\mathrm{KL}}(\mathsf R_j,\mathsf R_k)
\le
C_\tau\gamma.
\]
Consequently,
\[
D_{\mathrm{KL}}(\mathcal L(\bm y_j)^{\otimes n},\mathcal L(\bm y_k)^{\otimes n})
\le
C_\tau n\gamma
\le
C d.
\]
Choosing \(c_0\) sufficiently small makes the average KL divergence bounded by a small constant times \(\log N\), since \(\log N\asymp d\).

To ensure sparse eigenvalue lower bounds uniformly, add an independent common component. Let \(\bm z\sim N(0,I_p)\), independent of \(\bm y_j\), and define \(\bm x_j=\bm z+\bm y_j\). Let \(P_j\) denote the distribution of \(\bm x_j\). Since adding the same independent Gaussian component is a Markov kernel, the data-processing inequality gives
\[
D_{\mathrm{KL}}(P_j^{\otimes n},P_k^{\otimes n})
\le
D_{\mathrm{KL}}(\mathcal L(\bm y_j)^{\otimes n},\mathcal L(\bm y_k)^{\otimes n}).
\]
Moreover,
\[
\bm\Sigma_j=\operatorname{Cov}(\bm x_j)=I_p+\operatorname{Cov}(\bm y_j),
\]
so all sparse eigenvalues are bounded below by \(1\), and bounded above by a constant depending only on \(M_x\) and the choice of \(c_0\). The covariance separations are unchanged because the Gaussian component is common:
\[
\bm\Sigma_j-\bm\Sigma_k=\operatorname{Cov}(\bm y_j)-\operatorname{Cov}(\bm y_k).
\]

It remains to verify the moment condition. For any \(s^*\)-sparse unit vector \(\bm v\),
\[
|\bm x_j^\top\bm v|^{2+2\lambda}
\le
C_\lambda\left(|\bm z^\top\bm v|^{2+2\lambda}+|\bm y_j^\top\bm v|^{2+2\lambda}\right).
\]
The Gaussian term is bounded by a constant depending only on \(\lambda\). For the rare-spike term, Khintchine's inequality gives \(\mathbb{E}_{\mathsf R_j}[|\bm b^\top\bm v_S|^{2+2\lambda}]\le C_\lambda\) uniformly in \(\bm v\). Hence
\[
\mathbb{E}[|\bm y_j^\top\bm v|^{2+2\lambda}]
=
\gamma R^{2+2\lambda}
\mathbb{E}_{\mathsf R_j}[|\bm b^\top\bm v_S|^{2+2\lambda}]
\le
C_\lambda \gamma R^{2+2\lambda}
=
C_\lambda M_x.
\]
Assuming \(M_x\ge1\), the total moment is bounded by \(CM_x\). This proves all asserted properties.
\end{proof}

\begin{proof}[Proof of Theorem~\ref{thm:sparse_operator_gradient_lower}]
We work in the noiseless linear model with \(\epsilon=0\) and \(\bbm\theta^*=0\). Under a design distribution \(P_j\), the population score for a direction \(\Delta\) is
\[
G_j(\Delta)=\bm\Sigma_j\Delta.
\]
Thus estimating the gradient map \(\Delta\mapsto G_j(\Delta)\) uniformly over unit vectors supported on \(S\) is at least as hard as estimating the covariance operator \(\bm\Sigma_j\) on the subspace indexed by \(S\).

Let \(\widehat G\) be any estimator of the gradient map based on \(n\) observations. Define
\[
L_j(\widehat G)
=
\sup_{\Delta:\ \|\Delta\|_2=1,\ \operatorname{supp}(\Delta)\subseteq S}
\|\widehat G(\Delta)-\bm\Sigma_j\Delta\|_2.
\]
For \(j\ne k\),
\[
\sup_{\|\Delta\|_2=1,\operatorname{supp}(\Delta)\subseteq S}
\|(\bm\Sigma_j-\bm\Sigma_k)\Delta\|_2
=
\|(\bm\Sigma_j-\bm\Sigma_k)_{S,S}\|_\textup{op}.
\]
By Lemma~\ref{lem:sparse_covariance_operator_lower}, these pairwise distances are at least \(2r_n\), where
\[
r_n
=
c M_x^{1/(1+\lambda)}
\left(\frac{s^*}{n}\right)^{\lambda/(1+\lambda)}.
\]
If \(L_j(\widehat G)<r_n\), then \(\widehat G\) is closer to \(G_j\) than to any other \(G_k\) in the packing. Hence any estimator with loss below \(r_n\) can be converted into a test for the index \(j\). Fano's inequality and the KL bound in Lemma~\ref{lem:sparse_covariance_operator_lower} imply that
\[
\inf_{\widehat G}
\sup_{j=1,\ldots,N}
\mathbb{P}_{P_j}\left[L_j(\widehat G)\ge r_n\right]
\ge c_0
\]
for a universal constant \(c_0>0\). Consequently,
\[
\inf_{\widehat G}
\sup_{j=1,\ldots,N}
\mathbb{E}_{P_j}[L_j(\widehat G)]
\ge
c_0 r_n.
\]
Since the distributions \(P_j\) belong to the class \(\mathcal Q_\lambda(S)\), the desired lower bound follows.
\end{proof}

\subsection{Logistic Lower-Bound Result and Proof}
\label{append:logistic_lower_bound_result}

For completeness, we also record a standard sparse logistic lower bound. This result is not central to the two-regime decoupling phenomenon, but it shows that the square-root local refinement rate for logistic regression cannot be improved under the usual sparse high-dimensional scaling.

\begin{theorem}[Lower bound for sparse logistic regression]
\label{thm:logistic_lower_bound}
Consider the fixed-design logistic model
\[
\mathbb{P}(y_i=1)=\sigma(\bm x_i^\top\bbm\theta^*),\qquad i=1,\ldots,n,
\]
with \(\|\bbm\theta^*\|_0\le s^*\). Suppose the design satisfies a sparse fixed-design upper bound
\[
\frac{1}{n}\|\bm X(\bbm\theta-\bbm\theta')\|_2^2
\le
\kappa_u\|\bbm\theta-\bbm\theta'\|_2^2
\]
for all \(s^*\)-sparse \(\bbm\theta,\bbm\theta'\) in a sufficiently small neighborhood of the origin. Then there exist constants \(c_1,c_2>0\), depending only on \(\kappa_u\), such that
\[
\inf_{\widehat{\bbm\theta}}
\sup_{\|\bbm\theta^*\|_0\le s^*}
\mathbb{P}_{\bbm\theta^*}\left[
\|\widehat{\bbm\theta}-\bbm\theta^*\|_2
\ge
c_1\sqrt{\frac{s^*\log(ep/s^*)}{n}}
\right]
\ge c_2.
\]
\end{theorem}

\begin{proof}[Proof of Theorem~\ref{thm:logistic_lower_bound}]
The proof is a standard Fano argument. Let \(\mathcal V\subset\{0,1\}^p\)
be a packing of \(s^*\)-sparse vectors obtained from the
Varshamov--Gilbert bound, so that
\[
|\mathcal V|\ge \exp\{c s^*\log(ep/s^*)\},
\qquad
\|v-u\|_0\ge c s^* \quad (u\ne v).
\]
For a small amplitude \(a>0\), define \(\bbm\theta_v=a v\). Then the pairwise
separation satisfies
\[
\|\bbm\theta_v-\bbm\theta_u\|_2\ge c\sqrt{s^*}\,a.
\]

For the logistic model, the KL divergence between two parameter values is bounded by a constant multiple of the squared prediction distance:
\[
D_{\mathrm{KL}}(P_{\bbm\theta_v},P_{\bbm\theta_u})
\le
C\|\bm X(\bbm\theta_v-\bbm\theta_u)\|_2^2
\le
Cn\kappa_u\|\bbm\theta_v-\bbm\theta_u\|_2^2
\le
Cn\kappa_u s^* a^2.
\]
Choose \(a=c\sqrt{\log(ep/s^*)/n}\) with \(c>0\) sufficiently small. Then the pairwise KL divergence is bounded by a small constant times \(\log|\mathcal V|\). Fano's inequality implies that any estimator makes a testing error with probability bounded away from zero, and therefore
\[
\inf_{\widehat{\bbm\theta}}
\sup_{v\in\mathcal V}
\mathbb{P}_{\bbm\theta_v}\left[
\|\widehat{\bbm\theta}-\bbm\theta_v\|_2
\ge
c\sqrt{s^*}\,a
\right]\ge c',
\]
which gives the claimed result.
\end{proof}


\section{Implementation Details and Additional Simulation Results}
\label{append:G}

\subsection{Tuning Protocol for the Linear Simulation Comparison}
\label{app:tuning-linear-comparison}
This section describes the tuning protocol used in the linear simulation comparing TS-RIGHT with IHT, Lasso, adaptive Huber regression, and a shrinkage/truncation
baseline. 
\paragraph{Simulation configuration.}
The comparison is conducted under sparse linear regression with heavy-tailed
covariates and heavy-tailed noise.  The ambient dimension is fixed at $p=800$,
and the sample size varies over $n\in\{800,1200,1600,2000,2400\}$.
The true coefficient vector \(\theta^*\in\mathbb R^p\) has sparsity $s^*=5$, and the algorithmic sparsity level is set to $s=10$. The nonzero entries of \(\theta^*\) have magnitude \(5\).  The covariates are
generated from a heavy-tailed \(t\)-design with degrees of freedom \(\nu_{\bm x}=2.5\),
and scale \(1\), and the noise is generated from a \(t\)-distribution with
degrees of freedom $\nu_\epsilon=1.5$. The number of Monte Carlo repetitions for the final evaluation is \(100\).

\paragraph{Validation metric.}
All validation-based tuning steps use the same robust prediction criterion.
For an estimator \(\widehat{\bbm\theta}\) and a validation set \(V\), the validation
loss is
\[
\mathcal L_{\mathrm{val}}(\widehat{\bbm\theta};V)
=
\operatorname{median}_{i\in V}
\left|y_i-\bm x_i^\top\widehat{\bbm\theta}\right|.
\]
This criterion is used instead of the ordinary validation mean squared error
because the response is heavy-tailed.  It is therefore less sensitive to a small
number of extreme validation observations.

\paragraph{Pilot tuning and freezing rule.}
For TS-RIGHT and IHT, tuning is performed separately for each sample size \(n\).
For each candidate hyperparameter value \(h\) in a prespecified grid, we generate
\(B_{\mathrm{pilot}}=15\) independent pilot data sets.  Each pilot data set is
split into a training subset and a validation subset, with validation fraction
\(0.2\).  The method is fitted on the pilot training subset and evaluated on the
pilot validation subset using \(\mathcal L_{\mathrm{val}}\).

Let
\[
M(h)
=
\operatorname{median}_{b=1,\ldots,B_{\mathrm{pilot}}}
\mathcal L_{\mathrm{val}}^{(b)}(h)
\]
denote the median validation loss of candidate \(h\) over the pilot repetitions.
Let $M_{\min}=\min_{h\in\mathcal H}M(h)$ be the best pilot median validation loss over the candidate grid \(\mathcal H\).
We first form the near-best set
\[
\mathcal A
=
\left\{
h\in\mathcal H:
M(h)\le 1.05\,M_{\min}
\right\}.
\]
The selected candidate is then chosen from \(\mathcal A\) using the method-specific
simplicity rule described below.  The selected tuning parameters are frozen before
the final Monte Carlo evaluation.

\paragraph{TS-RIGHT tuning.}
For TS-RIGHT, the split fraction and stage lengths are selected over the grid $q\in\{0.50\}$, $T_1\in\{150,200\}$, $T_2\in\{8\}$.
Here \(q=n_2/n\) is the Stage II refinement fraction.  Given \(q,T_1,T_2\), the
sample split is $n_2=\lfloor qn\rfloor$ and $n_1=n-n_2$.
The working sparsity proxy used in the block-number rule is set to
\(s_{\mathrm{ref}}=s^*\) in this simulation.  The block numbers are
\[
K_1
=
\min\left\{
\left\lceil s_{\mathrm{ref}}\log p\right\rceil,
\left\lfloor\frac{n_1}{m_{\min}}\right\rfloor
\right\},
\]
and
\[
b_2=\left\lfloor\frac{n_2}{T_2}\right\rfloor,
\qquad
K_2
=
\min\left\{
\left\lceil\log p\right\rceil,
\left\lfloor\frac{b_2}{m_{\min}}\right\rfloor
\right\},
\]
with $m_{\min}=10$. The step size for TS-RIGHT is fixed at $\eta_{\mathrm{TS}}=0.02$.

\paragraph{IHT tuning.}
The IHT baseline uses the same algorithmic sparsity level \(s=10\) as TS-RIGHT.
The step size and number of iterations are selected from the grid
\[
\eta\in\{0.005,0.01,0.02\},
\qquad
T\in\{200,400,800\}.
\]
The same pilot validation procedure and the same validation loss
\(\mathcal L_{\mathrm{val}}\) are used.  Among IHT candidates in the near-best
set \(\mathcal A\), we choose the candidate with the smallest number of
iterations \(T\).  If there is still a tie, we choose the step size closest to
\(\eta=0.01\) on the logarithmic scale.  The selected \((\eta,T)\) pair is
frozen before the final Monte Carlo evaluation for the corresponding \(n\).

\paragraph{Lasso tuning.}
The Lasso baseline is tuned by \(K\)-fold cross-validation using
\texttt{cv.glmnet}, with \(K=5\) folds.  The design matrix is not internally
standardized and no intercept is included, matching the centered simulation
design.  The selected regularization parameter is the cross-validation minimizer,
\[
\widehat\lambda_{\mathrm{Lasso}}
=
\lambda_{\min}.
\]
The Lasso estimator is then refit on the full data set for that Monte Carlo
replication using \(\widehat\lambda_{\mathrm{Lasso}}\).

\paragraph{Adaptive Huber tuning.}
The adaptive Huber baseline uses a Lasso initializer to construct residual-based
Huber thresholds.  For each Monte Carlo replication, the data are split into a
training subset and a validation subset with validation fraction \(0.2\).  A
Lasso fit on the training subset gives initial residuals $\widehat r_i=y_i-\bm x_i^\top\widehat{\bbm\theta}_{\mathrm{init}}$.
For $\tau_q\in\{0.90,0.95,0.98\}$,
the Huber threshold is set to the empirical \(\tau_q\)-quantile of
\(|\widehat r_i|\) on the training subset.  The regularization parameter is
searched over an 8-point logarithmic grid from \(10^{-3}\) to \(1\).

For each candidate \((\tau_q,\lambda)\), the adaptive Huber estimator is fitted
on the training subset and evaluated on the validation subset using
\(\mathcal L_{\mathrm{val}}\).  The near-best rule is again applied with
multiplicative tolerance \(1.05\).  Among near-best candidates, we select the
largest \(\lambda\), favoring the simpler regularized model, and then choose the
\(\tau_q\) closest to \(0.95\) if needed.  The selected \(\lambda\) and residual
quantile \(\tau_q\) are then used to refit the adaptive Huber estimator on the
full data set for that Monte Carlo replication.  The Huber threshold used in
the final fit is recomputed from the full-data residuals at the selected
quantile \(\tau_q\).

\paragraph{Shrinkage/truncation tuning.}
The shrinkage baseline uses coordinatewise truncation of covariates and responses
followed by an \(\ell_1\)-regularized fit.  For each Monte Carlo replication, the
data are split into training and validation subsets with validation fraction
\(0.2\).  For each $\tau_q\in\{0.90,0.95,0.98\}$, the covariate truncation threshold \(\tau_X\) and response truncation threshold
\(\tau_Y\) are set to the corresponding empirical quantiles of \(|X_{ij}|\) and
\(|y_i|\) on the training subset.  After truncation, the regularization
parameter \(\lambda\) is selected by 5-fold cross-validation on the truncated
training data.

Each candidate \(\tau_q\) is evaluated on the validation subset using
\(\mathcal L_{\mathrm{val}}\).  The near-best rule with tolerance \(1.05\) is
applied, and the selected truncation quantile is the near-best candidate closest
to \(0.95\), with validation loss used as the final tie-breaker.  The selected
quantile is then used to recompute \(\tau_X\) and \(\tau_Y\) on the full data
set for that Monte Carlo replication, and the shrinkage estimator is refit using
the selected regularization parameter.

\paragraph{Final evaluation.}
After tuning, the final Monte Carlo evaluation is conducted on independent data
sets generated with seeds disjoint from the pilot tuning seeds.  For each sample
size \(n\), TS-RIGHT and IHT use the frozen hyperparameters selected from their
pilot tuning registries.  Lasso, adaptive Huber, and shrinkage use their
within-replication cross-validation or validation procedures described above.
The reported primary accuracy metric is the coefficient estimation error $\|\widehat{\bbm\theta}-\bbm\theta^*\|_2$. Across the \(100\) final repetitions, we report the median \(\ell_2\) error with the 0.25 and 0.75 quantiles. The median \(\ell_2\) error is used as the main summary statistic
because the heavy-tailed design and noise can produce occasional extreme errors.

\subsection{Baseline Comparisons for Logistic Regression}
\label{append:logistic_comparison}
We benchmark TS-RIGHT against leading robust and sparse regression methods: Lasso \citep{tibshirani1996regression}, the shrinkage method \citep{fan2021shrinkage}, and standard IHT \citep{jain2014iterative} applied to logistic regression.

We generate data with $p=600$, sparsity $s^*=5$, and varying $n$. The true coefficient vector is $\theta^*=(1,-1,1,-1,1,0,\dots,0)$. To create a challenging heavy-tailed design, the rows of the feature matrix $\mathbf{X} \in \mathbb{R}^{n \times p}$ are generated from a multivariate t-distribution with $\nu_x = 2.05$ degrees of freedom and scaled by a factor of $4$. The binary response $y_i$ is drawn from a Bernoulli distribution with success probability $p_i = (1 + \exp(-\bm x_i^\top \bbm\theta^*))^{-1}$. Each simulation setting is repeated for $100$ independent trials, and we report the 0.25, 0.5, and 0.75 quantiles of the $\ell_2$ estimation error, $\|\widehat{\bbm\theta} - \bbm\theta^*\|_2$.

Figure \ref{fig:logistic_comparison} summarizes the results. The Shrinkage method performs only slightly better than Lasso, since its robustness is limited when features lack finite fourth moments. Unlike in linear regression, IHT does not diverge in logistic regression due to the boundedness of the logistic loss gradient; nevertheless, it still uniformly underperforms TS-RIGHT across all sample sizes. In contrast to linear regression, TS-RIGHT exhibits a wider empirical interquartile range than IHT. This behavior is expected under the heavy-tailed covariate design. Unlike the squared loss, whose population Hessian is a constant $\bbm\Sigma_{\bm x}$, the logistic Hessian is weighted by $\sigma(\bm x_i^\top \bbm\theta)(1-\sigma(\bm x_i^\top \bbm\theta))$, which becomes vanishingly small in saturated regions. Consequently, the effective curvature is only determined by a relatively small and random subset of near-boundary observations. Thus, the MoM estimator in TS-RIGHT effectively reduces the bias while introducing more variance. The baseline methods employ full-sample objectives or gradients and appear more stable in this experiment, but their errors are substantially larger; their narrow intervals primarily reflect stable bias rather than improved accuracy.
 \begin{figure}[htbp]
  \centering
     \includegraphics[width=0.85\textwidth]{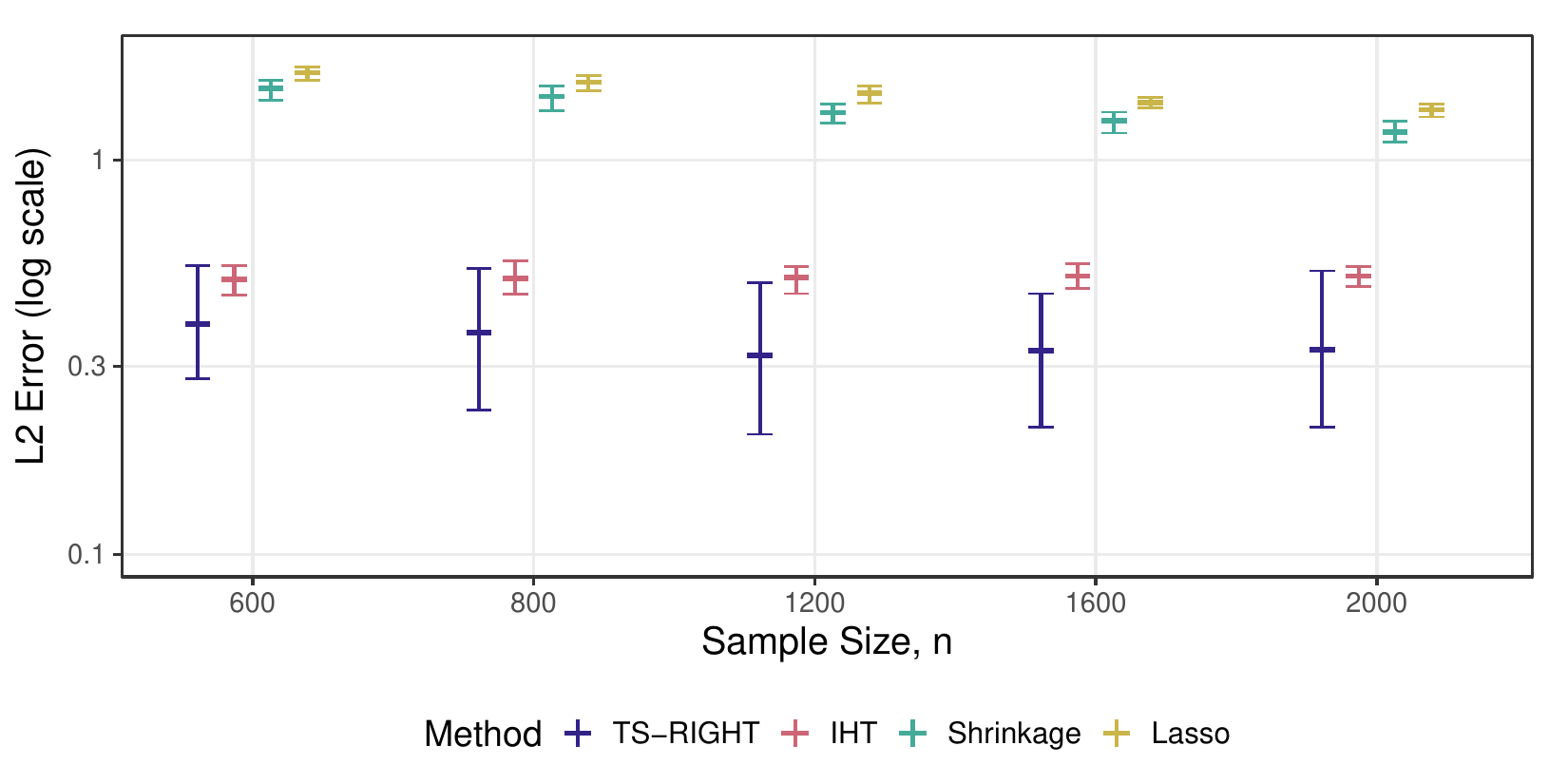}
    \caption{Logistic regression performance comparison.}
     \label{fig:logistic_comparison}
 \end{figure}

\subsection{Advantage of the Delayed-Splitting Strategy}
\label{append:delayed_splitting_advantage}
For simplicity, we refer to three variants of the robust-gradient algorithm introduced in Section \ref{subsec:path_dependence_delayed_splitting} as follows: our TS-RIGHT algorithm, the no-splitting version (NS-RIGHT), and the equal-splitting version (ES-RIGHT).

We demonstrate the advantage of the TS-RIGHT algorithm over NS-RIGHT and ES-RIGHT in linear regression through two simulations. First, we compare the two-stage errors of TS-RIGHT against NS-RIGHT. Second, we evaluate the final estimation errors of all three methods as the true sparsity $s^*$ increases. NS-RIGHT and TS-RIGHT are said to have a matched computational budget if they have the same epoch number, i.e., $T_{NS}\times n=T_1\times n_1+T_2\times n_2$.

In the first simulation, we set $n=p=800$ and $s=2s^*=10$. We generate data with heavy-tailed design ($df_X=2.5$, $\Sigma_X = I_p$) and heavy-tailed noise ($df_\epsilon=1.5$). In the second simulation, we set $n=p=1800$ and vary $s^*$ from $5$ to $30$, using the same heavy-tailed settings. All methods run for 100 iterations, and we report the median of final estimation errors.

\begin{figure}[bhtp]
  \centering
     \includegraphics[width=0.85\textwidth]{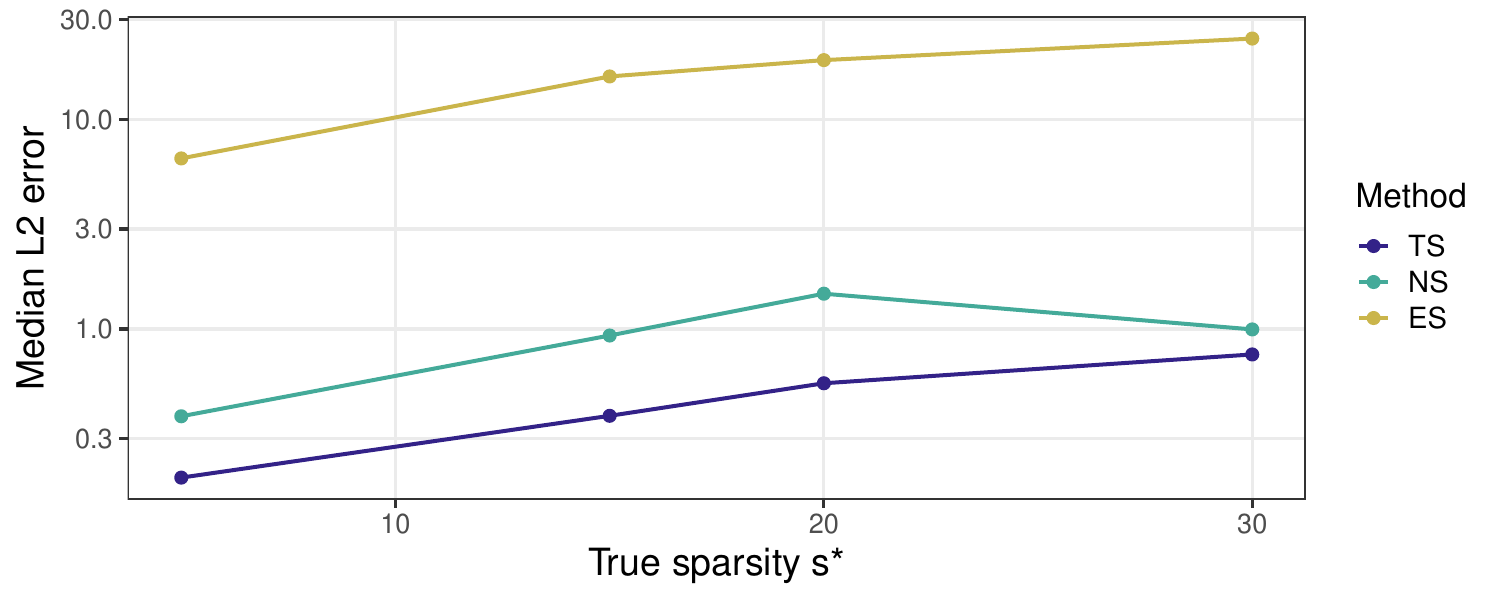}
    \caption{Comparison of TS-RIGHT, NS-RIGHT, and ES-RIGHT.}
     \label{fig:fs_ts_fd_sweep}
 \end{figure}

\begin{table}
\center
\caption{Median $l_2$ estimation errors of TS and NS under a matched computational budget.}
\label{tab:ts_fd_comparison}
\small
\setlength{\tabcolsep}{3pt}
\renewcommand{\arraystretch}{1.2}
\begin{tabular}{lccc}
\toprule
Method & Stage 1 endpoint &  Stage 2 endpoint & NS (Matched Budget) \\
\midrule
TS ($q=0.50$, $(T_1,T_2)=(150,8)$) & $0.428$ & \shortstack[c]{$\mathbf{0.330}$\\{\scriptsize $\downarrow 24.1\%$ from Stage 1}} & $0.752$ \\
TS ($q=0.50$, $(T_1,T_2)=(175,8)$) & $0.390$ & \shortstack[c]{$\mathbf{0.315}$\\{\scriptsize $\downarrow 23.7\%$ from Stage 1}} & $0.522$ \\
TS ($q=0.50$, $(T_1,T_2)=(200,8)$) & $0.403$ & \shortstack[c]{$\mathbf{0.305}$\\{\scriptsize $\downarrow 24.6\%$ from Stage 1}} & $0.410$ \\
\bottomrule
\end{tabular}
\end{table}
Table \ref{tab:ts_fd_comparison} demonstrates that for TS-RIGHT, Stage II errors are substantially lower than Stage I errors, indicating that the second stage provides meaningful refinement despite the limited number of iterations. Moreover, TS-RIGHT consistently outperforms NS-RIGHT across all settings, confirming the sharper statistical rate of TS-RIGHT. 
In the second simulation (Figure \ref{fig:fs_ts_fd_sweep}), ES-RIGHT exhibits the largest errors due to poor sample efficiency, while NS-RIGHT's errors remain uniformly higher than TS-RIGHT. These results underscore the advantage of our two-stage strategy for balancing computational efficiency and statistical accuracy in the presence of heavy-tailed data.


{\setstretch{0.9}

\bibliography{bibliography.bib}

}

\end{document}